\newcommand\eps{\varepsilon}
\newtheorem{theorem}{Theorem}[section]
\newtheorem{lemma}[theorem]{Lemma}
\newtheorem{meta-theorem}[theorem]{Meta-Theorem}
\newtheorem{claim}[theorem]{Claim}
\newtheorem{remark}[theorem]{Remark}
\newtheorem{corollary}[theorem]{Corollary}
\newtheorem{definition}[theorem]{Definition}
\newtheorem*{t:node_compression}{Theorem~\ref{t:node_compression}}
\newtheorem*{t:edge_compression}{Theorem~\ref{t:edge_compression}}
\newtheorem*{l:mc_sequential}{Lemma~\ref{l:mc_sequential}}
\newtheorem*{l:mc_sequential2}{Lemma~\ref{l:mc_sequential2}}
\newtheorem*{l:mc_pram}{Lemma~\ref{l:mc_pram}}
\newtheorem*{l:mc_mpc}{Lemma~\ref{l:mc_mpc}}
\newtheorem*{l:mc_congest}{Lemma~\ref{l:mc_congest}}
\definecolor{darkgreen}{rgb}{0,0.5,0}
\definecolor{darkred}{rgb}{0.9,0,0}
\definecolor{grayish}{rgb}{0.8,0.8,0.8}
\crefname{theorem}{Theorem}{Theorems}
\Crefname{lemma}{Lemma}{Lemmas}
\Crefname{claim}{Claim}{Claims}
\algnewcommand\algorithmicswitch{\textbf{switch}}
\algnewcommand\algorithmiccase{\textbf{case}}
\newcommand{\bigO}{\mathcal{O}}
\renewcommand{\paragraph}[1]{\vspace{0.15cm}\noindent {\bf #1}:}
\mathchardef\mhyphen="2D
\newcommand{\up}[1]{^{^{\vbox{\hbox{$\scriptstyle#1$}\nointerlineskip\hbox{}}}}}
\newcommand{\MC}{$\mathsf{Minimum}\ \mathsf{Cut}$\xspace}
\newcommand{\MPC}{$\mathsf{MPC}$\xspace}
\newcommand{\PRAM}{$\mathsf{PRAM}$\xspace}
\newcommand{\CPRAM}{$\mathsf{CREW}$\xspace$\mathsf{PRAM}$\xspace}
\newcommand{\congest}{$\mathsf{CONGEST}$\xspace}
\newcommand{\poly}{\operatorname{\text{{\rm poly}}}}
\newcommand{\expval}[1]{E\left[#1\right]}
\newcommand{\Prob}[1]{P\left(#1\right)}
\newcommand{\whp}{with high probability}
\newcommand{\set}[1]{\left\{#1\right\}}
\newcommand{\paren}[1]{\mathopen{}\left(#1\right)\mathclose{}}
\newcommand{\floor}[1]{\left\lfloor #1 \right\rfloor\mathclose{}}
\newcommand{\card}[1]{\left|#1\right|}
\renewcommand{\paragraph}[1]{\vspace{0.15cm}\noindent {\bf #1}:}
\newcommand{\FullOrShort}{full}
	\newcommand{\fullOnly}[1]{#1}
  \newcommand{\shortOnly}[1]{}
    \newcommand{\fullOnly}[1]{}
    \newcommand{\IncludePictures}[1]{}
\begin{document}

\date{}

\title{Faster Algorithms for Edge Connectivity \\ via Random $2$-Out Contractions}

\author{\emph{Mohsen Ghaffari}\thanks{Mohsen Ghaffari's research is supported by the Swiss National Foundation, under project
number 200021\_184735.}\\
\small ETH Zurich\\
\small \texttt{ghaffari@inf.ethz.ch}
\and
\emph{Krzysztof Nowicki}\thanks{Krzysztof Nowicki's research is supported by the National Science Centre, Poland grant 2017/25/B/ST6/02010.}\\
\small Univ. of Wroclaw\\
\small \texttt{knowicki@cs.uni.wroc.pl}
\and
\emph{Mikkel Thorup}\thanks{
Mikkel Thorup's research is supported by 
Investigator Grant 16582, Basic Algorithms Research Copenhagen (BARC), from the VILLUM Foundation.}\\
  \small BARC, Univ. of Copenhagen\\
  \small \texttt{mikkel2thorup@gmail.com}}

\setcounter{page}{0}
\thispagestyle{empty}

\maketitle
\begin{abstract}
We provide a simple new randomized contraction approach to the global minimum cut problem for simple undirected graphs. The contractions exploit 2-out edge sampling from each vertex rather than the standard uniform edge sampling. We demonstrate the power of our new approach by obtaining better algorithms for sequential, distributed, and parallel models of computation. Our end results include the following randomized algorithms for computing edge connectivity, with high probability\footnote{We use the phrase \emph{with high probability} (whp) to indicate that a statement holds with probability $1-O(n^{-\gamma})$, for any desired constant $\gamma \geq 1$.}: 

\smallskip
\begin{itemize}
\item Two \emph{sequential} algorithms with complexities $O(m\log n)$ and $O(m+n\log^3 n)$.  These improve on a long line of developments including a celebrated $O(m\log^3 n)$ algorithm of Karger [STOC'96] and the state of the art $O(m \log^2 n (\log\log n)^2)$ algorithm of Henzinger et al. [SODA'17]. Moreover, our $O(m+n\log^3 n)$ algorithm is optimal when $m=\Omega(n\log^3 n)$.
\smallskip
\item An $\tilde{O}(n^{0.8} D^{0.2} + n^{0.9})$ round \emph{distributed} algorithm, where $D$ denotes the graph diameter. This improves substantially on a recent breakthrough of Daga et al.[STOC'19], which achieved a round complexity of $\tilde{O}(n^{1-1/353}D^{1/353} + n^{1-1/706})$, hence providing the first sublinear distributed algorithm for exactly computing the edge connectivity. 
\smallskip
\item The first $O(1)$ round algorithm for the \emph{massively parallel computation} setting with linear memory per machine.  
\end{itemize}
\end{abstract}

\bigskip
\setcounter{page}{0}
\thispagestyle{empty}
\newpage
\tableofcontents
\setcounter{page}{0}
\thispagestyle{empty}
\newpage

%
%\motodo{Erase this page after resolving all to-dos}
%\listoftodos
%
%\setcounter{page}{0}
%\thispagestyle{empty}
%\pagecolor{grayish}
%\newpage
%\pagecolor{white}

\section{Introduction and Related Work}
\vspace{-5pt}
Computing the minimum cut is one of the classic graph problems, with a range of applications --- e.g., in analyzing the failure robustness of a network or in identifying the communication bottlenecks --- and it has been studied extensively since the 1960s~\cite{Ford-Fulkerson_algo, ford1962flows, gomory1961multi}. Of particular interest to the present paper is the case of simple undirected graphs: here, the objective is to identify the smallest set of edges whose removal disconnects the graph. This is often called the \emph{edge connectivity} problem. Our main contribution is to propose a simple randomized contraction process that, when combined with some other ideas, leads to faster algorithms for the edge connectivity problem in a number of computational settings. Next, in \Cref{subsec:related}, we overview some of the previous algorithmic developments, in \Cref{subsec:results} we state our improvements for various computational settings, and in \Cref{subsec:novelty}, we provide a brief overview of this randomized contraction process.

\vspace{-5pt}
\subsection{Related Work}
%\vspace{-5pt}
\label{subsec:related} Here, we discuss three lines of developments from prior work that are most directly related to our work. Some other results that are relevant as a point of comparison for our algorithms will be mentioned later, when describing our particular results for different computational settings. Moreover, we refer to \cite[Section 1]{DBLP:conf/stoc/KawarabayashiT15} for a nice survey of other work on this problem. 

(I) In 1993, Karger presented his random contraction idea for minimum cut~\cite{Karger1993ContracitonAlgorithm}: contract randomly chosen edges, one by one, until only two nodes remain. The edges in between are the minimum cut with probability $\Omega(1/n^2)$. Thus, via $O(n^2 \log n)$ repetitions, we can identify the minimum cut, with high probability. Thanks to its extreme simplicity and elegance and a range of corollaries, this  has by now become a well-known result and a standard ingredient of many textbooks and classes on algorithms. Shortly after, Karger and Stein~\cite{DBLP:conf/stoc/KargerS93} presented a bootstrapped version of this contraction idea, which sets up a recursion that stops the contractions processes at some point,  and judiciously chooses how many times re-run each of them, which gives an algorithm for the minimum cut problem with time complexity $O(n^2 \log^3 n)$. 

(II) In 1996, Karger~\cite{DBLP:conf/stoc/Karger96} provided the first algorithm for minimum cut that has a near-linear complexity in the number of edges; it runs in $O(m\log^3 n)$ time. This algorithm uses a different approach: it performs a certain packing of spanning trees, a la Tutte and Nash-Williams~\cite{Tutte, Nash-Williams}, and then reads the cuts defined by removal of any two edges from a tree, and reports the minimum such cut. This near-linear time algorithm is randomized (and Monte Carlo) and the question of obtaining a deterministic near-linear time algorithm (or even a near-linear time Las Vegas randomized algorithm) remained open for a long time. 

(III) In 2015, Kawarabayashi and Thorup~\cite{DBLP:conf/stoc/KawarabayashiT15} gave the first such deterministic algorithm, which in $O(m\log^{12} n)$ time computes a minimum cut in simple graphs (i.e., solves the edge connectivity problem). Their key new idea was to exploit that in simple graphs, all non-singleton min-cuts are very sparse, and hence we can contract all edges that are not in sparse cuts. These contractions lead to a graph with $\tilde{O}(m/\delta)$ edges, where $\delta$ denotes the minimum degree, while preserving all non-singleton\footnote{A cut in which exactly one node is on one side and all the other nodes are on the other side is called a \emph{singleton} or a \emph{trivial} cut. It is trivial to read the size of all singleton cuts --- i.e., node degrees --- and identify their minimum. } minimum cuts. This sparser (multi-)graph is then solved using older and slower algorithms. To find the sparse cuts,  Kawarabayashi and Thorup~\cite{DBLP:conf/stoc/KawarabayashiT15} used a deterministic near-linear time diffusion-based algorithm, inspired by page rank [PBMW99]. Later, Henzinger, Rao, and Wang~\cite{DBLP:conf/soda/HenzingerRW17} provided a faster deterministic algorithm following a similar framework, but based on computing flows to find sparse cuts. Overall, this algorithm computes the minimum cut in $O(m\log^2 n (\log\log n)^2 )$ time. This improved on the $O(m\log^3 n)$ algorithm of Karger~\cite{DBLP:conf/stoc/Karger96} and is the state of the art time complexity for edge connectivity. We remark that Karger's algorithm~\cite{DBLP:conf/stoc/Karger96} works also for weighted graphs, while those of \cite{DBLP:conf/stoc/KawarabayashiT15}, \cite{DBLP:conf/soda/HenzingerRW17}, and ours are limited to simple unweighted graphs (i.e., the edge connectivity problem).

%\motodo{Add something to related work about Mikkel's paper on solving connectivity via $k$-out}

\subsection{Our Results} 
\label{subsec:results}
In the next four subsubsections, we overview the algorithmic improvements that we obtain for computing edge connectivity in various computational settings.  \Cref{tab:previous_results1,tab:previous_results2} summarize the previous state of the art as well as our algorithms. As a rough summary, we note that our improvement is more substantial quantitatively for settings of distributed and massively parallel computation --- which includes a \emph{polynomially improved distributed algorithm} and the \emph{first constant time massively parallel algorithm} with linear local memory as detailed in \Cref{tab:previous_results2}. In contrast, the sequential improvements --- which includes an \emph{optimal sequential algorithm} for graphs with at least $n \log^3 n$ edges, as detailed in \Cref{tab:previous_results1} --- is probably accessible and interesting for a broad range of readers. 
%\motodo{Should we also have a table for our results?}

%\motodo{for the table, we don't need to state all of the results, just the simplest form. for instance, the global memory in MPC can be written as $\tilde{O}(m)$. }

\subsubsection{Improvements in Sequential Algorithms}
\label{subsubsec:seq}
\vspace{-5pt}
\paragraph{Our Contribution} For the sequential setting, our main end-result are combinatorial algorithms that provide the following statement:
\begin{theorem} \label{t:mc_sequential}
Given a simple input graph $G$, with $n$ vertices and $m$ edges it is possible to find its minimum cut in $\min\{\bigO(m + n \log^3 n), \bigO(m\log n)\}$ time, \whp. Moreover, we can obtain a cactus representation of all the minimum cuts in $\min\{\bigO(m + n \log^{O(1)} n), \bigO(m\log n)\}$ time. 
\end{theorem}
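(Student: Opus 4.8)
The plan is to shrink the instance with random $2$-out contractions until it is small enough to finish with a classical min-cut algorithm, tuning the final step so as to reach both stated running times. First I would dispose of the trivial cuts: in $O(m)$ time (assume $G$ connected, so $m\ge n-1$) compute $\delta=\min_v\deg(v)$, the best singleton-cut value; the edge connectivity equals $\min\{\delta,\lambda^{\ast}\}$ where $\lambda^{\ast}$ is the minimum value of a non-singleton cut, and since $\lambda\le\delta$ always, it suffices to find a cut of value $\min\{\delta,\lambda^{\ast}\}$ under the promise $\lambda\le\delta$. The engine is a single random $2$-out contraction --- each vertex independently samples two of its incident edges and we contract the union of the sampled edges --- which runs in $O(m)$ time. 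The structural property of $2$-out sampling I would invoke is that, with high probability, the contracted multigraph $G'$ has only $\tilde O(n/\delta)$ vertices; that always $\lambda(G')\ge\lambda(G)$ (every cut of $G'$ pulls back to a cut of $G$ of the same value); and that with probability $\Omega(1)$ every non-singleton min cut of $G$ survives in $G'$, so $\lambda(G')=\lambda(G)$. Since $\lambda\le\delta$, applying a Nagamochi--Ibaraki sparse $(\delta{+}1)$-edge-connectivity certificate to $G'$ in $O(m)$ time yields a multigraph $H$ with $\tilde O(n/\delta)$ vertices and at most $(\delta{+}1)\,|V(G')|=\tilde O(n)$ edges, still with $\lambda(H)\ge\lambda(G)$ and with $\lambda(H)=\lambda(G)$ with probability $\Omega(1)$.

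To upgrade this constant success probability to high-probability correctness I run several independent trials of the $O(m)$-time reduction and output the smallest value of $\min\{\delta,\lambda(H)\}$ seen: every trial returns a value that is $\ge\lambda(G)$, and with high probability at least one trial preserves the min cut and returns exactly $\lambda(G)$, so the reported minimum --- and the corresponding cut, pulled back --- is correct. The two running times differ in how one pays for the amplification and finishes the small instances. For $\bigO(m\log n)$ I would take $\Theta(\log n)$ independent trials and finish each $H$ with whichever classical subroutine keeps the per-trial cost $O(m)$ for the relevant range of $\delta$ --- Karger--Stein costs $\tilde O(|V(H)|^{2})=\tilde O((n/\delta)^{2})$, Gabow's algorithm costs roughly $\tilde O(\lambda\,|E(H)|)=\tilde O(\lambda n)\le\tilde O(\delta n)=\tilde O(m)$, and for small $\delta$ Gabow directly on $G$ is already $O(m\log n)$ --- splitting into a handful of cases in $\delta$ (the contracted instance is genuinely small only when $\delta$ is large) so that the $\Theta(\log n)$ trials sum to $\bigO(m\log n)$. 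For $\bigO(m+n\log^{3}n)$ one cannot afford $\Theta(\log n)$ reductions each costing $\Omega(m)$; instead I perform the reduction to an $\tilde O(n)$-edge instance \emph{once}, but make it succeed with high probability by recursing the ``$2$-out-contract-then-sparsify'' step --- the vertex count shrinks geometrically, the $\Omega(m)$-cost level runs only $O(1)$ times, and the $O(\log n)$-fold amplification is carried out on the already cheap $\tilde O(n)$-sized instances --- and then run Karger's near-linear tree-packing min-cut algorithm on the final $\tilde O(n)$-edge instance in $\bigO(n\log^{3}n)$ time, for a total of $\bigO(m+n\log^{3}n)$.

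For the cactus representation of all min cuts, once $\lambda$ is known I would build a cactus by running Karger's cactus-from-tree-packing construction on a contracted (and sparsified) instance. The wrinkle is that a single $2$-out contraction preserves each fixed non-singleton min cut only with probability $\Omega(1)$, so no one contracted graph exhibits all of them at once; I would take $\Theta(\log n)$ contracted graphs so that every min cut of $G$ survives in at least one of them, compute a cactus for each, and stitch these partial cactuses into the cactus of $G$ using the laminar/crossing structure of the min-cut family --- a polylog number of extra near-linear passes, hence $\bigO(m+n\log^{O(1)}n)$, or $\bigO(m\log n)$ through the coarser route above.

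The main obstacle throughout is the interplay between success probability and cost. A $2$-out contraction is cheap, $O(m)$, but only constant-probability faithful to the min cut, so the high-probability guarantee for the near-linear-in-$m$ algorithm rests on showing that the reduction all the way down to an $\tilde O(n)$-edge instance can itself be made to succeed with high probability while $\Omega(m)$-cost work is performed only $O(1)$ times --- i.e., that the $2$-out-contract/sparsify recursion both shrinks the instance geometrically in its vertex count and lets the probability boosting happen on the shrunken, cheap instances. A secondary difficulty is the careful case analysis in $\delta$ needed to make the $\bigO(m\log n)$ bookkeeping go through, and, for the cactus, proving that the partial cactuses obtained from the several contracted graphs are mutually consistent and assemble into a single correct cactus for $G$.
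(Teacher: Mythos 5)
Your high-level plan (reduce to an $\tilde O(n)$-edge, $\tilde O(n/\delta)$-vertex multigraph via $2$-out contraction plus Nagamochi--Ibaraki sparsification, then finish with Gabow or Karger) matches the paper's outline, but there is a genuine gap in how you amplify from constant success probability to \emph{whp}, and it breaks both time bounds.

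For the $O(m\log n)$ bound you propose $\Theta(\log n)$ independent trials, each running a min-cut subroutine on its own contracted instance $H$, and then take the minimum of the $\Theta(\log n)$ answers. But the final min-cut computation on $H$ is not $O(m/\log n)$: Gabow on $H$ costs $O(\delta\cdot m_H\cdot\log n_H)=O(\delta n\log(n/\delta))=\Theta(m\log n)$ when $m=\Theta(n\delta)$, and Karger-Stein costs $\tilde O((n/\delta)^2)$, which is not $O(m/\log n)$ for moderate $\delta$; running the subroutine $\Theta(\log n)$ times therefore gives $O(m\log^2 n)$, not $O(m\log n)$, and no case split on $\delta$ seems to fix all regimes. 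The paper avoids this entirely with a ``repetition and voting'' step (the proof of \Cref{thm:main}): generate $q=\Theta(\log n)$ contracted multigraphs $\widehat G_1,\dots,\widehat G_q$, each with $O(n)$ edges; for each edge of $G$ count in how many $\widehat G_i$ it survives; contract every edge that survives in fewer than $r=\Theta(\log n)$ of them. Since the total number of (edge, $\widehat G_i$)-survivals is $q\cdot O(n)$, only $O(n)$ edges are preserved, yielding a \emph{single} multigraph $\widehat G$ with $O(n)$ edges and $O(n/\delta)$ vertices that whp preserves \emph{all} non-trivial small cuts simultaneously. The expensive final algorithm is then run exactly once. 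This single-output amplification is the idea your proposal is missing, and it is also what makes the cactus bound clean: run Gabow's or Karger--Panigrahi's cactus algorithm once on $\widehat G$, rather than stitching $\Theta(\log n)$ partial cactuses, which would require a consistency argument you do not give.

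The $O(m+n\log^3 n)$ piece has a second problem. You propose to recurse ``$2$-out-contract-then-sparsify'' so that only $O(1)$ levels cost $\Omega(m)$ and the amplification happens on cheaper, already-shrunken instances. This does not work: the $2$-out guarantees (\Cref{t:node_compression} and its proof) are proved for \emph{simple} graphs --- the bound ``probability a sample is caught is at most $(x-1)/\delta$'' uses that each neighbour is hit with probability at most $1/\delta$ --- and they fail on the multigraph produced by the first level, whose minimum degree can also collapse. More fundamentally, if the first contraction kills the min cut (which happens with constant probability) no amount of later amplification on the shrunken instance can recover it, so the recursion cannot give a whp guarantee. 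The paper instead proves \Cref{thm:main-dense} via an on-line data structure (\Cref{lem:3-out+uf}): it maintains $q=\Theta(\log n)$ independent ``$2$-out $+$ forest-packing'' structures, feeds edges one at a time, and --- the crucial trick --- keeps a top-level union-find over the vertices of $\widehat G$ so that an edge whose endpoints are already merged is skipped. Since every query either causes a ``preserve'' (at most $O(n)$ of these) or a contraction (at most $n-1$ of these), only $O(n)$ edges ever reach the $q$ data structures, and the total time is $O(m+n\log(n)\alpha(n,n))$; Karger's $O(n\log^3 n)$ algorithm then finishes. You would need either this data-structural idea or some equally careful replacement to make the $O(m+n\log^3 n)$ bound go through.
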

The corresponding proof appears in \Cref{thm:Gabow,thm:fasterCut}. For minimum cut, the $O(m+n\log^3 n)$ bound ensures that the algorithm has an optimal complexity whenever $m=\Omega(n\log^3 n)$. The $O(m\log n)$ part of the complexity bound is interesting for sparser graphs. Moreover, it improves on the state of the art $O(m \log^2 n \cdot (\log\log n)^2)$ algorithm \cite{DBLP:conf/soda/HenzingerRW17}. 

%\motodo{Add a theorem statement and definition for the cactus representation} 
%\smallskip 

\subsubsection{Improvements in Distributed Algorithms}
\label{subsubsec:dist}
\vspace{-5pt}
\paragraph{Setting} 
We use the standard message-passing model of distributed computing (i.e., $\mathsf{CONGEST}$~\cite{Peleg:2000}): there is one processor on each graph node, which initially knows only its own edges, and per round each processor can send one $O(\log n)$-bit message to each of its neighbors. At the end, each processor should know its own part of the output, e.g., which of its edges are in the identified cut.   

\paragraph{State of the Art} Recently, Daga et al.~\cite{daga2019} gave the first distributed algorithm that computes the exact min-cut in simple graphs in sublinear number of rounds. Their algorithm runs in $\tilde{O} (n^{1-1/353}D^{1/353}+n^{1-1/706})$ rounds, where $D$ denotes the network diameter. In contrast, for graphs with a small edge-connectivity $\lambda$, a sublinear-time algorithm was known due to Nanongkai and Su~\cite{Nanongkai2014-distributed-cut}, with round complexity $\tilde{O}((D+\sqrt{n})\lambda^4)$. They also provide a $1+\eps$ approximation for any constant $\eps >0$, which runs in $\tilde{O}(D+\sqrt{n})$ rounds, and was an improvement on a $2+\eps$ approximation of Ghaffari and Kuhn~\cite{distributed-cut} with a similar round complexity. 

\paragraph{Our Contribution} We obtain an algorithm that runs in $\tilde \bigO \paren{n^{0.8}D^{0.2}+n^{0.9}}$ rounds, which provides a considerable improvement on the barely sublinear complexity of Daga et al.~\cite{daga2019}. The overall algorithm is also considerably simpler. 
\begin{theorem}\label{t:mc_congest}
  Given a simple input graph $G$, with $n$ vertices and $m$ edges it is possible to find its minimum cut in the \congest model in $\tilde \bigO \paren{n^{0.8}D^{0.2}+n^{0.9}}$ rounds, \whp. 
\end{theorem}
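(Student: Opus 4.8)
The plan is to use the random $2$-out contraction (\Cref{t:node_compression}, \Cref{t:edge_compression}) to shrink the instance when the minimum degree $\delta$ is large, and to fall back on the small-edge-connectivity algorithm of Nanongkai and Su when $\delta$ is small, splitting at a threshold $\tau$ chosen to balance the two costs. As preprocessing, compute $\delta$ by aggregating degrees up a BFS tree in $O(D)$ rounds; since every singleton cut has size equal to its endpoint's degree, hence $\ge\delta\ge\lambda$, and is recognized locally, it suffices to find the smallest non-singleton cut and return the minimum of it and $\delta$. I would run the following in parallel for every guess $g\in\{2^0,2^1,\dots,2^{\lceil\log(2\delta)\rceil}\}$ of $\lambda$, with a final $O(D)$-round aggregation to pick the best answer, so that the correct regime is triggered by the guess $g\in[\lambda,2\lambda)$.

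For a guess $g<\tau$, run the Nanongkai--Su algorithm, which determines the minimum cut whenever it is at most $g$, in $\tilde O\bigl((D+\sqrt n)\,g^{4}\bigr)=\tilde O\bigl((D+\sqrt n)\,\tau^{4}\bigr)$ rounds. For a guess $g\ge\tau$ (hence $g\le 2\delta$, so $\delta\ge\tau/2$): run $\Theta(\log n)$ independent $2$-out contractions in parallel, so that w.h.p.\ at least one of them yields a contracted multigraph $G/H$ on $O(n/\delta)$ supernodes that preserves every non-singleton minimum cut, and compute the supernode identifiers by a distributed connected-components routine in $\tilde O(D+\sqrt n)$ rounds. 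Then, using the edge-compression of \Cref{t:edge_compression} together with Karger-style skeleton sampling at rate $\tilde\Theta(1/g)$, reduce $G/H$ to an $\tilde O(n/\delta)$-edge instance whose minimum cut still reveals $\lambda$; route these $\tilde O(n/\delta)$ tagged edges up a global BFS tree to a single coordinator in $\tilde O(D+n/\delta)$ rounds, let it compute the minimum cut and broadcast the corresponding supernode bipartition in $\tilde O(D)$ rounds, after which each vertex learns which of its own edges cross. This regime costs $\tilde O(D+n/\delta)=\tilde O(D+n/\tau)$, and setting $\tau^{5}\asymp n/(D+\sqrt n)$ equates the two regimes and yields total round complexity $\tilde O(n/\tau)=\tilde O\bigl(n^{4/5}D^{1/5}+n^{9/10}\bigr)$; I would package the regime split, the guessing, and this optimization as \Cref{l:mc_congest}.

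The main obstacle is carrying out the $g\ge\tau$ regime within this budget in \congest, and the difficulty is twofold. First, emulation: a supernode of $G/H$ induces a connected but possibly large-diameter subgraph of $G$, so one cannot simulate \congest on $G/H$ round-by-round naively; instead all the (sparsified) inter-supernode information must be funneled through one global BFS tree, so that assembling the residual $\tilde O(n/\delta)$-edge instance costs $\tilde O(D+n/\delta)$ rounds overall rather than a per-supernode-diameter overhead. Second, one must perform the edge-compression/certificate step and the skeleton sampling on $G/H$ distributedly so that the residual graph really does have only $\tilde O(n/\delta)$ edges, and argue soundness for incorrect guesses: a guess $g<\lambda$ must never cause the coordinator to report a spurious cut of value $\le g$, which requires the usual skeleton/tree-packing argument (a cut that is cheap in the sparsifier is genuinely cheap in $G$, and candidate cuts are re-evaluated in $G$) rather than a bare cut-sparsifier bound. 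Lower-order points are running the $\Theta(\log n)$ contraction trials and the $O(\log\delta)$ guesses in parallel with only $\tilde O(1)$ congestion blow-up, and the connected-components computation, whose $\tilde O(D+\sqrt n)$ cost is dominated by $\tilde O(n/\tau)$.
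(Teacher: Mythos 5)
Your overall structure — random $2$-out contraction for the regime of large $\lambda$, Nanongkai--Su for small $\lambda$, with a balancing threshold that gives $\tilde O(n^{0.8}D^{0.2}+n^{0.9})$ — matches the paper's high-level plan, and the arithmetic for the threshold is right. But the two approaches diverge exactly in the step you yourself flag as ``the main obstacle'': how to extract the \emph{exact} minimum cut of the contracted multigraph $G/H$ in $\tilde O(n/\delta)$ rounds. The paper's solution is to hand $G/H$ to the distributed subroutine of Daga et al.~\cite{daga2019}, which, given a partition into connected pieces $V_1,\dots,V_k$ whose contraction preserves every non-trivial min-cut, finds the exact minimum cut in $\tilde O(D+k+\sum_i D(V_i))$ rounds. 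This makes the new technical payload of the paper's \congest section the bound $\sum_i D(V_i)=\tilde O(n/\delta)$ on the total diameter of the $2$-out components (\Cref{lem:diameter}), which your writeup does not mention and does not need — because you replaced that black box with ``sparsify and ship to a coordinator.''

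That replacement, however, does not work as stated. You claim to reduce $G/H$ to an $\tilde O(n/\delta)$-edge instance ``whose minimum cut still reveals $\lambda$'' via edge-compression plus Karger-style skeleton sampling at rate $\tilde\Theta(1/g)$. The sparse-certificate step (\Cref{lem:NI}) with $k=\Theta(\delta)$ or $k=\Theta(g)$ yields $O(n_{G/H}\cdot k)$ edges, which in the worst case (say $\lambda=\Theta(\delta)$) is $\Theta(n)$, not $\tilde O(n/\delta)$; shipping those to a coordinator costs $\tilde O(D+n)$ rounds, which blows the budget. To get down to $\tilde O(n/\tau)$ edges you must subsample, but a Karger skeleton at rate $\tilde\Theta(1/\lambda)$ only preserves cut values up to a $(1\pm\eps)$ factor — the coordinator then sees an approximate, not exact, minimum cut. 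Your proposed fix (``candidate cuts are re-evaluated in $G$'') is where the real work lives: a Karger tree-packing over the skeleton produces $\Theta(\lambda)$ trees, each contributing up to $\Theta((n/\delta)^2)$ two-respecting candidate cuts, and naively evaluating each candidate in $G$ costs $\Theta(D)$ rounds, far exceeding $\tilde O(n/\tau)$. Making this efficient is essentially the content of the Nanongkai--Su and Daga et al. machinery, so your route does not bypass it so much as tacitly assume it. (Also, \Cref{t:edge_compression} that you cite does not exist in the paper; the relevant statements are \Cref{lem:NI} and \Cref{lem:random_edge_reduction}, and neither yields $\tilde O(n/\delta)$ edges exactly preserving $\lambda$.)

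A smaller imprecision: running $\Theta(\log n)$ independent $2$-out contractions gives, w.h.p., at least one trial that preserves \emph{some fixed} non-singleton min-cut, not one that preserves \emph{every} non-singleton min-cut. For finding the value of $\lambda$ this is enough (run the downstream solver on each trial and take the minimum reported answer), but the statement as written overclaims what \Cref{t:node_compression} delivers; preserving all small cuts simultaneously requires the voting construction of \Cref{thm:main}, which you do not invoke.
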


%\smallskip

\begin{table}[t]
\begin{adjustbox}{center}
\begin{tabular}{| *{4}{c|} }
    \hline
\multicolumn{1}{|c|}{}
        & \multicolumn{1}{c|}{Sequential}
                & \multicolumn{2}{c|}{Classic Parallel \PRAM} \\
    \hline
    & work & work & depth \\
\Xhline{2\arrayrulewidth}

\makecell{\small simple graphs, \\ \small previous results}             
&   $\bigO(m \log^2 n (\log \log n)^2)$ \cite{DBLP:conf/soda/HenzingerRW17}
&  $\bigO(m \log^4 n)$     & $\bigO(\log^3 n)$\cite{Geissmann2018_PRAM_MC} \\
    \hline

\makecell{simple graphs, \\ \small\emph{*\textbf{our results}*}}             
&   $\min\{\bigO(m\log n),\bigO(m + n \log^3 n)\}$ 
&   $\bigO(m \log n + n \log^4 n)$     & $\bigO(\log^3 n)$ \\
    \Xhline{2\arrayrulewidth}

\end{tabular}
\end{adjustbox}
\vspace{-5pt}
\caption{\footnotesize Comparison of results, in the Sequential and PRAM settings. See \Cref{subsubsec:seq,subsubsec:PRAM}.}
\label{tab:previous_results1}
% \end{adjustbox}
\end{table}
\vspace{-7pt}

\subsubsection{Improvements in Massively Parallel Algorithms}
\label{subsubsec:MPC}
\vspace{-5pt}
\paragraph{Setting} Parallel algorithms and especially those for modern settings of parallel computation (such as MapReduce~\cite{dg04}, Hadoop~\cite{White:2012}, Spark~\cite{ZahariaCFSS10}, and Dryad~\cite{Isard:2007}) have been receiving increasing attention recently, due to the need for processing large graphs. We work with the Massively Parallel Computation (MPC) model, which was introduced by Karloff et al.~\cite{KarloffSV10} and has by now become a standard theoretical model for the study of massively parallel graph algorithms.
%~\cite{KarloffSV10, goodrich2011sorting, LattanziMSV11, Beame13, Andoni:2014, Beame14, hegeman2015lessons, AhnGuha15,Roughgarden16,Im17,czumaj2017round, assadi2017simple,assadi2017coresets,ghaffari2018improved,harvey2018greedy,brandt2018breaking,boroujeni2018approximating,andoni2018parallel, ghaffari2019sparsifying, behnezhad2019TreeMIS, assadi2019massively, gamlath2018weighted, chang2019coloring, behnezhad2019exponentially, ghaffari2019conditionalLB}. 

%\motodo{Perhaps update the list with those of 2019.}
In the MPC model, the graph is distributed among a number of machines. Each machine has a limited memory $S$ --- known as the \emph{local memory} --- and thus can send or receive at most $S$ words, per round. The number of machines $M$ is typically assumed to be just enough to fit all the edges, i.e., $O(m/S)$ or slightly higher. We refer to $M\cdot S$ as the \emph{global memory}. The main measure is the number of rounds needed to solve the problem, given a predetermined limited local memory. 

\paragraph{State of the Art} In the super-linear regime of local memory where $S=n^{1+\eps}$ for some constant $\eps>0$, many graph problems ---particularly, including minimum cut~\cite{lattanzi2011filtering} --- can be solved in $O(1)$ rounds, using a relatively simple filtering idea. Much of the recent activities in the area has been on achieving similarly fast algorithms for various problems in the much harder memory regimes where $S$ in nearly linear or even sublinear in $n$~\cite{czumaj2017round, ghaffari2018improved, boroujeni2018approximating,andoni2018parallel, assadi2019coresets, ghaffari2019sparsifying, behnezhad2019TreeMIS, brandt2019breaking, assadi2019massively, gamlath2018weighted, chang2019coloring, behnezhad2019exponentially, ghaffari2019conditionalLB}.

For minimum cut, in the nearly linear memory regime where $S=\tilde{O}(n)$ regime, the result given by Lattanzi et al. runs in $O(\log^2 n)$ rounds~\cite{lattanzi2011filtering} and requires global memory of order $\bigO(mn)$. It seems to be that the running time of this approach could be improved (by providing better implementation of the contraction process), but its global memory requirement is always $\Omega(n^2)$.

\paragraph{Our Contribution} We give the first algorithm with $O(1)$ round complexity while using only $\bigO(n)$ memory per machine and $\bigO(m + n \log^3 n)$ global memory. This settles the complexity of minimum cut in the nearly-linear memory regime.

\begin{theorem}\label{t:mc_mpc}
  Given a simple input graph $G$, with $n$ vertices and $m$ edges it is possible to find its minimum cut in $\bigO(1)$ rounds, \whp, using $\bigO(n)$ local memory per machine and $\bigO(m + n \log^3 n)$ global memory.
\end{theorem}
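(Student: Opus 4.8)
The plan is to implement the random $2$-out contraction of \Cref{t:node_compression,t:edge_compression} inside the \MPC model and then finish the computation on a single machine using the sequential algorithm of \Cref{t:mc_sequential}. Trivial (singleton) cuts are dealt with immediately: every machine announces the degrees of the vertices it stores, and aggregating the minimum over a constant-depth tree of machines yields $\delta$ in $O(1)$ rounds; we keep $\delta$ as a candidate value for $\lambda$ and from now on only search for a smallest \emph{non-trivial} cut, so we may assume $\lambda\le\delta$. The one observation that makes everything go through in the \MPC model is that a $2$-out sample is tiny: since each vertex picks only two incident edges, the whole sampled subgraph has at most $2n$ edges and therefore fits in the $\bigO(n)$-word local memory of a \emph{single} machine.

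To perform one contraction round in $O(1)$ \MPC rounds, run $\Theta(\log n)$ independent copies of $2$-out sampling in parallel. Route the (at most $2n$) sampled edges of each copy to its own constantly-many machine(s); there, compute the connected components of that $2$-out subgraph locally, and broadcast the induced map from vertices to super-vertices, after which every vertex relabels the endpoints of its original edges. This is all $O(1)$ rounds and $\bigO(n\log n)$ global memory for the samples. By \Cref{t:node_compression}, amplified to \whp\ over the $\Theta(\log n)$ copies, at least one copy simultaneously keeps the two sides of a fixed minimum cut separated and shrinks the vertex set to $\tilde\bigO(n/\delta)$; invoking \Cref{t:edge_compression} (or an additional $O(1)$-round bucketing of parallel edges) also brings that copy's edge count down to $\tilde\bigO(m/\delta)$.

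Now iterate. The contracted multigraph has $\tilde\bigO(n/\delta)$ vertices and $\tilde\bigO(m/\delta)$ edges, and its minimum degree — hence the edge-connectivity parameter that governs the next shrink — has not dropped, so repeating the previous step a constant number of times drives the total instance size below $\bigO(n)$ while still preserving the chosen minimum cut \whp. At that point gather the entire instance onto one machine and run the $\bigO(m'+n'\log^3 n')$-time sequential algorithm of \Cref{t:mc_sequential} on it (here $m',n'=\bigO(n)$). The global memory is $\bigO(m)$ to hold the input, $\bigO(n\log n)$ for the $\Theta(\log n)$ copies of the $O(n)$-edge samples across all iterations, and $\bigO(n\log^3 n)$ for the final sequential run, for $\bigO(m+n\log^3 n)$ in total; comparing the best non-trivial cut found against $\delta$ gives $\lambda$, and the actual cut is recovered by composing the stored super-vertex maps.

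The main obstacle is controlling the \emph{edge} count rather than just the vertex count throughout, together with the $\Theta(\log n)$-fold amplification: a naive implementation that keeps $\log n$ full copies of $G$ would push the global memory past $\bigO(m+n\log^3 n)$, so the repetitions must be layered on top of an already-sparsified instance, and one must show that the sparsification guaranteed by \Cref{t:edge_compression} (the \MPC analogue of the Kawarabayashi--Thorup reduction) can itself be carried out in $O(1)$ rounds with linear local memory. A secondary, more routine point is checking that composing the constantly many contraction rounds, the repeated ``gather onto one machine'' steps, and the final sequential call keeps the round count at $O(1)$ and the overall success probability at \whp.
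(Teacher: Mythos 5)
Your high-level plan matches the paper's: produce a sparse contracted multigraph via $2$-out contraction plus $\Theta(\log n)$-fold amplification, then gather it on one machine and solve locally. But the write-up leaves exactly the hard part unresolved and introduces an extra step that does not work as stated.

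First, the edge reduction is handwaved. After a $2$-out contraction the graph has $O(n/\delta)$ supernodes but still up to $m$ inter-component edges; to get down to $O(n)$ edges one must either run the Ibaraki--Nagamochi sparse-certificate contraction (\Cref{lem:NI}) or contract an additional uniformly sampled edge set $E_{1/(2\delta)}$ (\Cref{lem:random_edge_reduction}). You cite a ``\Cref{t:edge_compression}'' that does not exist in the paper, and your fallback, ``bucketing of parallel edges,'' gives at most $O\bigl((n/\delta)^2\bigr)$ weighted edges, which is not $\tilde O(m/\delta)$ and, when $\delta = o(\sqrt n)$, is not $O(n)$ either. Moreover, replacing parallel edges by weights would preclude feeding the result to the unweighted contraction machinery you then want to reuse.

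Second, the ``now iterate'' step is ill-defined. Among your $\Theta(\log n)$ independent contractions, only a constant fraction preserve a fixed nearly-minimum cut, and the algorithm cannot tell which ones. If you recurse on one copy you may have already destroyed the cut; if you recurse on all copies the number of instances grows like $(\log n)^t$ and the memory blows up. The paper avoids iteration entirely: it runs the $\Theta(\log n)$ contraction processes once, then uses the \emph{voting} rule of \Cref{subsec:succAmp} (keep an edge iff it survives in at least $r=\Theta(\log n)$ of the processes) to obtain a \emph{single} multigraph with $O(n)$ edges that preserves \emph{all} non-trivial $(2-\eps)$-small cuts whp, which is what lets the final gather-and-solve step be correct without knowing which copies were ``good.''

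Third, and most importantly, you explicitly name the global-memory obstacle --- that naively touching all $m$ edges for each of the $\Theta(\log n)$ processes costs $\Theta(m\log n)$ --- and then leave it as a thing ``one must show,'' proposing a circular fix (``layer the repetitions on top of an already-sparsified instance'' --- but the sparsified instance is only obtained \emph{after} the amplified contraction). This is precisely where the paper's proof does nontrivial work: it combines each vertex's $\Theta(\log n)$ component labels into a single $O(\log n)$-bit \emph{fingerprint}, so that an edge whose endpoints have equal fingerprints is, whp, internal in every process and can be discarded without ever materializing $\Theta(\log n)$ relabelings of $G$. Only the remaining $r$ inter-component edges pay $O(\log n)$ memory each, giving $O(m + r\log n) = O(m + n\log^2 n)$ for the comparison phase, and a similar idea (KKT-style sampling plus fingerprints) keeps the $\Theta(\log n)$ parallel connected-components computations within $O(m + n\log^3 n)$ global memory. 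Without some device of this kind, your construction does not meet the stated memory bound when $m = \omega(n\log^2 n)$.
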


%\smallskip 
\subsubsection{Improvements in PRAM Parallel Algorithms}
\label{subsubsec:PRAM}
For the standard \PRAM model of parallel algorithms (Concurrent Write Exclusive Read), our algorithm improves the total work while achieving the same depth complexity as the state of the art~\cite{Geissmann2018_PRAM_MC}. We get an algorithm with $\bigO(\log^3 n)$ depth and $\bigO(m\log n + n\log^4 n)$ work. This improves on the work complexity of the state of the art algorithm of Geissman and Gianinazzi~\cite{Geissmann2018_PRAM_MC}, which has $\bigO(\log^3 n)$ depth and $\bigO(m\log^4 n)$ work.
\begin{theorem}\label{t:mc_pram}
There exists a \CPRAM algorithm that for a simple graph with $n$ vertices and $m$ edges computes its minimum cut using $\bigO(m \log n + n \log^4 n)$ work, with $\bigO(\log^3 n)$ depth. The algorithm returns a correct answer \whp.
\end{theorem}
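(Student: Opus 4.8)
The plan is to realize, in the \CPRAM model, the minimum-cut algorithm assembled from the two structural primitives of this paper---the random $2$-out node compression of \Cref{t:node_compression} and the edge compression of \Cref{t:edge_compression}---and to bound its work and depth phase by phase; this is what \Cref{l:mc_pram} asserts. The algorithm is: (i) compute the minimum degree $\delta$ and keep $\delta$ as the value of the best trivial cut; (ii) run the edge-compression routine to obtain, \whp, a multigraph $G_1$ with the same minimum cut value as $G$ but with far fewer edges, every non-trivial near-minimum cut preserved; (iii) run $\Theta(\log n)$ independent trials of the random $2$-out contraction on $G_1$, each of which, by \Cref{t:node_compression}, \whp\ produces a much smaller multigraph that keeps any fixed non-trivial minimum cut with constant probability; (iv) solve minimum cut on each of the $\Theta(\log n)$ contracted multigraphs with a parallel Karger--Stein base case~\cite{Geissmann2018_PRAM_MC} run with $\bigO(\log^3 n)$ depth, and output the smallest value found together with $\delta$. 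Since contraction never decreases a cut, we never under-report; the trivial cut is caught in~(i); and if the minimum cut is non-trivial then $\lambda<\delta$, so each trial in~(iii) exposes it to the base case with constant probability and hence \whp\ some trial among the $\Theta(\log n)$ finds it exactly. The regime in which $\delta$ is too small for the $2$-out step to shrink the instance is handled by a single base-case call, which is cheap there because $G_1$ already has few edges and small minimum cut.

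The \CPRAM implementation uses only three non-elementary primitives, each taken as a black box: prefix sums, integer sorting/bucketing, and connected components. Phase~(i) is a parallel reduction over degrees: $\bigO(\log n)$ depth, $\bigO(m)$ work. In a $2$-out step (phase~(iii)), each vertex selects two uniformly random incident edges via one prefix-sum pass over its adjacency array together with two random indices---$\bigO(\log n)$ depth and $\bigO(m)$ work---and contracting the sampled $\bigO(n)$-edge subgraph is one connected-components computation plus relabelling and re-bucketing of the surviving edges: on a \CPRAM this is $\bigO(\log^2 n)$ depth and $\bigO(n\polylog n)$ work per trial. The $\Theta(\log n)$ trials run simultaneously, so their depths do not add. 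Phase~(ii) is again a bounded number of sorting/connectivity/prefix-sum calls on the $m$-edge input, which one can arrange to cost $\bigO(\log^2 n)$ depth and $\bigO(m\log n)$ work, and phase~(iv) invokes~\cite{Geissmann2018_PRAM_MC} with $\bigO(\log^3 n)$ depth and only $\bigO(n\polylog n)$ work per trial on the (now small) contracted instances. Summing: the depth is $\bigO(\log^2 n)+\bigO(\log^3 n)=\bigO(\log^3 n)$, and the work is $\bigO(m\log n)$ for the compressions plus $\bigO(\log n)$ times the per-trial cost, which a careful choice of parameters pins down at $\bigO(m\log n+n\log^4 n)$.

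I expect the real work to be in two places. First, the parameter bookkeeping in the last step: one must tune the edge-compression and $2$-out parameters and the trial count so that, after \emph{both} compressions, the instance handed to the unmodified Karger--Stein routine is small enough that its total cost over all trials is only $\bigO(n\log^4 n)$ work and $\bigO(\log^3 n)$ depth---and so that the complementary ``small $\delta$'' branch stays within the same budget. Second, routing every step through genuinely \CPRAM-admissible primitives: the contraction step is a connected-components computation, which on an exclusive-read machine is the depth bottleneck at $\bigO(\log^2 n)$ per call, and this is precisely why the resulting depth merely matches, rather than beats, the previous state of the art~\cite{Geissmann2018_PRAM_MC}.
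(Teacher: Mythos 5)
There are two genuine gaps, both in how you assemble the pieces, and together they mean the algorithm you describe does not achieve the claimed $\bigO(m\log n + n\log^4 n)$ work bound. First, you apply the edge compression to the original $n$-vertex graph \emph{before} the $2$-out contraction. Every edge-reduction tool available here (the sparse certificate of \Cref{lem:NI} with $k\approx 2\delta$, or the random contraction of \Cref{lem:random_edge_reduction} with sampling rate $1/(2\delta)$) only guarantees $\bigO(n\delta)$ surviving edges on an $n$-vertex graph, and since $m\geq n\delta/2$ this is no asymptotic reduction at all; the whole point of the paper's ordering is that one first (or simultaneously) shrinks the vertex count to $\bigO(n/\delta)$ via \Cref{t:node_compression}, after which the same edge reduction yields $\bigO((n/\delta)\cdot\delta)=\bigO(n)$ edges. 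The $2$-out contraction by itself does not reduce the edge count (parallel edges survive), so in your pipeline each of the $\Theta(\log n)$ trial multigraphs can still have $\Theta(m)$ edges, and phase~(iv) then costs $\Theta(m\log^4 n)$ work per trial. The paper's PRAM section instead contracts, per trial, the connected components of $I_1\cup I_2\cup E_{1/(2\delta)}$ in one connected-components call, so each trial graph has $\bigO(n/\delta)$ vertices \emph{and} $\bigO(n)$ edges.

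Second, even after fixing the ordering, you run the expensive solver of \cite{Geissmann2018_PRAM_MC} once per trial and take the minimum over the $\Theta(\log n)$ answers. With $\bigO(n)$ edges per trial that is $\Theta(\log n)\cdot\bigO(n\log^4 n)=\bigO(n\log^5 n)$ work, one $\log$ factor above the theorem; no tuning of the trial count can remove it, since $\Theta(\log n)$ trials are needed for the whp guarantee. The missing idea is the merging/voting step of \Cref{subsec:succAmp}: for each edge of $G$ count in how many of the $\Theta(\log n)$ trials it survives, keep only the edges preserved at least $r=\Theta(\log n)$ times (there are $\bigO(n)$ of them, since each trial preserves only $\bigO(n)$ edges), and contract everything else. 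This produces a \emph{single} multigraph with $\bigO(n)$ edges that whp preserves \emph{all} non-trivial $(2-\eps)$-minimum cuts, and the solver of \cite{Geissmann2018_PRAM_MC} is then invoked exactly once, giving $\bigO(n\log^4 n)$ work and $\bigO(\log^3 n)$ depth on top of the $\bigO(m\log n)$-work, $\bigO(\log n)$-depth contraction phase (\Cref{l:mc_pram}). Your correctness argument (contractions never decrease the cut, the singleton cut is caught by $\delta$) is fine, and your depth accounting is fine, but without the simultaneous per-trial edge reduction and the voting-based merge the stated work bound is not met.
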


%The edge connectivity of a graph is the smallest number $\lambda$, such that there exists a set of $\lambda$ edges, whose removal disconnects the graph. The problem of identifying such set of edges is called the \MC problem. Such definition of this problem evidence that it has a direct application in the field of network analysis and design. It is because the graphs with small /MC are vulnerable to communication link failures and contain communication bottlenecks. Besides this straightforward applications, this problem cab be used in several other ways - more detailed survey of the applications of the \MC problem can be found, for example, in the paper of Karger and Stein \cite{KargerStein1996}.

\begin{table}[t]
\begin{adjustbox}{center}
\begin{tabular}{| *{5}{c|} }
    \hline
\multicolumn{1}{|c|}{}
                        & \multicolumn{3}{c|}{Modern Parallel \MPC}                
                                & \multicolumn{1}{c|}{Distributed \congest}                \\
    \hline
    & \makecell{local \\ memory} & \makecell{total \\ memory} & rounds & rounds   \\
    \Xhline{2\arrayrulewidth}
    
\makecell{\small simple  graphs, \\ \small previous results}             
&   \makecell{$\bigO(n)$ \\ $\bigO(n^{1+\eps})$}  &  \makecell{$\bigO(mn)$ \\  $\bigO(nm)$}  &  \makecell{$\bigO(\log^2 n)$ \cite{lattanzi2011filtering} \\ $\bigO(1)$ \cite{lattanzi2011filtering} } 
& \makecell{$\tilde \bigO (n^{1-1/353}D^{1/353}+n^{1-1/706})$ \\ ~\cite{daga2019} }    \\
    \hline

\makecell{\small simple  graphs, \\ \small \emph{*\textbf{our results}*}}             
&   $\bigO(n)$  &  $\bigO(m + n \log^3 n)$  &  $\bigO(1)$  
& $\tilde \bigO (n^{0.8}D^{0.2}+n^{0.9})$       \\
    \Xhline{2\arrayrulewidth}

\end{tabular}
\end{adjustbox}
\vspace{-5pt}
\caption{\footnotesize Comparison of results, in the massively parallel computation and distributed computation settings. The previous results for the MPC model show two lines, for different regimes of local memory. See \Cref{subsubsec:dist,subsubsec:MPC}.}
\label{tab:previous_results2}
\vspace{-5pt}
\end{table}

%\motodo{We should move this somewhere else} Furthermore, we would like to point out two results of Karger that provide some ideas that together with our new technique yield algorithms with better complexity bounds. The first is the paper that introduces random contractions \cite{}, and the second is the paper that introduces the approach based on tree packing \cite{Karger2000}.

%\newpage
\vspace{-7pt}
\subsection{Our Method, In a Nutshell}
\label{subsec:novelty}
\vspace{-5pt}

Our main technical contribution is a simple, and plausibly practical, randomized contraction process that transforms any $n$-node graph with minimum cut $\lambda$ to a multi-graph with $O(n/\lambda)$ vertices and only $O(n)$ edges, while preserving all non-singleton\footnote{A cut in which exactly one node is on one side is called a \emph{singleton} cut, or sometimes a \emph{trivial} cut. In most computational settings, it is trivial to read the size of all singleton cuts --- that is, degrees of nodes --- and identify their minimum. Thus, the problem of computing the edge connectivity effectively boils down to assuming the minimum cut size is at most the minimum degree and identifying the smallest non-singleton cut.} minimum cuts with high probability. This can also be viewed as a simple \emph{graph compression} for (non-trivial) minimum cuts. We then solve the minimum cut problem on this remaining sparse (multi-)graph, using known algorithms. 

The aforementioned contraction process itself has two parts, and some careful repetition for success amplification, as we overview next. (A) The main novelty of this paper is the first contraction part, which we refer to as \emph{random $2$-out contraction}: for each node $v$, we randomly choose $2$ of its edges (with replacement and independent of other choices) --- we view these as ``outgoing" edges from node $v$, proposed by $v$ for contraction --- and we contract all chosen edges simultaneously. We prove that this reduces the number of vertices to $O(n/\lambda)$ while preserving any non-trivial minimum cut with a constant probability. In fact, we show the number of vertices is in $O(n/\delta) \leq O(n/\lambda)$, where $\delta$ denotes the minimum degree. Furthermore, the contraction preserves any non-singleton cut with size at most $2-\eps$ factor of the minimum cut size, with a constant probability, for any constant $\eps \in (0, 1]$. (B) For the second part, we transform the graph after the first part of contractions to have also only $O(n)$ edges---i.e., within an $O(\lambda)$ factor of the number of vertices---while preserving any cut of size $O(\lambda)$. There are several ways to obtain that goal, in this paper we discuss a deterministic approach based on sparse connectivity certificates and a randomized approach based on contracting a uniformly sampled subset of edges. (C) Finally, we use $O(\log n)$ repetitions of the combination of these two parts, and a carefully designed ``majority" voting per edge, to  amplify the success probability and conclude that with high probability, all non-trivial minimum cuts are preserved, while having $O(n/\delta) \leq O(n/\lambda)$ vertices and $O(n)$ edges.
%\newpage

\paragraph{Related Work on $k$-Out} We note that random $k$-out subgraphs have been studied in the literature of random graphs. See for instance the work of Frieze et al.~\cite{frieze2017random} which shows that if $\delta \geq (1/2+\epsilon) n$, then a $k$-out subgraph (to be precise, sampling $k$ edges per node and without replacement) is whp $k$-connected, for $k=O(1)$. We leverage a somewhat opposite property of $k$-out: that with some constant probability, it does not contract any edge for a singleton minimum cut (hence, the subgraph is not even connected), while still significantly reducing the number of \emph{vertices}. 

In a very recent paper \cite{HKTZZ19:k-out}, Holm et al. have shown
that if $k\geq c\log n$, then the $k$-out contraction of a simple
graph with $n$ vertices has only $O(n/k)$ edges (their $k$-out
definition is slightly different, but easily converted to ours). If
such a result was true for $k=2$, then this would simplify our
constructions, but proving it for close to constant $k$ seems far out
of reach with current techniques.  The probability of destroying a
min-cut with a $k$-out sample grows exponentially in $k$, so the
techniques from \cite{HKTZZ19:k-out} are not relevant to our min-cut
computation. Conversely, the results presented here have no
impact on the targets from \cite{HKTZZ19:k-out}. After $2$-out sampling,
we do contract edges with highly connected end-points to get down
to $O(n)$ edges, but this is only valid because we only care about
small cuts. In short, the only relation to \cite{HKTZZ19:k-out} is
that both our work and theirs study and show algorithmic benefits of $k$-out sampling.

\paragraph{Roadmap} In \Cref{sec:contraction}, we describe our contraction process and state its guarantees. Later, in \Cref{s:review}, we outline how by using this contraction process and some other algorithmic ideas, we obtain faster min cut algorithms for various computational settings. The details of the implementations in various computational settings appear later, in separate sections. 

\section{Our Contraction Process}
\label{sec:contraction}
\vspace{-7pt}

\paragraph{Basic Definitions and Notations} We are working with a simple graph and we use $n$ to denote its number of vertices, $m$ to denote its number of edges, $\delta$ to denote its minimum degree, and $\lambda$ to denote its edge connectivity, i.e., the smallest number of edges whose removal disconnect the graph. When dealing with different graphs, we may use subscript
notation to say which graph we are working with. For a graph $H$,
we let $n_H$ denote the number of nodes, $m_H$ the number of edges,
$\delta_H$ the minimum degree, and $\lambda_H$ the edge connectivity of
$H$.

We define a cut in a graph by the set of vertices that are on one side of this cut. If a cut is defined by some set of vertices $S\subset V$, the edges of the cut are the edges of the graph that have exactly one endpoint in $S$. Furthermore, we denote the set of edges of cut $S$ by $C(S)$, and the size of a cut $C(S)$ by $\card{C(S)}$. We call the cut $C(S)$ a non-singleton iff $\card{S} > 1$. We say that $C(S)$ is a minimum cut, if for each cut $S'$, we have $\card{C(S)} \leq \card{C(S')}$. For a given value $\alpha\geq 1$, we say that a cut $C(S)$ is an $\alpha$ minimum cut, or $\alpha$-small, iff $\card{C(S')} \leq \alpha\lambda$. 

For any edge set $D$ from $G$, we denote by $G/D$ the result of
contracting the edges from $D$ in $G$. In this paper, we identify a
cut with the cut edges connecting the two sides. The contraction of
$D$ preserve a given cut $C$ if and only if $C\cap D=\emptyset$. The
understanding here is that all edges have identifiers, that is, they
are not just vertex pairs, so when we do contractions and remove self-loops, it is well-defined which edges survived.

\subsection{Contraction Outline}
\label{subsec:contraction-outline}
\medskip

  Our main result is captured by the following statement. While stating this result, to make things concrete, we also mention the sequential time related for implementing it. The complexity for other computational settings is discussed in the later sections. 
	\begin{theorem}\label{thm:main} Let $G$ be a simple graph with $m$ edges, $n$ nodes, and
min-degree $\delta$. Fix an arbitrary constant $\eps \in (0,1]$. In $O(m \log n)$ time, we can randomly contract
the graph to a multi-graph $\widehat{G}$ with $O(n/\delta)$ nodes and $O(n)$ edges such that, whp, $\widehat{G}$ preserves all
non-trivial $(2-\eps)$-min-cuts of $G$.
\end{theorem}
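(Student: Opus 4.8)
The plan is to prove \Cref{thm:main} by assembling the three ingredients sketched in \Cref{subsec:novelty}: the random $2$-out contraction (part A), a sparsification step that brings the edge count down to $O(n)$ while preserving all $O(\lambda)$-small cuts (part B), and an $O(\log n)$-fold repetition with per-edge majority voting to boost the success probability to whp (part C). First I would set up the key structural lemma for part A: for each vertex $v$ independently, sample two incident edges uniformly with replacement and mark them; let $D$ be the set of marked edges and form $G/D$. I would prove (i) that $G/D$ has $O(n/\delta)$ nodes whp, and (ii) that for a \emph{fixed} non-trivial $(2-\eps)$-min-cut $C(S)$, the probability that $D\cap C(S)=\emptyset$ (so the cut survives) is at least some constant $c=c(\eps)>0$. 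For (i), the intuition is that each vertex has degree $\geq\delta$, so the two sampled edges of $v$ are "spread out"; a standard argument (e.g., comparing the number of resulting components to a balls-in-bins / forest-growth process, or bounding the number of vertices that remain isolated singletons in the $2$-out subgraph) shows the contracted graph has $O(n/\delta)$ nodes with high probability — and in particular in expectation, which combined with a concentration or Markov-plus-repetition argument gives the whp bound. For (ii), if $|C(S)|\le(2-\eps)\lambda$ then both sides $S$ and $V\setminus S$ have every vertex of degree $\geq\delta\geq\lambda$, so each vertex $v$ has at most $|C(S)|$ of its $\deg(v)$ edges crossing the cut, and the chance a given sample of $v$ lands on a crossing edge is at most $|C(S)|/\deg(v)$; summing the expected number of crossing marked edges over the (at most $|C(S)|$ many) endpoints of cut edges and showing this expectation is bounded by a constant strictly less than $1$ (using $|C(S)|\le(2-\eps)\lambda$ and that each endpoint has degree $\geq\lambda$) yields, via a Poisson-type / union-bound refinement, a constant survival probability.

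Next, for part B, I would invoke a sparse connectivity certificate: given the multi-graph $G/D$ on $n'=O(n/\delta)$ nodes, compute a $k$-edge-connectivity certificate for $k=\Theta(\lambda)$ (say Nagamochi–Ibaraki scan-first-search forests, or a randomized uniform edge-sampling sparsifier à la Karger), keeping only $O(k n')=O(\lambda\cdot n/\delta)=O(n)$ edges and contracting the rest; this preserves every cut of size $\le k$, hence in particular every $(2-\eps)$-min-cut of $G$ that already survived part A. Composing parts A and B gives a single trial producing a graph $\widehat G$ on $O(n/\delta)$ nodes and $O(n)$ edges that preserves any fixed non-trivial $(2-\eps)$-min-cut with constant probability. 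The running time of one trial is $O(m)$ for the sampling plus $O(m)$ for the certificate, i.e. $O(m)$.

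For part C, I would run $t=\Theta(\log n)$ independent trials, obtaining contraction edge-sets $D_1,\dots,D_t$ (composed with the corresponding sparsification contractions). The subtlety is that a union bound over all non-trivial $(2-\eps)$-min-cuts is not directly available because there can be $n^{\Theta(1)}$ — in fact up to roughly $n^{2(2-\eps)}$ via the standard cut-counting bound for $\alpha$-small cuts — such cuts, and each survives only a single trial with constant probability, so no single trial preserves all of them. The fix (this is the "majority voting per edge" of \Cref{subsec:novelty}) is to \emph{keep} an edge of $G$ iff it is kept (i.e., not contracted away) in more than half of the $t$ trials, and to contract the rest; then an edge is wrongly contracted only if it fails in a majority of trials, which by a Chernoff bound over the $t$ independent trials happens with probability $n^{-\Omega(1)}$ per edge, and a union bound over the $m\le n^2$ edges — or more carefully, over the $O(n^{2(2-\eps)})$ cuts, arguing that a cut survives if none of its (at most $(2-\eps)\lambda$) edges is wrongly contracted — gives that whp \emph{every} non-trivial $(2-\eps)$-min-cut is preserved simultaneously. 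I would also check that majority-contracting can only add contractions relative to each trial, so the node count stays $O(n/\delta)$ and, with a final certificate pass, the edge count stays $O(n)$; the total time is $O(t)\cdot O(m)=O(m\log n)$.

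The main obstacle I anticipate is part (ii) of the structural lemma — pinning down that a fixed non-trivial $(2-\eps)$-min-cut survives $2$-out contraction with a \emph{constant} probability bounded away from zero, with the constant depending on $\eps$ in the right way. The delicate point is that the events "$v$'s sample lands on a crossing edge," over the endpoints $v$ of cut edges, are not independent of the sizes $\deg(v)$, and a naive first-moment bound on the number of marked crossing edges can exceed $1$; one must exploit both that every such endpoint has degree $\geq\lambda$ (so the per-sample crossing probability is $\leq |C(S)|/\lambda \leq 2-\eps$, but spread over two samples giving per-sample $\leq(2-\eps)/2 = 1-\eps/2$ on \emph{average}) and the fact that $S$ and $V\setminus S$ each contain at least two vertices so the crossing edges are distributed among $\geq$ some number of distinct endpoints — a careful convexity/rearrangement argument is needed to lower-bound $\prod(1-\text{crossing prob})$ by a positive constant. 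A secondary obstacle is making the $O(n/\delta)$ node bound hold \emph{with high probability} rather than just in expectation, which likely requires either a bounded-differences/Azuma argument on the exposure of the vertex samples or a self-reduction that independently re-samples and takes the best of several attempts.
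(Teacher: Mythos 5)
Your overall architecture is the paper's: a random $2$-out contraction to $O(n/\delta)$ nodes, a Nagamochi--Ibaraki-type certificate with $k=\Theta(\delta)$ to get $O(n)$ edges while preserving small cuts, and then $\Theta(\log n)$ independent trials combined by a per-edge vote (this is exactly \Cref{t:node_compression}, \Cref{lem:NI}, \Cref{t:main_theorem} and \Cref{subsec:succAmp}). However, your part C has a genuine flaw in the one place where the details matter: you keep an edge iff it survives in \emph{more than half} of the $t$ trials and invoke Chernoff to say an edge of a small cut is wrongly contracted with probability $n^{-\Omega(1)}$. That Chernoff bound requires the per-trial survival probability of such an edge to be a constant strictly above $1/2$, but the only guarantee available from a single trial is survival with probability at least $p_\eps$, and the provable $p_\eps$ (of order $e^{-\Theta(1/\eps)}$ squared) is far below $1/2$. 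With your literal majority rule, the concentration argument goes the other way: an edge whose per-trial survival probability is, say, $p_\eps\approx 10^{-3}$ is whp preserved in only about $p_\eps t$ trials and is therefore contracted, destroying precisely the cuts you want to keep. The paper calibrates the threshold to the expectation, keeping an edge iff it survives in at least $r=p_\eps q/2$ of the $q$ trials; the lower-tail Chernoff bound then gives per-cut failure probability $n^{-\gamma}$ (union-bounded over the $O(n^3)$ many $(2-\eps)$-small cuts via Karger's cut-counting), and the vote-counting argument --- each $\widehat G_i$ can cast only $O(n)$ ``preserve'' votes, so at most $q\cdot O(n)/r=O(n/p_\eps)=O(n)$ edges pass the threshold --- yields the $O(n)$ edge bound directly, with no further certificate pass needed.

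A second gap is your claim that ``majority-contracting can only add contractions relative to each trial, so the node count stays $O(n/\delta)$.'' This monotonicity is false: an edge contracted in trial $i$ may be preserved in enough other trials to pass the global threshold, so the final partition into supernodes is not a coarsening of any single trial's partition, and the $O(n/\delta)$ node bound of one trial does not transfer. The paper proves the node bound separately from the $O(n)$ edge bound: in a simple graph with min-degree $\delta$, any cut of size less than $\delta$ has at least $\delta$ vertices on each side (the fact from \cite{KT19:edge-conn}), so every supernode of $\widehat G$ with degree below $\delta$ subsumes at least $\delta$ original vertices, giving at most $n/\delta$ such supernodes, while supernodes of degree at least $\delta$ number at most $2m_{\widehat G}/\delta=O(n/\delta)$. (Your proposed ``final certificate pass'' for the edge count also quietly relies on this node bound, so it cannot substitute for it.) Your worries about part A --- the constant survival probability of a fixed non-trivial $(2-\eps)$-small cut and the whp bound on the number of components --- are resolved in the paper by the product bound $\prod_{v\in N(S)}(1-c(v)/d(v))$ after showing $c(v)/d(v)\le 1-\eps/2$ via comparing $C(S)$ with $C(S\setminus\{v\})$, and by a component-growth process with a direct Chernoff-type tail bound (no Azuma or best-of-several-attempts needed), but those belong to \Cref{t:node_compression}, which the proof of \Cref{thm:main} uses as a black box.
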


At the heart of the above result is a contraction captured by \Cref{t:main_theorem} which preserves each particular small cut with a \emph{constant} probability. We discuss later in \Cref{subsec:succAmp}	how we \emph{amplify the success probability} so that we preserve all nearly minimum cuts whp, thus giving the above theorem, without sacrificing the number of nodes or edges.

  \begin{theorem}\label{t:main_theorem} Let $G$ be a simple graph with $m$ edges, $n$ nodes, and
min-degree $\delta$, and fix an arbitrary constant $\eps \in (0,1]$. Then, in $O(m)$ time, we can randomly contract
the graph down to $O(n/\delta)$ nodes and $O(n)$ edges such that, for any fixed non-trivial $(2-\eps)$-min-cut, we preserve the cut with at least a constant probability $p_\eps > 0$.
  \end{theorem}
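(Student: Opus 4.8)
The plan is to carry out the contraction in two stages and then make one easy observation. \textbf{Stage~1 (random $2$-out contraction).} For every vertex $v$ I independently sample, with replacement, two edges incident to $v$; let $H$ be the set of all sampled edges and let $\widehat G_1 := G/H$. Sampling the $2n$ edges, building a union--find over them, and relabelling the $m$ edges of $G$ all take $O(m)$ time, so Stage~1 runs in $O(m)$ time. Two properties of $H$ must be established: (i) the number of connected components of $H$, which is exactly $n_{\widehat G_1}$, is $O(n/\delta)$; and (ii) any fixed non-trivial $(2-\eps)$-min-cut of $G$ survives the contraction --- i.e.\ is disjoint from $H$ --- with probability at least some constant $p_\eps>0$. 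Stage~2 then sparsifies $\widehat G_1$ to $O(n)$ edges, and a union bound finishes the proof.

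\textbf{Survival of a fixed cut (property (ii)).} Fix a non-trivial $(2-\eps)$-min-cut $C(S)$; non-triviality means $|S|\ge 2$ and $|V\setminus S|\ge 2$. For a vertex $v$ let $d_C(v)$ be the number of cut edges at $v$. The crucial point is that no vertex can have too large a fraction of its edges crossing the cut: for $v\in S$, the set $S':=S\setminus\{v\}$ is still a legal cut (as $|S|\ge 2$), so $\lambda \le |C(S')| = |C(S)| + d(v) - 2\,d_C(v)$; combining with $|C(S)|\le (2-\eps)\lambda$ and $\lambda\le\delta\le d(v)$ gives $d_C(v) \le (1-\eps/2)\,d(v)$, and the same holds for $v\in V\setminus S$ by symmetry. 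Now set $x_v := d_C(v)/d(v) \le 1-\eps/2$. The cut survives exactly when no sampled edge is a cut edge, which happens iff every vertex $v$ avoids in both of its samples all cut edges at $v$; these per-vertex events are independent, the one at $v$ having probability $(1-x_v)^2$, so using $\ln(1-x)\ge -x/(1-x)$,
\[
\Pr[\,C(S)\text{ survives}\,] \;=\; \prod_{v} (1-x_v)^2 \;\ge\; \exp\!\left(-\,\frac{2}{1-\max_v x_v}\sum_{v} x_v\right) \;\ge\; \exp\!\left(-\,\frac{4}{\eps}\sum_{v} x_v\right).
\]
Finally $\sum_v x_v \le \sum_v d_C(v)/\delta = 2\,|C(S)|/\delta \le 2(2-\eps)\lambda/\delta \le 2(2-\eps)$, so the cut survives with probability at least $p_\eps := \exp\!\big(-8(2-\eps)/\eps\big)>0$.

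\textbf{The vertex count (property (i)) --- the main obstacle.} Letting $K(v)$ denote the component of $v$ in $H$, the number of components equals $\sum_v 1/|K(v)|$, so it is enough to prove $\E[1/|K(v)|] = O(1/\delta)$ for each $v$; since $\E[1/|K(v)|] \le \Pr[\,|K(v)| < \delta/c\,] + c/\delta$, this reduces to $\Pr[\,|K(v)| < \delta/c\,] = O(1/\delta)$ for a suitable constant $c$. I would prove this by exploring $K(v)$ vertex by vertex: each newly reached vertex contributes two fresh, uniformly random neighbours, and as long as fewer than $\delta/c$ vertices have been discovered, each such neighbour is a brand-new vertex with probability at least $1-1/c$; hence the exploration stochastically dominates a supercritical branching process, and a standard bound on the probability that such a process dies out (or fails to reach size $\delta/c$) yields the claimed $O(1/\delta)$. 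I expect this step to be the real difficulty: the naive union bound over all candidate ``closed'' vertex sets already fails once $\delta = o(\sqrt n)$, and running the exploration honestly forces one to account for the sampled edges pointing \emph{into} the discovered set (not just those pointing out), much as in classical analyses of random $k$-out graphs. Granting the estimate, Markov's inequality gives $n_{\widehat G_1} \le C_\eps\,n/\delta = O(n/\delta)$ with probability at least $1-\tfrac12 p_\eps$.

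\textbf{Stage~2 and conclusion.} The multigraph $\widehat G_1$ may still carry up to $m$ edges, so on it I compute (deterministically, in $O(m)$ time, e.g.\ via a Nagamochi--Ibaraki maximal-adjacency scan) a sparse connectivity certificate $\widehat G\subseteq\widehat G_1$ that preserves every cut of $\widehat G_1$ of size at most $2\delta$ and has at most $2\delta\cdot n_{\widehat G_1}$ edges. On the event $n_{\widehat G_1}=O(n/\delta)$ this is $O(n)$ edges; moreover any non-trivial $(2-\eps)$-min-cut of $G$ that survived Stage~1 appears in $\widehat G_1$ as a cut of size $\le (2-\eps)\lambda \le 2\delta$ (using $\lambda\le\delta$), hence is preserved in $\widehat G$ as well. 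Intersecting the two events, with probability at least $p_\eps - \tfrac12 p_\eps = \tfrac12 p_\eps > 0$ the final graph $\widehat G$ has $O(n/\delta)$ vertices, $O(n)$ edges, and still contains the fixed cut; and everything was done in $O(m)$ time. (If one prefers, Stage~2 can instead be done by contracting a uniformly random subset of the edges, at the cost of making it randomized; the certificate keeps it deterministic.)
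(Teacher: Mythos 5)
Your overall decomposition is exactly the paper's: Stage 1 is a random $2$-out contraction, Stage 2 is a Nagamochi--Ibaraki--type sparse certificate with $k=2\delta$ (the paper's \Cref{lem:NI}), and your cut-survival calculation in Stage 1 matches the paper's \Cref{l:1_out_lower_bound} almost line for line (including the bound $d_C(v)\le(1-\eps/2)d(v)$ obtained by swapping $v$ to the other side, and the $\sum_v x_v \le 2(2-\eps)$ estimate, giving the same constant $p_\eps = e^{-8(2-\eps)/\eps}$). Stage 2 is also fine, and the optional random-sampling variant is the paper's \Cref{lem:random_edge_reduction}.

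The gap is in property (i), and you have correctly located it but not closed it. Reducing to $\Pr[\,|K(v)|<\delta/c\,]=O(1/\delta)$ is a reasonable target, but the argument you sketch --- ``the exploration stochastically dominates a supercritical branching process, so a standard extinction bound gives $O(1/\delta)$'' --- does not deliver that bound. A Galton--Watson process with offspring $\mathrm{Binomial}(2,1-1/c)$ has a \emph{constant} extinction probability (for example, $1/9$ when $c=4$), independent of $\delta$; domination by such a process therefore only yields $\Pr[\,|K(v)|<\delta/c\,]=O(1)$, which makes $\E[1/|K(v)|]=\Theta(1)$ and kills the $O(n/\delta)$ bound on the component count. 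What saves the day, and what the paper's proof of \Cref{l:node_compression_number_of_components} actually uses, is that the catch probability is \emph{size-dependent}: while the explored set has $j$ vertices, each fresh sample is caught with probability at most $(j-1)/\delta$, which is tiny (not merely $\le 1/c$) for small $j$. The paper exploits this by observing that terminating at size $x$ requires $x+1$ catches among $2x$ samples, each with probability at most $(x-1)/\delta$, bounding the event by $P_x=\binom{2x}{x+1}\bigl((x-1)/\delta\bigr)^{x+1}$, and then showing $P_{x+1}/P_x\le 1/2$ for $x\le\delta/(8e^3)$, so $\sum_{x\le \delta/(8e^3)}P_x\le 8/\delta^3$. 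That $O(1/\delta^3)$ per-exploration bound (not just $O(1/\delta)$) is then fed into a domination-by-independent-Bernoullis argument and a Chernoff bound over the $\le n$ phases, which is why the paper's \Cref{t:node_compression} gets $O(n/\delta)$ components with high probability rather than only in expectation. Your instinct that ``this step is the real difficulty'' and that the naive union bound over closed vertex sets fails is right; what is missing is replacing the crude constant-parameter branching-process domination by a bound that actually uses the $j/\delta$ catch probability.
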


\paragraph{Outline of the Contraction Process of \Cref{t:main_theorem}}
    Our contraction process has two parts. The first part is contracting a random $2$-out and, as formally stated in \Cref{t:node_compression}, we show that this step reduces the number of vertices to $O(n/\delta)$. The second part, stated in \cref{lem:NI}, reduces the number of edges to $O(n)$. Furthermore, in \cref{s:mc_mpc} we provide an alternative approach to the second part, that has an efficient parallel implementation. Each of these processes preserves any particular non-singleton $(2-\eps)$ minimum cut with at least a constant probability. Hence, their composition has a constant probability of preserving that cut. After both contraction processes, we have $\bigO\paren{\frac{n}{\delta}} = \bigO(\frac{n}{\lambda})$ nodes and $\bigO\paren{\frac{n}{\delta} \cdot \lambda } = \bigO(n)$ edges. 

		We comment that for our distributed algorithm, we actually do not need the part about reducing the number of edges. However, instead, we desire and prove another nice property from the random out contractions: that the summation of the diameters of the $2$-out is $\bigO\paren{\frac{n \log \delta}{\delta}}$. In fact, we show in \Cref{sec:mc_congest} that by choosing a subset of the edges of the $2$-out, we can define $\tilde{\bigO}\paren{\frac{n}{\delta}}$ components, each with $O(\log^2 n)$ diameter (clearly, contracting this subset of $2$-out preserves any cut, if the full set of that $2$-out preserved it).   
		
		We next discuss the two parts of the contraction process for reducing the number of vertices and edges, separately, in the next two subsections. In the last subsection of this section, we discuss how we amplify to success to preserve all non singleton $(2-\eps)$ minimum cuts.

\vspace{-5pt}  
\subsection{Reducing the number of vertices} \label{s:node_compression}
Here, we propose and analyze an extremely simple contraction process: each node proposes $k$ randomly sampled incident edges and we contract all proposed edges. More formally, we contract all connected components of a subgraph obtained by a random selection of $k$ incident edges for each vertex of the graph (sampled independently from the original graph, with repetitions). We call this random subgraph a random \emph{$k$-out subgraph}, and we call the related contraction process a \emph{random $k$-out contraction}. We show that a random $2$-out contraction reduces the number of nodes to $O(n/\delta)$ whp, while preserving any fixed nontrivial nearly minimum cut with a constant probability.
%\motodo{Connect to k-out sampling in random graphs}.
%	
%  
%We analyze a random $k$-out subgraphs for $k \in \set{1,2,3}$, and show two properties for it: (A) The number of connected components in a random $3$-out subgraph is $\bigO\paren{\frac{n}{\delta}}$, \whp. 
%%This analysis is similar to the one that Ghaffari and Parter provide for $k=1$ \cite{GhaffariMSTLogStar}, however we exploit having $3$ edges per vertex to achieve significantly better bounds. 
%(B) We show for that any constant $\eps < 1$ and any fixed non-singleton $1+\eps$ minimum cut, a random $3$-out subgraph does not contain any edge from this cut with at least constant probability $p > 0$. Given these two properties, contracting all connected components of a random $3$-out subgraph gives a significant reduction of the number of nodes in the graph, while still having a constant probability of preserving any fixed non-singleton nearly minimum cut, hence allowing us to prove \Cref{t:node_compression} 
%	
%As a side remark, we note that in comparisons to this random out contraction process, if we would like to reduce the number of nodes to $\frac{n}{\lambda}$ via Karger's random contractions, the probability of preserving a fixed minimum cut is only $\bigO\paren{\frac{1}{\lambda^2}}$, at least according to known analysis \footnote{This analysis is known to be tight for multigraphs / weighted graphs but for simple graphs, perhaps one can prove a nicer bound.}.\motodo{We shouldn't leave this just as speculation; the reviewer's will dig into it.} 
 % 
  Next, we formalize the notion of random $k$-out contractions, and prove their properties.
  
  \subsubsection{Random $k$-out contractions}
  \begin{definition}
    Let consider a graph $G=(V,E)$. Let $I_j$, for $j \in \set{1, \dots, k}$, be a set of edges obtained by selecting for each node a uniformly random incident edge. 
    \begin{itemize}
      \item we call a graph $G^{(k)} = (V,\bigcup_{j=1}^k I_j)$ a random $k$-out subgraph of $G$
      \item a random $k$-out contraction of a graph $G$, is a multigraph obtained by contracting connected components of $G^{(k)}$
      \item we use the phrase ``cut $C(S)$ is preserved by the random $k$-out contraction" to indicate that $C(S) \cap \bigcup_{j=1}^k I_j = \emptyset$. 
    \end{itemize}
  \end{definition}
  
  \noindent Firstly, we show the following properties of random $2$-out contractions. 
  \begin{theorem}\label{t:node_compression}
    A random $2$-out contraction of a graph with $n$ vertices and minimum degree $\delta$ has $\bigO(\frac{n}{\delta})$ vertices, \whp, and preserves any fixed non singleton $(2-\eps)$ minimum cut, for any constant $\eps \in (0, 1]$, with some constant probability at least $p_\eps > 0$.
  \end{theorem}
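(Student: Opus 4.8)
The plan is to treat the two assertions separately: I would first establish the cut–preservation bound, which is essentially a clean calculation, and then the vertex–count bound, which is the delicate part.

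\smallskip
\noindent\textbf{Cut preservation.} Fix a non-trivial cut $C(S)$ — so both $|S|,|V\setminus S|\ge 2$ — with $|C(S)|\le(2-\eps)\lambda$, and for each $v\in V$ let $k_v$ be the number of cut edges incident to $v$, so that $\sum_{v\in V}k_v=2|C(S)|\le 2(2-\eps)\lambda\le 2(2-\eps)\delta$ (using $\lambda\le\delta$). Since the two sampled edges of each vertex are drawn independently, the cut is destroyed iff some vertex samples a cut edge, so it is preserved exactly with probability $\prod_{v\in V}(1-k_v/\deg v)^2$; it therefore suffices to bound this product below by a constant. The one structural input, and the only place where minimality of $\lambda$ enters, is that moving any single $v$ to the other side of $C(S)$ yields a (valid) cut of size $|C(S)|+\deg v-2k_v\ge\lambda$, hence $k_v\le\tfrac12(\deg v+(1-\eps)\lambda)$, and since $\lambda\le\delta\le\deg v$ this gives $1-k_v/\deg v\ge\eps/2$ for \emph{every} $v$. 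Now split the vertices with $k_v\ge 1$ according to whether $k_v\le\deg v/2$ or $k_v>\deg v/2$. On the first set $\ln(1-k_v/\deg v)\ge-2k_v/\deg v\ge-2k_v/\delta$, so the total log-contribution is at least $-2\sum_v k_v/\delta\ge-4(2-\eps)$. The second set has size at most $2|C(S)|/(\delta/2)\le 4(2-\eps)$, an absolute constant, and each of its vertices contributes at least $\ln(\eps/2)$. Summing, $\prod_{v}(1-k_v/\deg v)^2\ge e^{-16}(\eps/2)^{16}=:p_\eps>0$. Note this argument does not need $\delta$ large.

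\smallskip
\noindent\textbf{Vertex count.} The number of vertices of the contracted graph is the number of connected components of $G^{(2)}$. Components of size at least a fixed small constant fraction of $\delta$ number $O(n/\delta)$ deterministically, so it is enough to show that \whp\ there are $O(n/\delta)$ components below that threshold. I would first bound the expectation: a set $S$ with $|S|=s$ is a component iff it is \emph{closed} (no sampled edge leaves $S$) and the sampled edges inside $S$ span it. Closure alone gives $\Pr[S\ \text{closed}]\le\prod_{u\in S}(\deg_S(u)/\deg u)^2$, but summing this crudely over candidate $S$ is hopeless, since dense local gadgets create many closed (though rarely connected) sets. The real work is to enumerate pairs $(S,T)$ — $S$ a small component, $T$ a spanning tree of the sampled subgraph on $S$ — by a vertex-by-vertex exploration from a fixed root in which each newly discovered vertex is reached along a sampled edge of an already-processed vertex: the $1/\deg u$ weight of that sample cancels the $\deg u$ choices for which incident edge it is, and this cancellation is exactly what keeps the union bound convergent. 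Carried through, this should give $\E[\#\{\text{components of size }s\}]\le \tfrac n\delta(Cs/\delta)^s$ for an absolute constant $C$, and summing this geometric series over $s$ yields $O(n/\delta)$ in expectation. Upgrading to a high-probability statement is then routine, most robustly via a higher-moment/union-bound computation over collections of disjoint small closed connected sets using the same per-set estimate (a plain bounded-differences inequality is not quite enough when $\delta$ is close to $n$, where the target is sublinear in $\sqrt n$).

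\smallskip
\noindent\textbf{Main obstacle.} The delicate step, and where I expect essentially all the effort to go, is that exploration: sampled edges are effectively oriented away from the vertex that chose them, so a purely forward exploration from the root need not reach an entire connected component — a vertex both of whose two samples point back into the already-explored set gets missed. Doing the exploration correctly (for instance a two-phase exploration, or an accounting that charges each missed vertex to one of the $\le 2$ out-samples of one of its neighbors, together with tracking how a spanning tree's $s-1$ edges are distributed among the $2s$ available samples) is the technical heart; the rest is bookkeeping. One may freely assume throughout that $\delta$ exceeds a constant depending on $\eps$, since for $\delta=O(1)$ the bound $O(n/\delta)=\Omega(n)$ on the number of vertices is vacuous and the cut bound above already holds unconditionally.
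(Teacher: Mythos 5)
Your cut-preservation argument is correct and takes a genuinely different route from the paper's. The paper reduces to a $1$-out contraction (taking $p_\eps = q_\eps^2$) and then bounds $\prod_{v\in N(S)}(1-c(v)/d(v))$ via the inequality $1-x > (e^{-1/(1-y)})^x$, whereas you bound the $2$-out product $\prod_v (1-k_v/\deg v)^2$ directly by splitting into vertices with $k_v\le\deg v/2$ (where $\ln(1-t)\ge -2t$ suffices) and the $O(1)$ many vertices with $k_v>\deg v/2$ (where the uniform bound $1-k_v/\deg v\ge\eps/2$ takes over). Both arguments lean on the same two facts — $\sum_v k_v/\deg v\le 2(2-\eps)$ and the $(2-\eps)$-minimality-derived bound $k_v/\deg v\le 1-\eps/2$ — but yours avoids the paper's auxiliary claim and is somewhat more elementary. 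Your numerics ($p_\eps = e^{-16}(\eps/2)^{16}$) check out.

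The vertex-count half, however, is a proof plan rather than a proof. You correctly identify both the target ($\E[\#\{\text{components of size }s\}]\lesssim (n/\delta)(Cs/\delta)^s$, summable in $s$) and the central obstacle (samples are effectively oriented, so forward exploration from a root can terminate without exhausting the component when a vertex's two samples both point back into the explored set), but you explicitly defer resolving it — ``Doing the exploration correctly \ldots\ is the technical heart; the rest is bookkeeping.'' That is precisely where the paper's work lies. The paper sidesteps the issue you raise rather than solving it head-on: its phase process grows an active set $A_v$ only along forward samples, \emph{accepts} that $A_v$ may be a strict subset of its true component, and arranges the accounting (the processed/active/unprocessed states and the counter $\kappa$) so that this over-counting is harmless — each genuine small component is charged at most once, because later phases that hit an already-processed vertex do not increment $\kappa$. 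The quantitative core is the bound $P_x=\binom{2x}{x+1}((x-1)/\delta)^{x+1}$ on $\Pr[|A_v|=x]$, together with $P_{x+1}/P_x\le 1/2$ for $x\le\delta/(8e^3)$ and a stochastic-domination-plus-Chernoff step (split into $\delta\le\sqrt n$ and $\delta>\sqrt n$, matching your remark that bounded differences alone would not suffice when $\delta$ is near $n$). Until the exploration and its union bound are actually carried through, the vertex-count claim is not established by your proposal.
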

  \begin{proof}[Proof Outline]
    The proof consists of two parts, which are presented in two separate subsubsections. First, in \Cref{s:number_of_components}, we prove that the number of nodes after contractions is $\bigO\paren{\frac{n}{\delta}}$, whp. Then, in \Cref{s:preserving_cut}, we show that a random $1$-out contraction preserves a fixed non-singleton $(2-\eps)$ minimum cut with probability that is at least some positive constant $q_{\eps}>0$. This implies that a $2$-out contraction (which is simply $2$ independent $1$-out contractions) has probability of preserving this cut at least $p_\eps = q_\eps^2>0$.
  \end{proof}
  
  \subsubsection{Number of nodes after 2-out contraction} \label{s:number_of_components}
	
      In this part of proving \Cref{t:node_compression}, we bound the number of nodes after a $2$-out contraction. 
      \begin{lemma}\label{l:node_compression_number_of_components}
        The number of connected components in a random $2$-out subgraph of a simple graph with $n$ vertices and the minimum degree $\delta$ is $\bigO\paren{\frac{n}{\delta}}$, \whp.
      \end{lemma}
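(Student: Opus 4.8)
The plan is to restate the lemma as a bound on the number of connected components of the random $2$-out subgraph $H := G^{(2)} = (V, I_1\cup I_2)$, and to prove that bound by a BFS/branching-process exploration together with a concentration step. It is convenient to think of each of the two sampled edges of a vertex $v$ as an \emph{out-dart} of $v$, each landing at a uniformly random neighbour of $v$ in $G$ (independently), so $H$ is the undirected graph underlying these $2n$ darts.

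Classify the components of $H$ by size against a threshold $\delta/c$ for a suitable constant $c$. At most $n/(\delta/c) = O(n/\delta)$ components have $\ge \delta/c$ vertices, so it remains to bound the number of components of size $< \delta/c$. For this, I would run from each vertex $v$ the exploration of $H$ that follows only out-darts: keep a discovered set $C$, and repeatedly process an unprocessed vertex of $C$ by revealing its two out-darts and adding to $C$ any dart endpoint not yet present. Let $C^-_v$ be the set discovered when this halts; since $C^-_v$ is contained in the $H$-component of $v$, the vertex $v$ lies in a small component only if $|C^-_v| < \delta/c$. Now, as long as $|C| \le \delta/c$, a freshly revealed out-dart (a uniform neighbour of its owner) fails to discover a new vertex only if it lands in the $\le \delta/c$ already-discovered vertices, i.e.\ with probability at most $1/c$. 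Reaching $|C^-_v| = s$ forces exactly $s+1$ of the $2s$ revealed out-darts to fail in this sense, so by stochastic domination of the failure indicators by independent $\mathrm{Bernoulli}(s/\delta)$ variables, $\Pr[\,|C^-_v| = s\,] \le \binom{2s}{s+1}(s/\delta)^{s+1}$; summing over $2 \le s < \delta/c$ (the $s=2$ term dominates) gives $\Pr[\,|C^-_v| < \delta/c\,] = O(1/\delta^2) = o(1/\delta)$. Hence the expected number of vertices in small components, and therefore the expected number of small components, is $\sum_v \Pr[\,|C^-_v| < \delta/c\,] = o(n/\delta)$, and $\E[\#\text{components of }H] = O(n/\delta)$.

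It remains to turn the expectation bound into a high-probability bound. Regarding $\#\text{components}(H)$ as a function of the $n$ independent per-vertex choices (the choice of $v$ being its pair of sampled edges), resampling one vertex's choice changes the number of components by $O(1)$ (removing a vertex's two darts splits its component into $O(1)$ pieces, and re-adding two darts merges $O(1)$ pieces). So McDiarmid's inequality gives $\#\text{components}(H) \le \E[\#\text{components}(H)] + O(\sqrt{n\log n}) = O(n/\delta) + O(\sqrt{n\log n})$ with high probability, which already proves the lemma whenever $\delta = \tilde O(\sqrt n)$. For denser graphs, where $n/\delta$ is below this additive error, I would instead bound $\Pr[\#\text{components}(H) > C n/\delta]$ directly by a union bound over vertex sets $U$ that are exact unions of $H$-components of a controlled size, using the estimate $\Pr[U\text{ is a union of components}] = \prod_{v\in U}(d_U(v)/\deg v)^2 \cdot \prod_{w\notin U}(1-d_U(w)/\deg w)^2$, the point being that the second product (darts from outside not entering $U$) is what makes this probability small enough to beat the number of candidate sets.

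I expect this last step — the high-probability upgrade over the \emph{entire} range of $\delta$ — to be the main obstacle. The naive union bound over all size-$u$ subsets of $V$ is hopeless: a crude per-set bound such as $c^{-2u}$ for ``no out-dart leaves $U$'' is far smaller than the number $\binom{n}{u}$ of candidate sets, so one genuinely has to exploit structure — either the branching structure of the out-darts to control how $C^-_v$ grows (as above), or, in the dense regime, the darts-entering-$U$-from-outside factor, which forces the relevant $U$'s to be large and essentially clique-like and thereby few. Stitching the moderate-$\delta$ concentration argument and the dense-$\delta$ union bound into one clean proof of the $O(n/\delta)$ bound that holds with high probability for all $\delta$ is the delicate part.
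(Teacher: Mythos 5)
Your exploration estimate is essentially the paper's: revealing the two out-darts of each discovered vertex, noting that stopping at size $s$ forces $s+1$ of the $2s$ darts to be ``caught'' inside the current set, and bounding this by $\binom{2s}{s+1}(s/\delta)^{s+1}$, summed over $s$, is exactly the paper's bound $P_x=\binom{2x}{x+1}((x-1)/\delta)^{x+1}$ (which in fact gives $O(1/\delta^{3})$, not just $O(1/\delta^{2})$). The gap is in the high-probability upgrade, which is the actual content of the lemma. Your McDiarmid step is fine but only yields an additive error $O(\sqrt{n\log n})$, so it covers roughly $\delta=O(\sqrt{n/\log n})$; for denser graphs you offer only a sketch (a union bound over vertex sets $U$ that are unions of components), and you yourself flag it as the unresolved ``delicate part.'' As stated, that sketch does not close: for a candidate component of size $s$ the ``inside'' factor is at most $((s-1)/\delta)^{2s}$ and the ``outside'' factor is only about $\exp(-\Theta(s(\delta-s)/n))$, both merely exponential in $s$, while the number of candidate sets grows like $n^{s}$; for $s$ of order $\delta/c$ in the dense regime the union bound diverges, and aggregating over many components with mixed sizes does not rescue it. So the proposal proves the expectation bound and the whp statement for moderately dense graphs, but not the lemma as stated for all $\delta$.

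The paper avoids this entirely by organizing the exploration \emph{sequentially into phases}, one phase per newly started component: conditioned on the whole history, the probability that a phase terminates in a new component of size below $x^*=\delta/(8e^{3})$ is at most $8/\delta^{3}$ (your same branching estimate, applied once per phase rather than once per vertex). The phase indicators are then stochastically dominated by i.i.d.\ $\mathrm{Bernoulli}(8/\delta^{3})$ variables, and a Chernoff bound finishes uniformly in $\delta$: for $\delta\leq\sqrt n$ the count of small components exceeds $n/\delta$ with probability $\exp(-\Theta(\sqrt n))$, and for $\delta\geq\sqrt n$ the expected count is at most $8n/\delta^{3}\leq 8/\sqrt n$, so the multiplicative (Poisson-type) tail gives $O(1)$ small components whp --- precisely the dense regime where your additive concentration is too coarse. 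Components of size at least $x^*$ are trivially at most $n/x^*=O(n/\delta)$, as in your proposal. If you reorganize your per-vertex explorations into this phase structure and replace McDiarmid by the domination-plus-Chernoff argument, your proof goes through for the full range of $\delta$ with no union bound over vertex sets at all.
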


      \begin{proof}
      We consider a random process that starts with a graph $G'=(V, \emptyset)$ and gradually adds to $G'$ the edges of a random $2$-out subgraph $G^{(2)}$. During this process, each vertex can be in one of three states: \emph{processed}, \emph{active}, or \emph{unprocessed}. The process is performed in phases. Each phase starts in an arbitrary unprocessed vertex $v$  (marked as \emph{active}) and builds a set of vertices reachable from $v$ (which also became \emph{active} as long as they are not \emph{processed}) by the random edges of active vertices. During the process, we maintain a counter $\kappa$ that is incremented only at the end of a phase that creates a new connected component in $G^{(2)}$, and only if that component is smaller than some threshold value $x$. The final bound on the number of connected components is the final value of $\kappa + \frac{n}{x}$.
     Basically, \emph{processed} vertices are the vertices that are in connected components that are taken into account either in the counter $\kappa$ or they are among at most $\frac{n}{x}$ components of size at least $x$. Thus, if at the end of a phase we can reach a \emph{processed} vertex from $v$, we did not create a connected component that is not already included in the bound, and we do not increase the counter $\kappa$.
      
      Let us consider following random variables $X_i$: if phase $i$ ended with increasing $\kappa$, $X_i = 1$, otherwise $X_i = 0$. In other words, $X_i$ takes value $1$ only if the $i$th phase ends with creating new connected component that is smaller than $x$.  
      Furthermore, if at the end of the phase, we can reach a processed vertex from $v$ then $X_i = 0$, therefore $$\Prob{X_i = 1 | X_1 X_2 \dots X_{i-1}} \leq \Prob{X_i = 1 | X_1 X_2 \dots X_{i-1} \wedge \text{all processed vertices not reachable from } v}.$$ 
      
			\smallskip
      \paragraph{Bound on the probability of incrementing counter $\kappa$}     
      Let $A_v$ denotes the set of active vertices in the current phase that started in $v$. We start with $A_v=\{v\}$, and then we repeat the following sampling events, one by one. If there is a vertex $u$ from $A_v$ where we have not generated all its sample edges, we pick the first such vertex $u$ added in $A_v$ and generate its next sample edge $(u,w)$ uniformly at random among the at least $\delta$ edges incident to $u$. We say that the sample is ``\emph{caught}'' if $w\in A_v$. We terminate when there are no more samples to do from $A_v$ and mark all vertices from $A_v$ as \emph{processed}.
      
      If $A_v$ ends up at final size $x$, we know we have performed exactly $2x$ sampling events, and that the samples were caught exactly $x+1$ times. The probability that a given sample is caught is at most $(x-1)/\delta$. The order in which we generate the samples is completely defined in the above process. There are ${2x\choose x+1}$ choices for placing the $x+1$ caught samples among all samples. The probability that we get this particular sequence of caught and not-caught samples is bounded by the probability that the subsequence of $x+1$ samples that are supposed to get caught actually get caught. This happens with probability at most $((x-1)/\delta)^{x+1}$. We conclude that the probability that we terminate with $|A_v|=x$ is upperbounded by $P_x={2x\choose x+1} ((x-1)/\delta)^{x+1}$. Then $P_1=0$ and $P_2=4/\delta^3$. Moreover, for $x\geq 2$, we have 
  
      \begin{align*}
        P_{x+1}/P_x&=\frac{{2x+2\choose x+2} (x/\delta)^{x+2}}{{2x\choose x+1} ((x-1)/\delta)^{x+1}}
        =\frac{(2x+2)(2x+1)}{(x+2)x} \left(\frac{x}{x-1}\right)^{x+1}\frac{x}{\delta}
        < 4 e^{\frac{x+1}{x-1}}\frac{x}{\delta}
        \leq \frac{4 e^3x}{\delta}.
      \end{align*}
  
      The last expression is bounded by $1/2$ for $x\leq x^*=\delta/(8e^3)$. Thus the probability that we end up with $|A_v|\leq x^*$ is bounded by
  
      \[P_{\leq x^*}=\sum_{x=2}^{x^*}P_x<2P_2\leq 8/\delta^3.\]
      
			\smallskip
      \paragraph{Value of $\kappa$ at the end of the process}    
    We have $$\Prob{X_i = 1 | X_1 \dots X_{i-1}}  \leq \Prob{X_i = 1 | X_1 \dots X_{i-1} \wedge \text{all processed vertices not reachable from } v} \leq  8/\delta^3 .$$ Hence, variables $ X_i | X_1 X_2 \dots X_{i-1} $ are stochastically dominated by independent random variables $Y_i$ that take value $1$ with probability $8/\delta^3$. We can conclude that $\Prob{\sum\limits_i X_i \leq \mu} < \Prob{\sum\limits_i Y_i > \mu}$ ~\cite[Lemma 8.7]{DBLP:journals/corr/abs-1801-06733}. By a Chernoff bound, for any $\mu \geq 8n/\delta^3 \geq \expval{\sum\limits_i Y_i}$, and for $\eps < 1$ we have
      
      \begin{equation}\label{eq:number}
      \Prob{\sum\limits_i X_i > (1+\eps)\mu} < \Prob{\sum\limits_i Y_i > (1+\eps)\mu} \leq \exp{\paren{-\frac{\mu\eps^2}{3}}}.
      \end{equation}
      while for $\eps \geq 1$ we have 
      \begin{equation}\label{eq:number2}
      \Prob{\sum\limits_i X_i > (1+\eps)\mu} < \Prob{\sum\limits_i Y_i > (1+\eps)\mu} \leq \paren{\frac{e^{\eps}}{(1+\eps)^{1+\eps}}}^{\mu}
      \end{equation}
      
      For $\delta \leq \sqrt{n}$, by \cref{eq:number}, the value of $\kappa$ is larger than $\frac{n}{\delta}$, with probability at most $\exp{\paren{-\Theta(\sqrt{n})}}$. For $\delta \geq \sqrt{n}$, $n / \delta^3 \leq 1/\sqrt n$, hence we can apply \cref{eq:number2} with $\mu = 1/\sqrt n $ and $1+\eps = \sqrt{n}$, which gives that the value of $\kappa$ is larger than some constant $c$ (which is always $\bigO(\frac{n}{\delta})$) with probability at most $(e/n)^{\frac{c}{2}}$.
      Therefore, for any $x \leq x^*$ the number of connected components of $G^{(2)}$ smaller than $x$ is $\bigO(\frac{n}{\delta})$, \whp. Thus, for $x=x^*$, we get the number of all connected components of $G^{(2)}$ is $\bigO(\frac{n}{\delta}) + \bigO(\frac{n}{x^*}) = \bigO(\frac{n}{\delta})$.
			\end{proof}
      
  \subsubsection{Preserving a fixed non singleton minimum cut}\label{s:preserving_cut}
    In this part of the proof of \Cref{t:node_compression}, we analyze the probability of preserving a fixed non singleton minimum cut $C(S)$. Before that, we recall a small helper inequality: 
		
		\begin{claim} \label[claim]{l:inequality}
        For any $x$ and $y$ such that $0 < x \leq y < 1$, we have $1-x > \paren{e^{-\frac{1}{1-y}}}\up{x}$
      \end{claim}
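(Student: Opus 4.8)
The plan is to rewrite the right-hand side as $\left(e^{-1/(1-y)}\right)^x = e^{-x/(1-y)}$ and then reduce the two-variable statement to the one-variable case $y=x$. Since $0 < x \leq y < 1$ we have $1-y \leq 1-x$, hence $\frac{x}{1-y} \geq \frac{x}{1-x}$, and therefore $e^{-x/(1-y)} \leq e^{-x/(1-x)}$. So it suffices to establish the strict bound $1-x > e^{-x/(1-x)}$ for every $x \in (0,1)$.

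For that last step I would invoke the elementary inequality $1+z < e^z$, which is strict for all $z \neq 0$. Applying it with $z = \frac{x}{1-x} > 0$ gives $\frac{1}{1-x} = 1 + \frac{x}{1-x} < e^{x/(1-x)}$, and taking reciprocals yields $e^{-x/(1-x)} < 1-x$. Chaining this with the monotonicity observation above gives $e^{-x/(1-y)} \leq e^{-x/(1-x)} < 1-x$, which is exactly the claim. (An equivalent route: set $t = 1-x \in (0,1)$ and note that $f(t) = \tfrac{1}{t} - 1 + \ln t$ satisfies $f(1)=0$ and $f'(t) = \tfrac{t-1}{t^2} < 0$ on $(0,1)$, so $f(t)>0$ there, i.e. $\ln(1-x) > -\tfrac{x}{1-x}$; exponentiating gives the same bound.)

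There is no real obstacle here — this is a routine calculus fact — and the only point requiring a little care is keeping the inequality strict: strictness comes entirely from $1+z < e^z$ being strict at $z = \frac{x}{1-x} \neq 0$, whereas the step $e^{-x/(1-y)} \leq e^{-x/(1-x)}$ needs only to be non-strict (it is an equality precisely when $x=y$).
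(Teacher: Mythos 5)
Your proof is correct and follows essentially the same route as the paper: reduce to the $y=x$ case by the monotonicity $1-y\leq 1-x$, then establish the one-variable bound $1-x>e^{-x/(1-x)}$. The only cosmetic difference is the source of that bound — the paper invokes $(1-\tfrac{1}{\alpha})^{\alpha-1}>e^{-1}$ for $\alpha>1$ and substitutes $\alpha=1/x$, while you invoke $1+z<e^z$ with $z=\tfrac{x}{1-x}$; taking logarithms, these are the same inequality $\ln(1-x)>-\tfrac{x}{1-x}$.
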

      \begin{proof}
        This inequality follows from the fact that for any $\alpha > 1$, $(1-\frac{1}{\alpha})^{\alpha-1} > e^{-1}$. If we set $x = \frac{1}{\alpha}$, we have $(1 - x)^{\frac{1}{x}-1} > e^{-1}$, which implies $1-x > e^{-\frac{1}{1-x}x} \geq \paren{e^{-\frac{1}{1-y}}}\up{x}$.
      \end{proof}
			
    \begin{lemma}
      \label{l:1_out_lower_bound}
      Probability of preserving a fixed non singleton $(2-\eps)$ minimum cut $C(S)$, for any constant $\eps \in (0, 1]$, by a random $1$-out contraction is at least some constant $q_\eps > 0$.
    \end{lemma}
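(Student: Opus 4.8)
The plan is to compute the preservation probability exactly as a product over the vertices and then lower‑bound it using a structural property of small non‑singleton cuts. In a random $1$‑out each vertex $v$ independently picks a uniformly random incident edge, and $C(S)$ is preserved precisely when no vertex picks one of the $d_C(v)$ cut edges incident to it; writing $d(v)$ for the degree of $v$, this means
\[
\Prob{C(S)\text{ is preserved}}=\prod_{v\in V}\paren{1-\frac{d_C(v)}{d(v)}}.
\]
Because $\sum_{v\in V}d_C(v)=2\card{C(S)}\le 2(2-\eps)\lambda\le 2(2-\eps)\delta$ and $d(v)\ge\delta$ for every $v$, we also have $\sum_{v\in V}d_C(v)/d(v)\le 2(2-\eps)$. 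So if every factor $1-d_C(v)/d(v)$ were bounded away from $0$ by a constant, \Cref{l:inequality} would immediately convert this bound on the sum into a constant lower bound on the product; the entire difficulty is to rule out a vertex almost all of whose incident edges cross the cut.

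The structural heart is the claim that, for a non‑singleton $(2-\eps)$‑min‑cut $C(S)$, every vertex has at least $\eps\lambda/2$ neighbors on its own side. Fix $v\in S$ and let $r$ be the number of neighbors of $v$ inside $S$, so $d_C(v)=d(v)-r$. Since $\card S\ge 2$, the set $S\setminus\{v\}$ still defines a non‑trivial cut, and comparing the edge sets of $C(S)$ and $C(S\setminus\{v\})$ — which differ only in that the $r$ edges from $v$ into $S$ become cut edges while the $d_C(v)$ edges from $v$ leaving $S$ stop being cut edges — yields $\card{C(S)}=\card{C(S\setminus\{v\})}+d_C(v)-r$. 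Using $\card{C(S\setminus\{v\})}\ge\lambda$, $d(v)\ge\delta\ge\lambda$, and $\card{C(S)}\le(2-\eps)\lambda$, this gives $2\lambda-2r\le\card{C(S)}\le(2-\eps)\lambda$, hence $r\ge\eps\lambda/2$; the symmetric argument inside $V\setminus S$ (which is also non‑singleton) covers vertices on the other side. It follows that no vertex can be ``too heavy'': if some $v$ had $d_C(v)>(1-\eps/4)\,d(v)$, then $r=d(v)-d_C(v)<(\eps/4)\,d(v)$ together with $r\ge\eps\lambda/2$ would force $d(v)>2\lambda$, whence $d_C(v)>(1-\eps/4)\cdot 2\lambda=(2-\eps/2)\lambda>(2-\eps)\lambda\ge\card{C(S)}\ge d_C(v)$, a contradiction. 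Therefore $d_C(v)/d(v)\le 1-\eps/4$ for every $v$.

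Now every factor satisfies $0\le d_C(v)/d(v)\le 1-\eps/4<1$, so applying \Cref{l:inequality} with $y=1-\eps/4$ to each vertex incident to the cut gives
\[
\prod_{v\in V}\paren{1-\frac{d_C(v)}{d(v)}}>\prod_{v\in V}\left(e^{-4/\eps}\right)^{d_C(v)/d(v)}=\left(e^{-4/\eps}\right)^{\sum_{v\in V}d_C(v)/d(v)}\ge e^{-4(4-2\eps)/\eps}=:q_\eps,
\]
which is a positive constant depending only on $\eps$, as required.

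The only genuinely delicate step is the structural claim; everything else is bookkeeping. Within it, the points that need care are checking that $S\setminus\{v\}$ is a legitimate non‑trivial cut — this is exactly why we need $C(S)$ to be non‑singleton on \emph{both} sides — getting the edge‑count identity $\card{C(S)}=\card{C(S\setminus\{v\})}+d_C(v)-r$ exactly right, correctly invoking $\delta\ge\lambda$, and checking the smallest cases such as $\card S=2$ (where the claim degenerates to the statement that, once $\lambda>2/\eps$, there is simply no non‑singleton $(2-\eps)$‑min‑cut with a two‑vertex side).
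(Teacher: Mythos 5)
Your proof is correct and matches the paper's argument in all essentials: both bound each factor $1-d_C(v)/d(v)$ away from zero by comparing $\card{C(S)}$ with $\card{C(S\setminus\{v\})}$ (using $\card{C(S\setminus\{v\})}\geq\lambda$, $\card{C(S)}\leq(2-\eps)\lambda$, and $d(v)\geq\delta\geq\lambda$), and then invoke \Cref{l:inequality} together with $\sum_v d_C(v)/d(v)\leq 2(2-\eps)$ to lower-bound the product. The only cosmetic difference is that you package the intermediate step as a lower bound $r\geq\eps\lambda/2$ on same-side neighbors plus a short contradiction, arriving at $d_C(v)/d(v)\leq 1-\eps/4$, whereas the paper rearranges $2c(v)<(1-\eps)\lambda+d(v)$ directly to get the slightly sharper $1-\eps/2$; your resulting constant $q_\eps$ is smaller but equally valid.
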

    \begin{proof} 
      Let us denote by:
      \begin{itemize}
        \item $c(v)$ the number of edges incident to $v$ that are in $C(S)$,
        \item $d(v)$ the degree of a node $v$,
        \item $N(S)$ the set of nodes incident to the edges in $C(S)$. 
      \end{itemize}
      
      %Firstly, we show that for a single node, the probability of supplying an edge from $C(S)$ to $I_1$ is bounded by a constant smaller than $1$ (Lemma \ref{l:upper_bound}) and recall a useful inequality (Lemma \ref{l:inequality}).

\noindent The probability that a random $1$-out contraction does not contract any edge from $C(S)$ is 
$$ \prod\limits_{v \in N(S)} \paren{1 - \frac{c(v)}{d(v)}}. $$ 
To analyze this expression, we first argue that for any node $v \in N(S)$, $\frac{c(v)}{d(v)} \leq x < 1$, for some constant $x$. For that, let us denote by $\alpha = \card{C(S)}$. Then, the size of a cut defined by the set of vertices $S \setminus \set{v}$ is $\alpha + d(v) - 2c(v)$. On the one hand, we have that $C(S)$ is $(2-\eps)$ minimum cut, hence $\card{C(S \setminus \set{v})} \leq (2-\eps)\lambda + d(v) - 2c(v)$. On the other hand, $\card{C(S \setminus \set{v})} \geq \lambda$. Putting those two things together gives that $(1-\eps) \lambda + d(v) > 2c(v)$, which implies that $\frac{c(v)}{d(v)} \leq \frac{(1-\eps) \lambda + 1}{2d(v)} \leq 1-\frac{\eps}{2} $.

Now, we are ready to analyze $ \prod\limits_{v \in N(S)} \paren{1 - \frac{c(v)}{d(v)}}$, which is the probability that a random $1$-out contraction does not contract any edge from $C(S)$. We know that for each $v$ value of $\frac{c(v)}{d(v)}$ is upper bounded by $1-\eps/2 < 1$, hence we can use $\frac{c(v)}{d(v)}$ as $x$ and $1-\frac{\eps}{2}$ as $y$ in \Cref{l:inequality}. Let $z = e^{-\frac{1}{1-y}} = e^{-\frac{1}{\frac{\eps}{2}}}$. Notice that for any $\eps \in (0, 1]$, we have $z\in (0, e^{-2}] \subseteq (0, 0.2)$. Then, we have: 
      
      $$ \prod\limits_{v \in N(S)} \paren{1 - \frac{c(v)}{d(v)}} > z^{\sum\limits_{v \in N(S)}\frac{c(v)}{d(v)}}$$
      Furthermore, since a degree of a vertex can not be smaller than $\lambda$, we can conclude that the probability that we do not contract any edge from $C(S)$ is at least 
      $$z^{\sum\limits_{v \in N(S)}\frac{c(v)}{d(v)}} \geq z^{\sum\limits_{v \in N(S)}\frac{c(v)}{\lambda}} = z\up{2(2-\eps) \lambda / \lambda} = z\up{2(2-\eps)} = e^{-4(2-\eps)/\eps} > 0.
			$$
			\vspace{-9pt}
    \end{proof}
  
    %\noindent Let us consider sets $I_1, I_2, I_3$ that are used for computing a random $3$-out contraction of $G$. By Lemma \ref{l:1_out_lower_bound}, for each $j$, the probability that $I_j \cap C(S) \neq \emptyset$ is at most some constant $q < 1$. Hence, the probability that a random $3$-out contraction preserves cut $S$ is at least  $(1-q)^3 > 0$. In other words, a $3$-out contraction preserves a fixed non singleton $(1+\eps)$ minimum cut $S$ with at least a positive constant probability.
    %\end{proof}

\subsection{Reducing the number of edges} \label{s:edge_compression}
  In this section, we explain how we reduce the number of edges in the graph resulting after $2$-out contractions --- which we now know to have $O(n/\lambda)$ vertices --- down to $O(n)$ edges, while preserving the minimum cut. Firstly, we present an a method based on sparse certificates and it preserves the minimum cut deterministically. Then, we propose an approach based on some variant of random contractions, which preserves the minimum cut with some constant probability and is easily implementable in parallel models of computing.
	
	\paragraph{Reducing the number of edges via sparse certificates}
We can reduce the number of edges in the contracted graph from \Cref{t:node_compression} using the
general $k$-edge connectivity certificate of Ibaraki and Nagamochi~\cite{IbarakiNagamochi1992}. They have shown an algorithm that, given a multigraph $M$ and a
number $k$, in linear time can find a subgraph $H$ with less than $kn$
edges so that if a cut has $c$ edges in $M$, then the cut between
same vertex sets has at least
$\min\{k,c\}$ edges in $H$. Cuts with at most $k$ edges in $M$ are thus
fully preserved in $H$. This way \emph{$H$ is a certificate of $k$-edge connectivity
in $M$}. Based on this, \cite{KT19:edge-conn} suggested contracting all
edges $e$ not in $H$. Since the end-points of an edge not
in $H$ must be $k$-edge connected in $M$, and thus $k+1$ connected in
$H$, contracting $e$ preserves any cut of size at most $k$. Summing
up, we get
\begin{lemma}[{\cite{KT19:edge-conn,IbarakiNagamochi1992}}]\label{lem:NI}  
Given a multigraph $M$
with $m_M$ edges and $n_M$ nodes and a number $k$, in $O(m_M)$ time, we
can identify and contract an edge set $D$ such that $M/D$ preserves
all cuts of size at most $k$ in $M$, and such that $M/D$ has at most
$n_M k$ edges.
\end{lemma}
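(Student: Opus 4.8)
The plan is to build directly on the Nagamochi--Ibaraki sparse $k$-connectivity certificate~\cite{IbarakiNagamochi1992}. Recall what their scanning algorithm gives: in $O(m_M + n_M)$ time one computes a subgraph $H \subseteq M$, obtained as a union of $k$ edge-disjoint forests $F_1, \dots, F_k$, with two properties --- (i) $H$ has at most $k(n_M-1) < k\, n_M$ edges, and (ii) for every edge $e = \{u,v\} \in E(M) \setminus E(H)$, the endpoints $u$ and $v$ lie in a common tree of each $F_i$, $i = 1, \dots, k$. Property (ii) is the structural fact I actually need: it exhibits $k$ pairwise edge-disjoint $u$--$v$ paths inside $H$, none of which uses $e$, so $u$ and $v$ are $(k+1)$-edge-connected in $M$ (those $k$ paths together with the edge $e$). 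It also immediately yields the usual certificate statement that a cut with $c$ edges in $M$ retains at least $\min\{k,c\}$ of them in $H$, but I will not need that form.

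Next I would define $D := E(M) \setminus E(H)$ --- the edge set to be contracted --- and show that $M/D$ preserves every cut of $M$ of size at most $k$. Since contracting $D$ preserves a cut $C(S)$ precisely when $C(S) \cap D = \emptyset$, it is enough to check that no edge of $D$ crosses any cut of size $\leq k$. So suppose some $e = \{u,v\} \in D$ crosses a cut $(S, V \setminus S)$ with $\card{C_M(S)} \leq k$. Each of the $k$ edge-disjoint $u$--$v$ paths of $H$ must then cross this cut, and picking one crossing edge from each path produces $k$ distinct edges of $C_M(S)$, all lying in $E(H)$ and hence all different from $e$; together with $e$ itself this gives $\card{C_M(S)} \geq k+1$, a contradiction. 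Thus $C_M(S) \cap D = \emptyset$ and $C(S)$ survives in $M/D$ with its size unchanged. Handling the contraction of $D$ as one batch this way avoids having to re-establish an invariant after each single edge contraction.

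The size and time bounds are then routine. Contracting edges never creates new edges (it only identifies endpoints and may delete self-loops), so $E(M/D) \subseteq E(H)$ and $m_{M/D} \leq \card{E(H)} < k\, n_M$. Computing the certificate, and with it $D$, takes $O(m_M + n_M)$ time; performing the contraction amounts to finding the connected components of $(V(M), D)$ and relabeling, again $O(m_M + n_M)$; absorbing the additive $O(n_M)$ (or disposing of trivial components of $M$ separately) gives the stated $O(m_M)$.

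The only genuine obstacle is invoking the Nagamochi--Ibaraki result in the right form. The statement one most often sees quoted is the cut-counting certificate property, whereas the contraction argument relies on the sharper property (ii) about the endpoints of non-certificate edges being connected within each of the first $k$ forests. This is exactly what the scanning algorithm produces --- a convenient mental model is that $F_i$ behaves like a maximal spanning forest of $M \setminus (F_1 \cup \cdots \cup F_{i-1})$, which makes (ii) transparent, even though the linear-time implementation builds all the forests in a single pass --- so in the write-up I would state (i) and (ii) explicitly as the black box being used, with citations to~\cite{IbarakiNagamochi1992} and~\cite{KT19:edge-conn}.
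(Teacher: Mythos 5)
Your proposal is correct and follows essentially the same route as the paper: compute the Nagamochi--Ibaraki sparse $k$-connectivity certificate $H$ in linear time, contract all edges outside $H$, and observe that the endpoints of any non-certificate edge are $(k{+}1)$-edge-connected, so no cut of size at most $k$ contains a contracted edge and at most $kn_M$ edges remain. The only cosmetic difference is that you invoke the forest-structure property (ii), whereas the standard cut-counting certificate statement quoted in the paper already implies it (a non-certificate edge crossing a cut of size $c\le k$ would leave $H$ with at most $c-1<\min\{k,c\}$ edges of that cut, a contradiction), so the sharper form is not actually needed.
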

%\mikcom{At least from the viewpoint of sequential algorithms, I think
  %known lemma has stronger functionality than your Theorem 2.9 in that it is
  %deterministic, and it is also a quite simple algorithm. 
  %However, I do like your cool stop in Karger's contraction algorithm.}

Given a simple graph $G$, we can first
apply \Cref{t:node_compression} and then \Cref{lem:NI} with $k=2\delta\geq
2\lambda$, to get a proof of \cref{t:main_theorem}

\paragraph{Reducing the number of edges via random contractions}

\begin{lemma}\label{lem:random_edge_reduction}
	Given an unweighted multigraph $G = (V, E)$ with $n$ vertices, $m$ edges and minimum degree $\delta$, a contraction process that contracts a set of vertices $E_{1/(2\delta)}$, to which we include each edge of $E$ with probability $\frac{1}{2\delta}$, reduces the number of edges to $\bigO(n\delta)$ and preserves a fixed minimum cut with probability at least $\frac{1}{2}$.
\end{lemma}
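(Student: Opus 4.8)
The plan is to prove the two assertions separately: (i) the expected number of surviving edges is $O(n\delta)$, and (ii) a fixed minimum cut is preserved with probability at least $\tfrac12$.

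For (ii), observe that a fixed minimum cut $C(S)$ has exactly $\lambda \le \delta$ edges. The cut is preserved precisely when none of these $\lambda$ edges is sampled into $E_{1/(2\delta)}$. Since each edge is sampled independently with probability $\tfrac{1}{2\delta}$, the probability that $C(S)$ survives is $\paren{1-\tfrac{1}{2\delta}}^{\lambda} \ge \paren{1-\tfrac{1}{2\delta}}^{\delta} \ge e^{-1}\cdot\paren{\text{something}} $; more simply, by Bernoulli/union bound the failure probability is at most $\lambda \cdot \tfrac{1}{2\delta} \le \delta\cdot\tfrac{1}{2\delta} = \tfrac12$, so the cut is preserved with probability at least $\tfrac12$. (One should double-check whether the intended bound uses $\lambda$ or $\delta$ in the exponent; using $\lambda \le \delta$ the union bound gives exactly the stated $\tfrac12$.)

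For (i), the key point is that after contracting the sampled edge set $E_{1/(2\delta)}$, the number of remaining edges equals the number of edges of $G$ whose endpoints lie in distinct connected components of the subgraph $(V, E_{1/(2\delta)})$. I would bound this by showing that the sampled subgraph has few connected components, or rather directly bound the number of \emph{non-contracted} edges. The cleanest route: the sampled subgraph $(V, E_{1/(2\delta)})$ is a subsampling of a graph with min-degree $\delta$ at rate $\tfrac{1}{2\delta}$, so each vertex has expected sampled-degree at least $\tfrac12$; however that alone does not immediately give $O(n\delta)$ surviving edges. Instead, I expect the argument goes via the standard Karger-style contraction analysis: contracting a random subset where each edge is kept with probability $p=\tfrac{1}{2\delta}$ is equivalent to running independent coin flips, and one argues that with constant probability (or in expectation) the contracted multigraph on the resulting vertex set has $O(n\delta)$ edges because the number of super-nodes is $O(n/\delta)$-ish — wait, more carefully: one shows $\E[\#\text{surviving edges}] \le \sum_{e} \Pr[e \text{ survives}]$ is not directly $O(n\delta)$, so the bound must come from a structural fact. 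The natural statement is that $(V,E_{1/(2\delta)})$ already has $O(n/\delta)$ connected components in expectation (by an argument analogous to \Cref{l:node_compression_number_of_components}, since sampling each edge with probability $\tfrac{1}{2\delta}$ in a min-degree-$\delta$ graph behaves like giving each vertex $\Theta(1)$ random incident edges), and then Euler/handshake plus the fact that the resulting contracted simple-ish structure has $O(n\delta)$ edges because... actually the cleanest bound is just: the contracted multigraph has at most $\binom{n_{\widehat G}}{2}\cdot(\text{multiplicity})$...

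Let me restate the plan more honestly: The main obstacle is part (i). I would first show, by an argument parallel to \Cref{l:node_compression_number_of_components} (or by a direct second-moment / Chernoff computation), that the sampled subgraph has $O(n/\delta)$ connected components with at least constant probability; contracting them yields a multigraph on $O(n/\delta)$ super-nodes. Then I would bound the number of surviving edges by noting that any surviving edge goes between two super-nodes, and a counting/Markov argument on $\E[\#\text{surviving edges}] = \sum_e \Pr[\text{endpoints in different components}]$ — combined with the component bound — gives $O(n\delta)$ in expectation, hence with constant probability by Markov. Finally, I would combine (i) and (ii): by a union bound over the failure of the edge-count bound (probability $\le$ small constant after tuning) and the failure to preserve the cut (probability $\le \tfrac12$), the process achieves both goals with the claimed constant probability; if the two events need to hold simultaneously one repeats $O(\log n)$ times or adjusts constants. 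The delicate part is making the component-count argument go through for the specific sampling rate $\tfrac{1}{2\delta}$ rather than the $2$-out model, but the same phase-by-phase exploration analysis from \Cref{l:node_compression_number_of_components} should adapt, since keeping each edge with probability $\tfrac{1}{2\delta}$ and min-degree $\delta$ guarantees each explored vertex catches its component-growth with probability bounded away from $1$.
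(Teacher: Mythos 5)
Your part (ii) is fine: bounding the failure probability by $\lambda\cdot\tfrac{1}{2\delta}\le\tfrac12$ (or, as the paper does, by $\paren{1-\tfrac{1}{2\delta}}^{\lambda}\ge\paren{1-\tfrac{1}{2\lambda}}^{\lambda}\ge\tfrac12$ using $\lambda\le\delta$) gives exactly the stated $\tfrac12$.

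Part (i), however, has a genuine gap, and the route you sketch would not go through. First, the premise that the Bernoulli sample $(V,E_{1/(2\delta)})$ has $\bigO(n/\delta)$ connected components is false in general: a vertex of degree $\Theta(\delta)$ is isolated in the sample with constant probability (e.g., in a $\delta$-regular graph or a clique, a constant fraction of vertices remain isolated), so the sample typically has $\Theta(n)$ components. The analogy with \Cref{l:node_compression_number_of_components} breaks precisely because $2$-out forces every vertex to have at least one sampled incident edge, whereas rate-$\tfrac{1}{2\delta}$ sampling does not. Second, even if the component count were small, that would not bound the quantity you actually need: the number of edges of $G$ with endpoints in \emph{different} components, which is what survives the contraction; a graph can have very few components yet $\Theta(m)$ inter-component edges, and for multigraphs $m$ can be much larger than $n\delta$. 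The missing ingredient — and what the paper uses — is the Karger--Klein--Tarjan sampling lemma \cite{Karger:1995:RLA:201019.201022}: if each edge is kept independently with probability $p$, then the number of edges of $G$ that are $F$-light with respect to a minimum spanning forest $F$ of the sample is $\bigO(n/p)$ whp; every inter-component edge is $F$-light, so with $p=\tfrac{1}{2\delta}$ the surviving edge count is $\bigO(n\delta)$ whp. (If you prefer a self-contained argument, expose the edges one at a time in an arbitrary order, call an edge a candidate if its endpoints currently lie in different components of the already-sampled edges, and note that every finally-surviving edge was a candidate, that at most $n-1$ candidates get sampled, and that each candidate is sampled independently with probability $p$; a negative-binomial/Chernoff bound then gives $\bigO(n/p)$ candidates whp.) Your per-edge expectation identity plus Markov cannot substitute for this, since individual crossing probabilities need not be small. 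The final repetition/union-bound paragraph is also unnecessary for this lemma: the edge bound holds whp and the cut preservation with probability $\ge\tfrac12$, which is all the statement claims.
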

\begin{proof}
\paragraph{The number of edges}
	Let $G_{1/(2\delta)} = (V, E_{1/(2\delta)})$. The number of edges of $G$ that are inter component in $G_{1/(2\delta)}$ is $\bigO(n\delta)$, with high probability. This statement follows directly from the analysis of the sampling approach to the MST problem \cite{Karger:1995:RLA:201019.201022}. The authors of \cite{Karger:1995:RLA:201019.201022} show slightly stronger claim, as they say that if $G$ is an $n$ node weighted graph, and $F$ is a minimum spanning forest of $G_{p}$, then the number of edges of $G$ that are $F$-light (edge $e$ is $F$ light if it is not the heaviest edge on a cycle in $F$ extended by $e$) is $\bigO(np)$ with high probability. Clearly, all edges of $G$ that are inter component in $G_p$ would be $F$-light, hence the number of inter component edges is smaller than the number of $F$-light edges. Furthermore, the analysis provided in \cite{Karger:1995:RLA:201019.201022} does not really use that the graph does not have parallel edges, which makes it applicable to our case.

\paragraph{Preserving a cut}
Le consider a fixed minimum cut of $G$, and let $\lambda$ be a size of this cut. Clearly $\lambda \leq \delta$. Therefore, probability that we don't include any edge of this cut in $E_{1/(2\delta)}$ is $\paren{1-\frac{1}{2\delta}}^{\lambda} \geq \paren{1-\frac{1}{2\lambda}}^\lambda \geq \frac{1}{2}$.

\end{proof}
		
\subsection{Amplifying success and preserving all nontrivial small cuts}
%\motodo{Make sure that in all the implementation sections in CONGEST, MPC, and PRAM, we are consistent with this new way of stating the result of contraction, after $O(\log n)$ runs of it.}
\label{subsec:succAmp}
We next prove \Cref{thm:main}, by using \Cref{t:main_theorem} and a careful ``\emph{repetition and voting}".
%\begin{quote}{\em Let $G$ be a simple graph with $m$ edges, $n$ nodes, and
%min-degree $\delta$. Then, in $O(m \log n)$ time, we can randomly contract
%it down to $O(n/\delta)$ nodes and $O(n)$ edges such that, whp, we preserve all
%non-trivial $(2-\eps)$-min-cuts.}
%\end{quote}

\begin{proof}[Proof of \Cref{thm:main}]
To prove this statement, we build a process that amplifies the success probability of \Cref{t:main_theorem} and preserves a
particular given non-trivial $(2-\eps)$-small cut $C$ of $G$ with high probability
$1-n^{-\gamma}$. Karger~\cite{Karger2000} has proved that the number of
$(2-\eps)$-small cuts is $O(n^3)$. Hence, by a union bound, we
conclude that all non-trivial $(2-\eps)$-small cut of $G$ are
preserved with high probability $1-1/n^{\gamma'}$ where
$\gamma'=\gamma-3$ is an arbitrarily large constant.

To build such a contraction with amplified success, we apply \Cref{t:main_theorem} for $q=O((\log n)/p_\eps)=O(\log n)$ times,
with independent random variables, yielding contracted multigraphs
$\widehat G_1,\ldots,\widehat G_q$. Each $\widehat G_i$ preserves our
non-trivial $(2-\eps)$-small cut $C$ with probability at least
$p_\eps$, so the expected number of $\widehat G_i$ that preserve $C$ is
at least $\mu=p_\eps q$.  Using a standard Chernoff bound (see, e.g.,
\cite[Theorem 4.2]{MR95}), the probability that less than $r=\mu/2=p_\eps
q/2$ of the $\widehat G_i$ preserve $C$ is upper bounded by 
$\exp(-\mu/8)=\exp(-p_\eps q)$. For any given $\gamma$, this 
is $O(n^{-\gamma})$ for $q\geq 8(\ln n){\gamma}/p_\eps$.

We now take each edge $e$ in $G$, and ask how many $\widehat G_i$ it is 
preserved in. Since each $\widehat G_i$ has $O(n)$ edges,
it can only preserve $O(n)$ edges. Therefore
the total number of edge preservation from all the $\widehat G_i$ 
is $q\cdot O(n)$. Therefore, the number of edges that are preserved at least $r$ times
is at most $q\cdot O(n)/r =O(\gamma n/p_\eps)=O(n)$. If
an edge is not preserved $r$ times, then, by assumption, it is
not in any non-trivial $(2-\eps)$-small cut, so we contract it. 

Since all but $O(n)$ edges of $G$ got contracted, the resulting 
graph $\widehat G$ has at most $O(n)$
edges. Moreover, our contractions did not contract any edge from any 
non-trivial $(2-\eps)$-min-cuts, whp. 
Finally, note that the original graph $G$ had min-degree $\delta$. As proved in
\cite{KT19:edge-conn}, if a cut of $G$ has size less than $\delta$, it
must have at least $\delta$ vertices on either side. Therefore, if a
node in $\widehat G$ has degree below $\delta$, then it must be
contracted from at least $\delta$ vertices in $G$, so we have at most
$n/\delta$ nodes in $\widehat G$ with degree below $\delta$. On the
other hand, we can have at most $2m_{\widehat G}/\delta=O(n/\delta)$ nodes in
$\widehat G$ with degree at least $\delta$. Hence, we conclude that the
total number of nodes in $\widehat G$ is $O(n/\delta)$.

Overall, we spend $O(mq)=O(m\log n)$ time, both on generating the
$q$ contracted multigraphs $\widehat G_i$ and on counting for
each edge in $G$ how many $\widehat G_i$ it is preserved in (since
surviving edges preserve their id doing contractions, we can with
each edge id record which $\widehat G_i$ it is preserved in).
\end{proof}
 
\section{General Algorithm Outline and Overview of Applications} \label{s:review}
  We now overview the applications of \Cref{thm:main} to various models of computing. On the very top level, all the algorithms we present fit the following description:
  \begin{enumerate}
    \item Compute the contraction $\widehat{G}$ of the input graph $G$ as indicated by \Cref{thm:main}.
    \item Compute the minimum cut of $\widehat{G}$ using an algorithm that works for multi-graphs.
    \item If the computed cut of $\widehat{G}$ is smaller than minimum degree of $G$, output it as a minimum cut of $G$. Otherwise, output the minimum degree of $G$ (and the corresponding vertex) as a (singleton) minimum cut.
  \end{enumerate}
  
\noindent Next, we give a brief description of our algorithms for different models. More detailed versions of these algorithms, as well as the formal definitions of the \PRAM, \MPC and \congest models, follow in the subsequent sections.
  \subsection{Sequential model}
	
	In \Cref{sec:Seq}, we give two sequential algorithms for computing edge connectivity with high probability, both of which follow the above outline. The first algorithm has a complexity of $O(m\log n)$ and follows rather directly from combining \Cref{thm:main} with running the minimum cut algorithm of Gabow~\cite{Gabow1991} on the contracted graph $\widehat{G}$. The detailed description is presented \Cref{subsec:firstSeq}.
	
	The second algorithm has a complexity of $O(m+ n \log^3 n)$. For this algorithm, in \Cref{subsec:secondSeq}, we present a more elaborate way of implementing a contraction similar to the one provided by \Cref{thm:main} but in just $O(m + n\log n \; \alpha(n, n))$ time, where $\alpha$ denotes the inverse Ackermann function. This process is based on a careful usage of the union-find data structure, and some other probabilistic ideas. Then, we solve the minimum cut problem on the resulting multi-graph, which has $O(n)$ edges, using Karger's algorithm~\cite{Karger2000} in $O(n\log^3 n)$ time, for a total complexity of $O(m + n\log n \; \alpha(n, n) + n\log^3 n) = O(m+n\log^3 n)$. We note that any improvement on Karger's algorithm for multi-graphs would immediately lead to an improvement in our algorithm.

  \subsection{\congest model}
    In \Cref{sec:mc_congest}, we provide a distributed algorithm that solves the edge connectivity problem in $\tilde{O}(n^{0.8} D^{0.2} + n^{0.9})$ rounds (as stated in \Cref{t:mc_congest}). This improves substantially on a recent breakthrough of Daga et al.~\cite{daga2019} that achieved the first sublinear round complexity, running in $\tilde{O} (n^{1-1/353}D^{1/353}+n^{1-1/706})$ rounds. Furthermore, the new algorithm is considerably simpler. This result is presented .	We next review the setup and the outline how this improvement is achieved.
    
    \paragraph{Model Description} The \congest model is a synchronous message passing model for networked computation. The communication network is abstracted as a graph $G=(V, E)$, with $n$ vertices. There is one processor on each node of the network, each with unique identifier from $\set{1, \dots, \poly(n)}$. Initially, the network topology is not known to the nodes, except for some global parameters such as as constant factor upper bound on the number $n$ of nodes. The computation is performed in synchronous rounds, each round consists of the phase of (possibly unlimited) local computation and the phase of communication. In the communication phase, each processor is allowed to send a message of size $\bigO(\log n)$ to each of its neighbors. 
    
    For the \MC problem, the question is to identify the edges with the smallest cardinality whose removal disconnects the network. The output will be in a distributed format, meaning that each processor/node should know its own edges in the identified cut. 
    
		\paragraph{The algorithm of Daga et al.} The algorithm of Daga et al.~\cite{daga2019} is actually a mixture of a new algorithm designed for graphs with large edge connectivity, run along an older algorithm of Nanongkai and Su~\cite{Nanongkai2014-distributed-cut}, which is well-suited for graphs with small edge connectivity and runs in $\tilde{O}((D+\sqrt{n})\lambda^4)$. The new algorithm runs in $O(n/\delta^{1/88}) \in O(n/\lambda^{1/88})$ rounds, hence the faster of the two algorithms gives the round complexity of $\tilde{O} (n^{1-1/353}D^{1/353}+n^{1-1/706})$. A key component in this $O(n/\lambda^{1/88})$ rounds algorithm is a procedure based on some variant of expander decompositions that carefully determines parts of the graph that can be contracted, while preserves any non-singleton minimum cut. To be more precise, they provide an algorithm that in sublinear time of $\tilde{O}(n/\delta^{1/88})$ rounds, finds a number $k=O(n/\delta^{44})$ of disjoint connected subgraphs $V_1$, \dots, $V_k$, with a total induced diameter of $\sum_{i=1}^k D(V_i)=O(n/\delta^{1/40})$, such that contracting these subgraphs preserves any non-singleton minimum cut. They then explain an algorithm that in $\tilde{O}(D+ k + \sum_{i=1}^k D(V_i)=O(n/\delta^{1/40})$ extra rounds, identifies the minimum cut of the contracted graph.
		
		\paragraph{Our Improvement} Our proposal is to replace the clever and somewhat sophisticated procedure of Daga et al.~\cite{daga2019} for finding these contractions with just a random $2$-out contractions. From \Cref{l:node_compression_number_of_components}, we know that the number $k$ of components is $\tilde{O}(n/\delta)$. In \Cref{lem:diameter}, we prove an additional nice property: that the components of $2$-out has a summation of diameters $\sum_{i=1}^k D(V_i)=O(n/\delta)$. Hence, just plugging $2$-out in the framework of Daga et al.~\cite{daga2019}, we can improve their $O(n/\delta^{1/88}) \in O(n/\lambda^{1/88})$ round algorithm to run in just $O(n/\delta) = O(n/\lambda)$ rounds. Combining this again with the $\tilde{O}((D+\sqrt{n})\lambda^4)$-round algorithm of Nanongkai and Su~\cite{Nanongkai2014-distributed-cut} gives us our claimed $\tilde{O}(n^{0.8} D^{0.2} + n^{0.9})$ round complexity\footnote{We believe that by plugging in the result of our contraction --- concretely, $\tilde{O}(n/\delta)$ components, each of diameter $O(\log^2 n)$, such that contracting them preserves any particular non-trivial minimum-cut with a constant probability, as we shown in \Cref{remark:3outDiam} --- into the algorithm of Nanongkai and Su~\cite{Nanongkai2014-distributed-cut}, we can improve also the complexity of their algorithm to $\tilde{O}((D+\sqrt{n/\lambda})\lambda^4)$. We have discussed this with one of the authors Hsin-Hao Su. However, this claim should be taken with a grain of salt until all the details are written. If correct, that would lead to a further improved overall round complexity of $\tilde{O}(n^{0.8} D^{0.2} + n^{8/9})$. Moreover, it would give a $(1+\eps)$-approximation of minimum cut with round complexity $\tilde{O}((D+\sqrt{n/\lambda}))$, for any constant $\eps>0$, hence matching the lower bound of Ghaffari and Kuhn~\cite{distributed-cut} for any distributed approximation algorithm on simple graphs.}.

  \subsection{\MPC model}
	\label{subsec:MPC}
    
    In \Cref{s:mc_mpc}, we give an algorithm that computes a minimum cut of a simple graph in $\bigO(1)$ round of the \MPC model, which proves \Cref{t:mc_mpc}. The algorithm is based on our contraction process, after which the input graph is sufficiently small and can be gathered in the memory of a single machine, which can compute the result locally. In \Cref{s:mc_mpc}, we focus on efficient implementation of the contraction process.
    \begin{lemma}
      Given a simple input graph $G$, with $n$ vertices, $m$ edges, and minimum degree $\delta$ it is possible to compute a graph with $\bigO\paren{\frac{n}{\delta}}$ vertices and $\bigO(n)$ edges which for a constant $\eps \in [0,1)$ preserves all non singleton $(2-\eps)$-minimum cuts, \whp, in $\bigO(1)$ rounds of the \MPC model, with $\bigO(m) + \bigO(n\log^3 n)$ global memory, and $\bigO(n)$ memory limit for a single machine. 
    \end{lemma}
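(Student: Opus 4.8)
The plan is to implement, in $O(1)$ \MPC rounds, exactly the three‑stage contraction underlying \Cref{thm:main}: (i) a random $2$‑out contraction (\Cref{t:node_compression}), which cuts the vertex count to $O(n/\delta)$; (ii) the edge‑reduction of \Cref{lem:random_edge_reduction}, which cuts the edge count to $O(n)$; and (iii) the $\Theta(\log n)$‑fold repetition with the per‑edge ``voting'' from the proof of \Cref{thm:main}, which boosts the constant success probability to high probability while keeping $O(n/\delta)$ vertices and $O(n)$ edges. The reason each stage fits into $O(1)$ rounds is the observation that with local memory $S=\Theta(n)$ and $m\le\binom n2<n^2$ we have $\log_S m=O(1)$, so all of the standard sorting‑based \MPC primitives --- sorting, prefix sums, segmented broadcasts, and lookup‑joins of a length‑$m$ table against a table that fits on a single machine --- run in $O(1)$ rounds; every data‑movement step below is one of these.

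For Stage (i): first compute all degrees $d(v)$ by sorting the $2m$ endpoint incidences by vertex id and taking a segmented sum, so the array of $n$ degrees lands on one machine. Each vertex draws two i.i.d.\ indices in $[d(v)]$ with local randomness; after assigning each edge its rank within each of its two endpoints (prefix sums over the sorted incidence list), a join of the pairs $(\text{vertex},\text{chosen index})$ with $(\text{vertex},\text{edge rank})$ on the composite key recovers the $2$‑out subgraph $G^{(2)}$. The key point is that $G^{(2)}$ has at most $2n$ edges, so it fits on a single machine: we gather it there, compute its connected components locally, and obtain a map $\phi\colon V\to[O(n/\delta)]$ of $O(n)$ words (its image has size $O(n/\delta)$ whp by \Cref{l:node_compression_number_of_components}). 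We then relabel $G$ through $\phi$ --- a lookup of each endpoint of each of the $m$ edges into the single‑machine table $\phi$, hence $O(1)$ rounds --- and drop self‑loops. After this stage we hold the $O(n/\delta)$‑vertex contracted multigraph in $O(m)$ global memory.

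Stage (ii) marks each surviving edge independently with probability $1/(2\delta)$ (local coins; $\delta$ is globally known from Stage (i)); by \Cref{lem:random_edge_reduction}, contracting the components of the marked subgraph preserves any fixed minimum cut with probability at least $\tfrac12$ and leaves $O((n/\delta)\cdot\delta)=O(n)$ edges. Stage (iii) runs $q=\Theta(\log n)$ independent copies of Stages (i)--(ii) \emph{in parallel}; rather than materialising $q$ relabelled edge lists (which would cost $\Theta(qm)$ memory), we store for each original edge only the $\Theta(\log n)$‑bit vector recording in which of the $q$ copies it survives --- one machine word per edge --- computed by looking each edge's two endpoints up in the $O(n\log n)$‑word table holding all $q$ component maps. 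A final $O(1)$‑round aggregation then contracts every edge that survived in fewer than the threshold number of copies; by the argument in the proof of \Cref{thm:main} the result has $O(n)$ edges and $O(n/\delta)$ vertices and preserves all non‑trivial $(2-\eps)$‑minimum cuts whp, and tallying memory over all $q$ copies gives $O(m)+O(n\log^3 n)$.

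The one genuinely delicate point --- and the step I expect to be the main obstacle --- is computing connected components \emph{in $O(1)$ rounds} in Stage (ii): unlike $G^{(2)}$, the marked subgraph there need not be small enough to fit on one machine, and \MPC connectivity with $S=\Theta(n)$ is not known in constant rounds for arbitrary graphs. What we can exploit is that after Stage (i) the graph has only $O(n/\delta)\le n$ vertices, so any spanning forest of the marked subgraph fits on a single machine; one computes such a forest by the filtering paradigm --- repeatedly split the (at most $m$) marked edges into chunks of $O(n)$, take a spanning forest of each chunk locally, and recurse on the union, which shrinks by a factor of $\Theta(\delta)$ per round --- which terminates in $O(1)$ rounds once $\delta$ is polynomially large; the regime of small $\delta$, where $O(n/\delta)$ is already $\Theta(n)$ and the edge count is correspondingly constrained, has to be handled by a separate and more elementary argument. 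Writing out that case split cleanly, together with the end‑to‑end global‑memory accounting across all $q$ parallel copies, is where I expect the bulk of the care to go.
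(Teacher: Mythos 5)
Your high-level plan is the same as the paper's (per copy: a $2$-out contraction followed by contracting a $1/(2\delta)$-sample as in \Cref{lem:random_edge_reduction}, then $\Theta(\log n)$ parallel copies with the voting of \Cref{thm:main}), but the step you yourself flag as delicate is a genuine gap, and your proposed workaround does not close it. First, your premise is wrong: connectivity in the \MPC model with $S=\Theta(n)$ local memory \emph{is} known in $O(1)$ rounds --- the paper invokes the constant-round MST/connected-components algorithm of Jurdzi\'nski and Nowicki for the congested clique \cite{JurdzinskiNowicki2018}, which linear-memory \MPC can simulate; the ``not known in constant rounds'' intuition (the one-versus-two-cycles barrier) concerns the strictly sublinear regime $S=n^{1-\eps}$. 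This citation is the load-bearing ingredient of the lemma. Second, your filtering fallback only yields $O(1)$ rounds when $\delta=n^{\Omega(1)}$, and the ``separate and more elementary argument'' you defer for small $\delta$ does not exist in the form you suggest: a small minimum degree does not make the instance easy. With $\delta=O(1)$ you can have $m=\Theta(n^2)$, the $2$-out contraction leaves $\Theta(n)$ vertices, and the marked subgraph has $\Theta(n^2)$ edges, so Stage (ii) is then the fully general connectivity problem in linear-memory \MPC --- exactly the problem you said you cannot solve in $O(1)$ rounds. Without the congested-clique simulation (or an equivalent primitive) the proof does not go through for general $\delta$.

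A secondary issue is the global-memory accounting in Stage (iii). Computing each edge's $\Theta(\log n)$-bit survival vector ``by looking each edge's two endpoints up in the $O(n\log n)$-word table holding all $q$ component maps'' is not an $O(m)+O(n\log^3 n)$ operation as stated: that table does not fit on a single machine (it exceeds the $O(n)$-word limit by a $\log n$ factor), and any naive join that ships each endpoint's $q$-vector to its edges materializes $\Theta(m\log n)$ words. The paper's fix is a fingerprinting step \cite{Rabin1981FingerprintingBR}: hash each vertex's length-$q$ label vector to $O(1)$ words, compare the two fingerprints per edge (an $O(m)$-memory join against an $O(n)$-word table), and fetch full label vectors only for the $O(n\log n)$ edges whose fingerprints differ, costing $O(n\log^2 n)$. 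A similar care is needed to keep the $q$ parallel connectivity computations themselves within $O(m+n\log^3 n)$ when $\delta\ll\log n$, which the paper also addresses; your proposal asserts the tally works out but does not provide either mechanism.
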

    On the top level, the algorithm executes $\Theta(\log n)$ contraction processes in parallel, each reducing the number of edges to $\bigO(n)$ and preserving a fixed $(2-\eps)$-minimum cut with a constant probability, and merges the result using an approach proposed in \cref{subsec:succAmp}. Conceptually, a version of the contraction process mentioned in\cref{t:main_theorem} tailored to the \MPC model consist of computing a random $2$-out contraction followed by contracting a set of edges sampled independently with probability $\frac{1}{2\delta}$. To implement such process, we generate edges to contract beforehand and execute both stages of the contraction process at once. The main obstacle we face is running $\Theta(\log n)$ contraction processes in parallel within $\bigO(m) + \tilde\bigO(n)$ global memory.

To do so, we run $\Theta(\log n)$ instances of Connected Components algorithm ~\cite{JurdzinskiNowicki2018} in parallel, and recover all edges of $G$ that survived any contraction process, in $\bigO(1)$ rounds, with global communication $\bigO(m) + \bigO(r \log n)$, where $r$ is the total number of recovered edges.

  \subsection{\PRAM model}
  
    In \Cref{s:mc_pram}, we give an implementation of the \MC algorithm for simple graphs in the \CPRAM model that proves \Cref{t:mc_pram}. As in the case of sequential model, the algorithm consists of the contraction process followed by application of the state of the art algorithm for general graph. In \Cref{s:mc_pram} we focus on providing an implementation of the contraction process.
    \begin{l:mc_pram}
      Given a simple input graph $G$, with $n$ vertices, $m$ edges, and minimum degree $\delta$ it is possible to execute $\Theta(\log n)$ contraction processes, each resulting with a graph with $\bigO\paren{\frac{n}{\delta}}$ vertices and $\bigO(n)$ edges, on a \CPRAM machine, with $\bigO(m \log n)$ work and depth $\bigO(\log n)$. With high probability one of computed contractions preserves a fixed non singleton minimum cut of $G$.
    \end{l:mc_pram}
	The top level implementation in the \CPRAM model is the same as for the \MPC model: we use a variant of the contraction process based on contracting $2$-out random subgraph followed by contraction of the set of edges to which we include each edge of the input graph with probability $\frac{1}{2\delta}$. The implementation is rather straightforward and boils down to solving several Connected Components problems in parallel, which we do with $\bigO(m \log n)$ work and $\bigO(\log n)$ depth \cite{Pettie2004_MSF_EREW_PRAM}. Then, we use an approach proposed in \cref{subsec:succAmp} to compute an $\bigO(n)$ edge graph that preserves all non singleton $(2-\eps)$-minimum cuts, and find a minimum cut of computed graph with state of the art algorithm for multigraphs in $\bigO(n \log^4 n)$ work and $\bigO(\log^3 n)$ depth ~\cite{Geissmann2018_PRAM_MC}.

%\newpage		
\section{\MC in the sequential model}
\label{sec:Seq}

\subsection{An $O(m\log n)$ Algorithm for \MC}
\label{subsec:firstSeq}
To find the edge connectivity of $G$ in $O(m\log n)$ time,
we essentially just apply Gabow's algorithm~\cite{Gabow1991} to the contracted multigraph $\widehat G$
from \Cref{thm:main}. We note that within this time bound, using
another algorithm of Gabow~\cite{Gab16}, we can find the
cactus representation described in  \cite{DKL76} which elegantly represents all
min-cuts of $G$. In particular, we use Gabow's
result in the form described in the lemma below with $M=\widehat G$,
$m_H=O(n)$, $n_H=O(n/\delta)$, and $k=\delta$, yielding a running time
of $O(k m_H\log n_H)=O(n\delta\log n)=O(m\log n)$.
\begin{lemma}[{\cite{Gabow1991}}]\label{lem:Gab}
Given a multigraph $M$ with $m_M$ edges and $n_M$ nodes,
and a number $k$, in  $O(k m_M\log n_M)$ time, we can decide if
the edge connectivity $\lambda_M$ is below $k$, and if so, decide $\lambda_M$ 
it exactly.
\end{lemma}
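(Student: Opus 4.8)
This is a classical result, and the plan is to reduce undirected edge connectivity to packing directed spanning arborescences and then run an efficient augmenting algorithm. First I would \emph{bidirect} $M$: replace every undirected edge by a pair of oppositely oriented arcs, obtaining a digraph $D$ on the same $n_M$ vertices with $2m_M$ arcs, and fix an arbitrary root $r$. For any $S\ni r$ with $S\neq V$, the arcs of $D$ leaving $S$ are in bijection with the edges of $M$ crossing $(S,V\setminus S)$; hence for every $v\notin S$ the minimum $r$-$v$ arc cut in $D$ has size at most $|C(S)|$, and taking $S$ to be a side (containing $r$) of a global minimum cut of $M$ shows $\min_{v\neq r}\lambda_D(r,v)=\lambda_M$, where $\lambda_D(r,v)$ denotes the local arc connectivity from $r$ to $v$.

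By Edmonds' disjoint branchings theorem, $\min_{v\neq r}\lambda_D(r,v)$ equals the maximum number of arc-disjoint spanning $r$-arborescences that can be packed in $D$. So it suffices to attempt to construct $k$ such arborescences: maintain a current packing of $j$ arc-disjoint $r$-arborescences and repeatedly run an \emph{augmentation} step that, given a packing of size $j$, either returns a packing of size $j+1$ or certifies optimality by exhibiting a set $S\ni r$ with exactly $j$ arcs leaving it (equivalently, an undirected cut of $M$ of size $j$). If we reach $j=k$, report ``$\lambda_M\geq k$''; if an augmentation fails at some $j<k$, report the exact value $\lambda_M=j$ together with the witnessing cut. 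Correctness follows from Edmonds' theorem plus the standard fact that such augmenting procedures fail only at optimality.

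The hard part — and the technical heart of~\cite{Gabow1991} — is implementing one augmentation step in $O(m_M\log n_M)$ time. The augmentation is realized by a sequence of ``subtree swaps'' among the maintained forests/arborescences, located by a search from the freshly built tree; Gabow organizes the forests in a round-robin fashion and uses balanced search trees (and dynamic-tree-type structures) so that finding and applying all swaps needed for a single augmentation, or detecting a barrier when none exists, costs $O(m_M\log n_M)$. Since we perform at most $k$ augmentations before either succeeding $k$ times or hitting a barrier, the total running time is $O(k\,m_M\log n_M)$, as claimed. The reduction and the appeal to Edmonds' theorem are routine; essentially all the difficulty is in this per-augmentation data-structural argument.
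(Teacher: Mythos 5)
Your proof is correct in outline and does describe (a sketch of) what Gabow actually does, but it arrives at the $O(km_M\log n_M)$ bound by a different route than the paper. The paper treats Gabow's algorithm entirely as a black box: it quotes his stated running time $O(m_M + n_M\lambda_M^2\log n_M)$, observes that $\lambda_M$ is at most the average degree $2m_M/n_M$ so that this is $O(m_M\lambda_M\log n_M)$, and then \emph{preconditions} with Matula's linear-time $3$-approximation to either report $\lambda_M > k$ immediately or certify $\lambda_M = O(k)$ before invoking Gabow, giving the claimed $O(m_M k\log n_M)$. You instead open up Gabow's arborescence-packing framework and truncate it after $k$ augmentation rounds, asserting a per-round cost of $O(m_M\log n_M)$. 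That per-round bound is plausible (indeed Gabow's total $O(m_M + \lambda_M^2 n_M\log n_M)$ implies the rounds can each be charged at most $O(m_M\log n_M)$ since $\lambda_M n_M \le 2m_M$), but it is not literally what Gabow states, so your sketch leans on an internal round-by-round decomposition of his algorithm that a reader would have to verify. What you gain is not needing Matula's approximation as a separate ingredient; what the paper's route gains is that it only requires believing Gabow's stated time bound together with a one-line degree argument and a standard linear-time approximation. Both are reasonable; just be aware the two are not the same argument.
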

\begin{proof}
Gabow~\cite{Gabow1991} states the running time as
$O(m_M+n_M\lambda_M^2\log n_M))$ where $\lambda_M$ is the edge
connectivity of $M$. Since $\lambda_M$ is no bigger than the
average degree, his bound is bounded by $O(m_M\lambda_M\log
n_M))$. Moreover, using Matula's linear-time approximation algorithm~\cite{Matula93}, we can decide
the edge connectivity of $\lambda_M$ within a factor 3, hence either
decide that it is bigger than $k$, which we report, or that it is at
most $3k$, implying that Gabow's algorithm runs in 
$O(m_M\lambda_M\log n_M))= O(m_M k\log n_M)$ time.
\end{proof}
We are now ready to show how we find the edge connectivity of $G$ in
$O(m\log n)$ time.
\begin{theorem}\label{thm:Gabow} Let $G$ be a simple graph with $m$ edges and $n$ nodes.
Then, in $O(m \log n)$ time, whp, we can find the
edge connectivity of $G$ as well as a cactus representation of all its minimum
cuts.
\end{theorem}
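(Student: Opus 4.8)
The plan is to reduce the problem to the small multigraph produced by \Cref{thm:main} and then invoke Gabow's algorithms there. First I would fix any constant $\eps\in(0,1]$ and apply \Cref{thm:main}: in $O(m\log n)$ time this yields a multigraph $\widehat G$ with $n_{\widehat G}=O(n/\delta)$ nodes and $m_{\widehat G}=O(n)$ edges that, whp, preserves every non-trivial min-cut of $G$ (these are in particular non-trivial $(2-\eps)$-min-cuts). In the same linear-time pass I would record the minimum degree $\delta$ of $G$ and a vertex attaining it, which is the best trivial cut.

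To extract the edge connectivity, I would call \Cref{lem:Gab} on $M=\widehat G$ with threshold $k=\delta$. Since $m_{\widehat G}=O(n)$ and $n_{\widehat G}\le n$, this runs in $O(k\,m_{\widehat G}\log n_{\widehat G})=O(n\delta\log n)$ time, which is $O(m\log n)$ because $n\delta\le\sum_{v}d_G(v)=2m$. The call either certifies $\lambda_{\widehat G}\ge\delta$ or returns $\lambda_{\widehat G}$ exactly, and the claim to verify is that $\lambda=\min\{\delta,\lambda_{\widehat G}\}$ whp. This follows because contracting edges never shrinks a cut and identifies each cut $C_{\widehat G}(S)$ with the cut $C_G(\pi^{-1}(S))$ of the same edge set, where $\pi$ is the contraction map, so $\lambda_{\widehat G}\ge\lambda$ unconditionally; and if $\lambda<\delta$ then every singleton cut of $G$ has size $\ge\delta>\lambda$, so $G$ has a non-trivial min-cut, which survives in $\widehat G$ whp, giving $\lambda_{\widehat G}=\lambda<\delta$. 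Reporting $\min\{\delta,\lambda_{\widehat G}\}$ together with a witness --- the min-degree vertex, or the $\pi$-preimage of one side of the cut found in $\widehat G$ --- then gives the edge connectivity of $G$.

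For the cactus, I would run Gabow's cactus-construction algorithm~\cite{Gab16} on $\widehat G$, which (as noted after \Cref{lem:Gab}) also costs $O(k\,m_{\widehat G}\log n_{\widehat G})=O(m\log n)$ and outputs a cactus representing all min-cuts of $\widehat G$. If $\lambda<\delta$, then by the edge-set-preserving correspondence above together with preservation of all non-trivial min-cuts, the min-cuts of $\widehat G$ are exactly (with matching edge sets) the non-trivial min-cuts of $G$ and $G$ has no singleton min-cut, so this is already a cactus for all min-cuts of $G$ in the sense of~\cite{DKL76}. If $\lambda=\delta$, the only missing min-cuts are the singletons $\{v:d_G(v)=\delta\}$, which are available from the degree computation and can be inserted into the cactus by the standard handling of trivial cuts.

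The bulk of the work is bookkeeping rather than a genuine obstacle: one must (i) verify $\lambda=\min\{\delta,\lambda_{\widehat G}\}$ and the whp qualifier carefully, since \Cref{thm:main} only promises preservation of \emph{non-trivial} cuts, and (ii) reconcile the cactus of the multigraph $\widehat G$ with that of $G$, splicing in the trivial min-cuts in the case $\lambda=\delta$. The running-time accounting, by contrast, is immediate once the bounds $m_{\widehat G}=O(n)$, $n_{\widehat G}=O(n/\delta)$, and $n\delta\le 2m$ are in hand.
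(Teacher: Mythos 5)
Your proposal follows exactly the paper's route: apply \Cref{thm:main} to obtain $\widehat G$ with $O(n/\delta)$ nodes and $O(n)$ edges, run Gabow's algorithm (\Cref{lem:Gab}) with threshold $k=\delta$ in $O(n\delta\log n)=O(m\log n)$ time, compare against the minimum degree, and run Gabow's cactus algorithm on $\widehat G$ for the cactus. Your explicit justification that $\lambda=\min\{\delta,\lambda_{\widehat G}\}$ (via $\lambda_{\widehat G}\ge\lambda$ unconditionally, and whp preservation of a non-trivial min-cut when $\lambda<\delta$) is a correct and slightly more spelled-out version of what the paper leaves implicit. One small slip in the cactus part: when $\lambda=\delta$ you propose splicing singletons into the cactus of $\widehat G$, but this is only correct in the subcase $\lambda_{\widehat G}=\delta$; if $\lambda_{\widehat G}>\delta$ then the min-cuts of $\widehat G$ have size strictly larger than $\lambda$, so the cactus of $\widehat G$ must be discarded entirely --- all min-cuts of $G$ are trivial and the cactus is just the star over the min-degree vertices, which is exactly the case the paper handles separately. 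With that three-way case split ($\lambda_{\widehat G}>\delta$, $=\delta$, $<\delta$) and a citation to \cite{KT19:edge-conn} in place of ``standard handling'' for converting the cactus of $\widehat G$ back to one on $G$'s vertices and splicing in trivial cuts, your argument matches the paper's.
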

\begin{proof} As described above, we apply \Cref{lem:Gab} to $\widehat G$
with $k=\delta$ in $O(n\delta\log n)=O(m\log n)$ time. If the edge connectivity of
$\widehat G$ is above $\delta$, the edge connectivity of $G$ is $\delta$,
and all min-cuts are trivial. Moreover, we can easily build a min-cut cactus representation for this case: a star graph with two
parallel edges to all min-degree vertices in $G$.  Otherwise the edge
connectivity of $G$ is the minimum of $\delta$ and that of $\widehat
G$. In this case, we can apply \Cref{lem:Gab} to $\widehat G$, which runs in $O(n\delta\log n)=O(m\log n)$ time. Furthermore, we can also apply Gabow's cactus
algorithm~\cite{Gab16} to $\widehat G$ in $O(m\log n)$ time. In
\cite{KT19:edge-conn}, it is detailed how we convert the cactus of
$\widehat G$ to one of $G$.
\end{proof}

\paragraph{Corollary for Dynamic Graphs} Goranci et al.~\cite{GHT18:inc-edgeconn} have shown how the 
edge contraction from \cite{KT19:edge-conn} preserving all $3/2$-small cuts 
can be used in an incremental (edge insertions only) algorithm
maintaining edge connectivity. By plugging \Cref{thm:main} instead
of the algorithm from \cite{KT19:edge-conn} in their framework, we get the following corollary.
\begin{corollary} We can maintain the exact edge connectivity of an incremental
dynamic simple graph, whp, in $O(\log n)$ amortized time per edge insertion.
\end{corollary}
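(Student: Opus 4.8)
The statement to prove is the corollary on incremental dynamic edge connectivity: we can maintain exact edge connectivity of an incremental simple graph in $O(\log n)$ amortized time per edge insertion, whp.

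The plan is to invoke the framework of Goranci, Henzinger, and Thorup~\cite{GHT18:inc-edgeconn} essentially as a black box, substituting their contraction subroutine by the one from \Cref{thm:main}. First I would recall what that framework needs: it maintains, under edge insertions, a contracted ``sketch'' graph $\widehat G$ that preserves all non-trivial small cuts (in their case $3/2$-small suffices, and we have $(2-\eps)$-small, which is more than enough), has only $O(n)$ edges and $O(n/\lambda)$ (hence $O(n/\delta_{\text{current}})$) vertices, and on which a small-size static edge-connectivity algorithm can be re-run cheaply. Since $\widehat G$ has $O(n)$ edges, re-running, say, Gabow's algorithm via \Cref{lem:Gab} on it after each batch of insertions costs $O(n\log n)$; the amortization argument in \cite{GHT18:inc-edgeconn} shows this recomputation is only triggered rarely enough (roughly whenever $\delta$, and hence the target bound $k=\delta$, increases past a threshold, or after $\Theta(n)$ new edges have arrived) to bring the amortized cost per insertion down to $O(\log n)$.

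The key steps, in order, would be: (1) state precisely the interface their reduction uses from a ``certificate''/contraction procedure --- namely that from the current graph $G$ one can compute in $O(m)$ or $O(m\log n)$ time a contracted multigraph with $O(n)$ edges preserving all non-trivial small cuts whp; (2) observe that \Cref{thm:main} provides exactly this, with the extra feature that it also bounds the node count by $O(n/\delta)$, which only helps; (3) check that the ``non-trivial $(2-\eps)$-small'' guarantee dominates the ``$3/2$-small'' guarantee required by \cite{GHT18:inc-edgeconn}, so their correctness proof goes through verbatim; (4) note that singleton (trivial) cuts, i.e. degrees, are trivially maintainable incrementally in $O(1)$ per insertion, so combining with the small-cut value from $\widehat G$ gives the exact edge connectivity; (5) import their amortized-analysis lemma to conclude the $O(\log n)$ per-insertion bound, where the $\log n$ factor comes from the $\log n$ in the cost of rebuilding the sketch (our $O(m\log n)$ construction time and the $O(n\log n)$ cost of the static algorithm on the sketch) spread over the $\Omega(m/\text{polylog})$ or $\Omega(n)$ insertions between rebuilds.

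The main obstacle --- really the only subtle point --- is making sure the Monte Carlo randomness composes correctly with the adaptive adversary implicit in a dynamic setting: the contraction of \Cref{thm:main} is randomized and preserves all small cuts only whp, so one must argue that over the whole sequence of up to $\binom{n}{2}$ insertions the failure probability stays polynomially small (a union bound over $O(n^2)$ rebuilds, each failing with probability $n^{-\gamma}$ for a large constant $\gamma$, suffices, after inflating $\gamma$), and that the adversary does not adaptively exploit revealed randomness --- which is fine here because each rebuild uses fresh independent randomness and the output (the edge-connectivity value and cut) is all the adversary sees, exactly as handled in \cite{GHT18:inc-edgeconn}. Everything else is a direct plug-in, so I would keep the proof to a short paragraph citing \cite{GHT18:inc-edgeconn} and \Cref{thm:main}.
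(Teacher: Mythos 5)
Your proposal is correct and follows exactly the same route as the paper: the paper's ``proof'' is precisely the one-line observation that \Cref{thm:main} can be substituted for the \cite{KT19:edge-conn} contraction inside the incremental framework of Goranci et al.~\cite{GHT18:inc-edgeconn}, since the $(2-\eps)$-small-cut preservation strictly dominates the $3/2$-small requirement. You spell out the black-box interface, the amortization over $\Omega(n)$ insertions between rebuilds, and the union bound over rebuilds for the adaptive adversary --- all of which the paper only gestures at with the remark that the bound holds against adaptive updates --- but the underlying argument is identical.
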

We note the above bound holds against an adaptive user where future
updates may depend on answers to previous queries.

\subsection{Faster Contractions and an $O(m+n\log^3 n)$ Algorithm for \MC}
\label{subsec:secondSeq}

We now present a faster contraction algorithm for dense graphs. Later in \Cref{thm:fasterCut}, we explain how this leads to an $O(m+n\log^3 n)$ algorithm for edge connectivity.

\subsubsection{Faster Contraction via Data Structures}
\begin{theorem}\label{thm:main-dense} Let $G$ be a simple graph with $m$ edges, $n$ nodes, and
min-degree $\delta$. We have a randomized algorithm, running in $O(m+n
\log(n)\alpha(n,n))$ time, contracting edges of $G$ so that the resulting
multigraph $\widehat G$ has $O(n)$ edges and $O(n/\delta)$ nodes, and 
preserves all non-trivial $(2-\eps)$-small cuts of $G$ whp.
\end{theorem}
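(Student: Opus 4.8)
The plan is to reproduce the ``repeat‑and‑vote'' construction behind \Cref{thm:main}, but implemented lazily, so that the edge set of $G$ is scanned only a constant number of times while the $q=\Theta(\log n)$ repetitions each cost only near‑linear work \emph{in $n$} rather than in $m$. Concretely, I would first build once, in $O(m)$ time, an adjacency‑array representation of $G$ (so that a uniformly random incident edge of any vertex is produced in $O(1)$ time), and then maintain $q$ union--find forests $UF_1,\dots,UF_q$ over $V$, one per repetition; union--find is what replaces recomputing connected components from scratch and is the source of the $\alpha(n,n)$ factor.

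Each repetition $i$ is realized in $O(n\,\alpha(n,n))$ amortized time. First the random $2$-out: for every vertex sample two incident edges from the adjacency array and union their endpoints in $UF_i$; by \Cref{t:node_compression} (via \Cref{l:node_compression_number_of_components}) the forest $UF_i$ then has $O(n/\delta)$ classes whp. Second, the edge reduction, which must be carried out \emph{without} inspecting all $m$ edges: additionally contract a random edge set that includes each edge with probability $\Theta(1/\delta)$, but generate this set by drawing geometric gaps along the edge array so that only $O(1+m/\delta)$ edges are touched, and union them in $UF_i$; by \Cref{lem:random_edge_reduction}, after this step the multigraph $G/UF_i$ has $O(n)$ edges, and a fixed non‑trivial $(2-\eps)$-small cut survives with at least the constant probability equal to the product of the two constant probabilities from \Cref{t:node_compression} and \Cref{lem:random_edge_reduction} (the two sampling steps being independent). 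To make the $O(m/\delta)$ samples per repetition affordable when $\delta$ is small — and to keep the voting below within budget — I would first replace $G$, once and for all in $O(m)$ time, by a Nagamochi--Ibaraki $2\delta$-edge-connectivity certificate (\Cref{lem:NI}): it has minimum degree at least $\delta$, only $O(n\delta)$ edges, and preserves every $(2-\eps)$-small cut (all of size $<2\lambda\le 2\delta$); all repetitions are then run on it.

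Amplification and the final contraction follow \Cref{thm:main}. Precompute, in $O(n\log n\,\alpha(n,n))$ time in total, the class $\mathrm{find}_i(v)$ of every vertex in every $UF_i$; call an edge \emph{preserved in repetition $i$} when its endpoints have distinct classes in $UF_i$; and contract every edge preserved in fewer than $r=\Theta(\log n)$ of the repetitions. Since $\sum_i |G/UF_i| = O(n\log n)$, at most $O(n)$ edges are preserved $\ge r$ times, so only $O(n)$ edges survive; by the Chernoff bound of \Cref{thm:main} together with Karger's $O(n^3)$ bound on the number of $(2-\eps)$-small cuts, whp no edge of any non‑trivial $(2-\eps)$-small cut is among the contracted ones, so $\widehat G$ preserves all such cuts, has $O(n)$ edges, and — by the same min‑degree bookkeeping as in the proof of \Cref{thm:main} — has $O(n/\delta)$ nodes. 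Identifying the surviving edges and the induced vertex partition of $\widehat G$ is then one more $O(m)$ pass.

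The main obstacle is exactly this efficiency target: realizing each of the $\Theta(\log n)$ repetitions, and the voting, without ever scanning all $m$ edges inside them. The three ingredients that buy this are (1) union--find, which maintains connectivity incrementally in $\alpha(n,n)$ amortized time instead of recomputing components; (2) producing the $\Theta(1/\delta)$-density random edge set by geometric skipping rather than by flipping $m$ coins; and (3) the up‑front sparsification to $O(n\delta)$ edges via the sparse certificate, which simultaneously bounds the per‑repetition sample size and keeps the edge‑by‑edge voting within the $O(m+n\log n\,\alpha(n,n))$ bound. The routine verifications left are that passing to the certificate does not drop the minimum degree below $\delta$, that the two constant success probabilities indeed multiply, and that parallel edges and self‑loops created by the contractions are tracked consistently by edge identifier so that the per‑edge counts are well defined; all three are immediate from the cited lemmas.
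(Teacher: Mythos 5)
There is a genuine gap, and it sits exactly at the step this theorem exists to solve: the cost of the voting. Your per-repetition work is fine ($\bigO(n\,\alpha(n,n))$ for the $2$-out unions plus the geometrically-skipped $\Theta(1/\delta)$ edge sample), and the correctness side (constant per-repetition survival probability of a fixed non-trivial $(2-\eps)$-small cut, Karger's $O(n^3)$ cut-counting bound, $O(n)$ surviving edges, the min-degree bookkeeping for $O(n/\delta)$ nodes, and contracting the non-certificate edges safely via \Cref{lem:NI}) does go through essentially as in \Cref{thm:main}. But to ``contract every edge preserved in fewer than $r=\Theta(\log n)$ repetitions'' you must, for each candidate edge, compare the $q=\Theta(\log n)$ component labels of its endpoints; that is $\Theta(q)$ work per edge, and the Nagamochi--Ibaraki sparsification does not make this affordable, because the certificate has $\Theta(\min(m,n\delta))$ edges and $n\delta=\Theta(m)$ for near-regular graphs (e.g.\ for $K_n$ the certificate still has $\Theta(m)$ edges). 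So as described the voting phase costs $\Theta(m\log n)$, i.e.\ you have re-derived \Cref{thm:main}, not the claimed $\bigO(m+n\log n\,\alpha(n,n))$ bound; the final ``one more $O(m)$ pass'' assertion hides precisely the hard part.

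The paper closes this gap differently: each repetition is packaged as an \emph{online} data structure (\Cref{lem:3-out+uf}: a $2$-out contraction followed by $4\delta$ edge-disjoint union--find forests, with a uniformly random forest probed per queried edge, giving a per-edge preserve probability $\ge p_\eps/2$ and at most $O(n)$ preserve answers), and, crucially, a single top-level union--find over $G$ is used to \emph{skip} any edge whose endpoints were already identified by earlier contractions. Every non-skipped edge is queried in all $q$ structures, but each such query either ends in a global contraction (at most $n-1$ times) or contributes to the $O(n)$ total of preserve votes, so only $O(n)$ edges are ever queried, giving $O(n\log n\,\alpha(n,n))$ for the repetitions plus $O(m)$ for the skipping pass. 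If you want to salvage your non-adaptive variant, you would need an analogous device — e.g.\ fingerprinting the $q$-label vectors as in \Cref{s:mc_mpc}, which detects in $O(1)$ per edge the (at most $O(n\log n)$) edges preserved in at least one repetition and then counts only for those — but that yields $O(m+n\log^2 n)$, which suffices for the $O(m+n\log^3 n)$ application yet still falls short of the theorem's stated $O(m+n\log n\,\alpha(n,n))$.
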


%\motodo{This section needs more outline description \& structure (+ clean up).}
\paragraph{Note} In the above statement, $\alpha$ is the extremely slow-growing inverse Ackermann
function that Tarjan~\cite{T75} used to bound the complexity of the
union-find data structure. He showed that union-find with $u$ unions and $f$ finds 
over $s$ elements, initialized as a singleton sets, 
can be supported in $O(s+\alpha(f,u)f)$ total time. Here 
$\alpha$ is decreasing in $\lceil f/u\rceil$. We have 
$\alpha(f,u)=\alpha(u,u)$ if $f\leq u$ and $\alpha(f,u)=O(1)$ if,
say, $f\geq u\log\log\log u$. In general, we will use union-find to grow certain forests efficiently, in the following classic way.  We
are growing a forest $F$, and the union-find sets are the node sets
spanned by the trees in $F$. Initially, the forest has no
edges, and the nodes are singleton set. If we get an edge $(u,v)$,
we can use finds on $u$ and $v$ to check if they are spanned
by a tree in $F$. If not, we can add $(u,v)$ to the forest.

\medskip
\paragraph{Outline} Our main tool to prove \Cref{thm:main-dense} is the following 
on-line data structure version of the contraction process in \Cref{t:main_theorem}. After presenting this lemma, we use it to prove \Cref{thm:main-dense}.
\begin{lemma}\label{lem:3-out+uf} 
Let $G$ be a given a simple graph with $n$ nodes, $m$ edges, and minimum
degree $\delta$. We will construct a randomized $O(n)$ space data
structure $\widehat D$ that we feed edges from $G$ in any order, but
without repetitions. When given an edge, the data structure will
answer ``preserve'' or ``contract''.  The order we feed edges to
$\widehat D$ may adaptively depend on previous answers made be
$\widehat D$.  The data structure provides the following two guarantees: 
\begin{itemize}
\item The data structure $\widehat D$ answers preserve to at most $O(n)$ edges. 
\item Let $e^*$ be any edge of $G$ belonging to some $(2-\eps)$-small
cut $C^*$ of $G$. Then, with probability at least $p_\eps/2$, 
$\widehat D$ will answer preserve if queried on $e^*$\;\footnote{As a subtle
point, note that we are \emph{not} claiming that we with constant probability
simultaneously will preserve all edges from $C^*$.}. Here $p_\eps$ is
the constant probability from \Cref{t:node_compression}.
\end{itemize}
Finally, if the number of edges fed to $\widehat D$ is $f$, then
the total time spent by the data structure is $O(n+\alpha(f,n)f)$.
\end{lemma}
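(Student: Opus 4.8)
The plan is to have $\widehat D$ run, in an on-line union--find fashion, the composition of a random $2$-out contraction (\Cref{t:node_compression}) with the random edge subsample of \Cref{lem:random_edge_reduction}. When $\widehat D$ is built I first draw a random $2$-out subgraph $R$ of $G$ --- for each vertex, two uniformly random incident edges, with replacement --- which takes $O(n)$ time given adjacency-array access to $G$, and I compute the connected components of $R$ by an ordinary BFS, storing a pointer $\mathrm{comp}(v)$ from each vertex to its $R$-component. I keep a union--find structure $U$ whose ground set is the set of $R$-components (at most $n$ of them), initially all singletons. When an edge $e=(u,v)$ is fed I flip an independent coin that is heads with probability $\tfrac{1}{4\delta}$; on heads I perform $\mathrm{union}(\mathrm{comp}(u),\mathrm{comp}(v))$ in $U$ and answer \emph{contract}; on tails I answer \emph{contract} if $\mathrm{find}(\mathrm{comp}(u))=\mathrm{find}(\mathrm{comp}(v))$ and \emph{preserve} otherwise. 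Writing $S$ for the random set of edges whose coin came up heads, the state of $U$ after any prefix is exactly the components of $R$ together with the $S$-edges fed so far, so every \emph{contract} answer is consistent with the fixed target contraction of $R\cup S$. As a backstop I also keep a counter of \emph{preserve} answers and, once it reaches $Cn$ for a large enough constant $C$, switch to answering \emph{contract} on everything afterwards; this makes the first bullet hold \emph{deterministically}. The data structure stores only $R$, the array $\mathrm{comp}(\cdot)$ and $U$ --- $O(n)$ words --- it is built in $O(n)$ time, and each fed edge costs $O(1)$ plus a constant number of union--find operations; since $U$ undergoes at most $O(n/\delta)$ unions over $O(n/\delta)$ singletons and at most $2f$ finds, Tarjan's bound gives total time $O(n+\alpha(f,n)f)$.

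\textbf{Bounding the \emph{preserve} answers.} It remains to argue that, with high probability, the backstop never fires, i.e.\ that at most $O(n)$ \emph{preserve} answers are produced. By \Cref{t:node_compression}, whp $R$ has only $N=O(n/\delta)$ components; let $M=G/R$ be the resulting multigraph on $N$ nodes, and take the (adversarial, possibly adaptive) feed order as an edge-weight, declaring every $R$-edge lighter than every $G$-edge. An edge $e$ is answered \emph{preserve} only if its coin is tails and its endpoints are not yet connected in $U$, i.e.\ not connected by $R$ together with the $S$-edges preceding $e$; equivalently, $e$ is \emph{light} with respect to the random forest $F=\mathrm{MSF}(M_S)$ in the sense of Karger--Klein--Tarjan. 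Hence it suffices to bound the number of $F$-light edges of $M$. Scanning the edges in weight order, each light edge is an independent $\mathrm{Bernoulli}(\tfrac{1}{4\delta})$ trial, and a light edge whose trial succeeds merges two current components --- which happens at most $N-1$ times in total --- so the number of light edges is stochastically dominated by the number of $\mathrm{Bernoulli}(\tfrac{1}{4\delta})$ trials that contain at most $N-1$ successes, and a Chernoff bound caps this at $O(N\cdot 4\delta)=O\paren{\tfrac{n}{\delta}\cdot\delta}=O(n)$, whp (when $\delta$ is within a polylog factor of $n$ one argues instead with a cap of order $\delta\log n=\tilde O(n)$; in every case $C$ can be chosen so that the backstop is not reached whp). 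This is precisely the random-sampling estimate underlying \Cref{lem:random_edge_reduction}; see \cite{Karger:1995:RLA:201019.201022}. Importantly, the adversary chooses \emph{which} edges are light but cannot bias the fresh sampling coins nor the structural cap of $N-1$ successful trials, so the adaptivity of the feed order is harmless.

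\textbf{Preserving a fixed edge of a small cut.} Let $e^*=(u,v)$ lie in a non-trivial $(2-\eps)$-small cut $C^*$ of $G$ (trivial cuts are singletons, read off directly from the degrees, and need not be preserved). Consider the two independent events: (i) no edge of $R$ crosses $C^*$, and (ii) no edge of $S$ crosses $C^*$. If both occur, then $u$ and $v$ are on opposite sides of $C^*$ while every edge of $R\cup S$ stays within a side, so $\mathrm{comp}(u)$ and $\mathrm{comp}(v)$ are never merged in $U$; moreover $e^*\in C^*$ and (ii) force the coin on $e^*$ to be tails. Hence $\widehat D$ answers \emph{preserve} when queried on $e^*$, as long as the backstop has not fired, which by the previous paragraph happens only with probability $o(1)$. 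Event (i) is exactly the event that the random $2$-out contraction preserves $C^*$, so it has probability at least $p_\eps$ by \Cref{t:node_compression}; event (ii), independent of (i), has probability $\paren{1-\tfrac{1}{4\delta}}^{|C^*|}\ge\paren{1-\tfrac{1}{4\delta}}^{2\delta}\ge\tfrac{9}{16}$, since $|C^*|\le(2-\eps)\lambda\le 2\delta$. The difference $\tfrac{9}{16}p_\eps-\tfrac12 p_\eps=\tfrac1{16}p_\eps$ is a positive constant dominating the $o(1)$ backstop probability once $n$ is large, so $\widehat D$ preserves $e^*$ with probability at least $p_\eps/2$.

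\textbf{Main obstacle.} The bookkeeping --- space, time, and the consistency of the \emph{contract} answers --- is routine; the one delicate point is the $O(n)$ bound on \emph{preserve} answers under an \emph{adaptive} sequence of queries. The resolution above --- reducing that count to the number of light edges of a random subsample of the $2$-out-contracted multigraph, observing that the Karger--Klein--Tarjan negative-binomial argument only uses the independence of the sampling coins together with the structural cap of $N-1$ successful trials, and inserting a deterministic backstop to be safe --- is the heart of the proof; everything else is a direct composition of \Cref{t:node_compression} and \Cref{lem:random_edge_reduction}.
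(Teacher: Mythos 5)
Your proof is correct, but it takes a genuinely different route from the paper's. The paper's second phase mimics Nagamochi--Ibaraki sparse certificates: it initializes $\ell=4\delta$ union-find structures growing \emph{edge-disjoint forests} over the $R$-components of $G/S$, picks a uniformly random forest index per fed edge, and answers ``preserve'' only when the edge would join two trees in that forest. That gives the $O(n)$ bound on preserve answers \emph{deterministically} (each forest is over $O(n/\delta)$ vertices, hence has $O(n/\delta)$ edges, times $4\delta$ forests), and the cut-preservation probability follows cleanly from the fact that if $e^*$ survives $G/S$ then its endpoints are $<2\delta$-connected in $G/S$ and hence lie in different trees in more than half of the $4\delta$ forests. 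Your version instead composes the $2$-out contraction with the uniform edge sampling of \Cref{lem:random_edge_reduction}, using a single union-find with a fresh $\mathrm{Bernoulli}(1/(4\delta))$ coin per fed edge. The negative-binomial argument --- each ``preserve''-eligible edge is a fresh trial, at most $N-1$ successes are possible because each success merges two components --- is the right way to handle adaptivity, and is essentially correct.

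Two points deserve flagging. First, the Karger--Klein--Tarjan ``light-edge'' framing is a red herring here, since the MSF of $M_S$ and the notion of ``light'' edges are defined with respect to a fixed weight order, whereas your order is adaptive; what actually carries the argument is precisely the negative-binomial count you give afterward, and you would do better to drop the KKT detour. Second, your claim that the backstop ``happens only with probability $o(1)$'' (and ``$C$ can be chosen so that the backstop is not reached whp'') is not right in the regime $\delta=\Theta(n)$. There the number of $R$-components is $N=O(1)$ whp, and the number of trials before $N-1$ successes at rate $\Theta(1/n)$ is a negative binomial with a constant number of successes, so $\Pr[\text{trials} > Cn]=e^{-\Theta(C)}$, a \emph{constant}, not $o(1)$, and your own parenthetical notes correctly that the whp cap in this regime is $\tilde{O}(n)$, not $O(n)$, which contradicts the whp-no-backstop claim. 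The lemma is still salvaged: since $\eps$ (hence $p_\eps$) is a fixed constant, take $C$ large enough (as a function of $\eps$) that $e^{-\Theta(C)}\le p_\eps/16$, and your $\tfrac{9}{16}p_\eps-\tfrac1{16}p_\eps \ge p_\eps/2$ arithmetic goes through. You should state it this way rather than appealing to $o(1)$. The price you pay relative to the paper's forest-packing construction is exactly this extra probabilistic bookkeeping and backstop; the paper sidesteps it entirely because its $O(n)$ preserve bound is a counting fact about $4\delta$ forests on $O(n/\delta)$ vertices and needs no concentration at all.
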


\begin{proof}
The proof has steps that mimic that process of \Cref{t:main_theorem}, but in a more efficient way and as a data structure. In particular, we will have a part for $2$-out contraction, and a more elaborate part that mimics the effect of sparse-certificates. Next, we present these two parts.
 
\paragraph{First part} Given the graph $G$, first we make a 2-out sample $S$. We color all
the components of the graph with edge set $S$ in $O(n)$ time so that different components have different colors, i.e., we identify the components.  For
each vertex $v$, we store its component color $c(v)$. These component
colors are the vertices in $G/S$. An edge $(u,v)$ from $G$ corresponds
to an edge $(c(u),c(v))$ in $G/S$. Here edges preserve their edge
identifies, so if another edge $(u',v')$ has $c(u)=c(u')$ and
$c(v)=c(v')$, then $(c(u),c(v))$ and $(c(u'),c(v'))$ are viewed as
distinct parallel edges.

By \Cref{t:node_compression}, whp, $G/S$ has $n_{G/S}=O(n/\delta)$ vertices. If
this is not the case, we create a trivial data structure contract 
to all edges, so assume $n_{G/S}=O(n/\delta)$ vertices. By
\Cref{t:node_compression}, the probability that $C^*$ is preserved in $G/S$
is at least $p_\eps$. Assume below that $C^*$ is preserved in $G/S$.

\paragraph{Second part} This part intends to mimic the effect of sparse certificates---intuitively (though, not formally) similar to growing $4\delta$ maximal forests, one after another.
In particular, we initialize $\ell=4\delta$ union-find data structures to grow
edge-disjoint forests $F_1,\ldots,F_\ell$ over the vertices in $G/S$.
Initially, there are no edges in the forests. From a union-find
perspective, we can think of it as if we have $\ell$ disjoint
copies of the nodes in $G/S$, so to ask if $c(u)$ is connected to $c(v)$
in $F_i$, we ask if $c(u)_i$ is in the same set as $c(v)_i$.

When given an edge $(u,v)$ from $G$, we can ask if it is connected in some
$F_i$ in the sense that $c(u)$ and $c(v)$ belong to the same tree in
$F_i$.  If not, we can $(c(u),c(v))$ to $F_i$. Note that if $(u,v)$
got contracted in $G/S$, then $c(u)=c(v)$, and then $c(u)$ and $c(v)$
are trivially connected in every forest $F_i$.
We will follow the rule that each edge $(u,v)$ from $G$ may be
added as an edge $(c(u),c(v))$ to a single forest
$F_i$. This way the forests remain edge-disjoint,
so if $c(u)$ and $c(v)$ are
connected in $k$ different forests $F_i$, then $c(u)$ and $c(v)$ are 
$k$-connected by the edges added to all the forests. This
implies that $u$ and $v$ must be $k+1$ connected in $G/S$.

Now, consider our edge $e^*=(u^*,v^*)$ from our
$(2-\eps)$-small cut $C^*$ of $G$ which we assumed was preserved in
$G/S$. Then $c(u^*)$ and $c(v^*)$ are at most $|C|<2\delta$ connected in
$G/S$, so $c(u^*)$ and $c(v^*)$ are connected in less than half
of the $4\delta$ forests $F_i$.
This leads to the following randomized algorithm to handle a
new edge $(u,v)$ from $G$. We pick a uniformly random index
$i\in\{1,\ldots,\ell\}$ and ask if $c(u)$ and $c(v)$ are connected in
$F_i$. If not, we add $(c(u),c(v))$ to $F_i$ and answer
``preserve''. Otherwise, we answer ``contract''. The latter would be a
mistake on $e^*$, but assuming that $G/S$ preserved $C^*$, we know
that $c(u^*)$ and $c(v^*)$ are connected in less than half the
forests, and then the probability of a false ``contract'' is bounded by
$1/2$.

\medskip
\paragraph{Overall error probability} 
For the overall probability on correctly answering ``preserve'' on $e^*$, we
first want the good event that $S$ to not intersect $C^*$. 
By \Cref{t:node_compression}, 
this first good event happens with probability at least $p_e$.
Conditioned on the first good event, we want
our the random forest $F_i$ to answer ``preserve'' on $e^*$, which
happened with probability at least $1/2$, so the overall probability 
that we answer ``preserve'' on $e^*$ is at least $p_\eps/2$
as desired.

\medskip
\paragraph{The number of preserved edges}
The edges we add to $F_i$ form a forest over the $O(n/\delta)$ nodes
in $G/S$, so we can only add $O(n/\delta)$ edges to each of
the $\ell=4\delta$ different $F_i$. Thus we conclude that
there are at most $O(n)$ edges that we add to the edge-disjoint forests
$F_i$. 

\medskip
\paragraph{Time complexity}
Each time we get an edge, we check connectivity of $s=O(1)$
forest, so we make $O(f)$ find operations. We have $\ell$
copies of each node in $G/S$, so the total number of
elements is $\ell n_{G/S}=O(n)$. We conclude
that the total time spent by our
data structure is $O(n+\alpha(f,n)f)$.
\end{proof}

Having this helper data structure version of a single contraction process, we are now ready to prove \Cref{thm:main-dense}. That is, we present an $O(m+n
\log(n)\alpha(n,n))$ time contraction down to $O(n)$ edges and $O(n/\delta)$, while preserving all non-singleton $(2-\epsilon)$ minimum cuts, with high probability.
\begin{proof} [Proof of \Cref{thm:main-dense}]
We are going to use the data structure from \Cref{lem:3-out+uf} 
in much the same way as we used contracted graph from 
\Cref{t:main_theorem}. Concretely, we want to amplify the success using certain repetition and voting rules so that we preserve all non-singleton $(2-\epsilon)$ minimum cuts, with high probability. Below we present a direct translation, and later we show how to tune it.

\medskip
\paragraph{The procedure} Let $p'_\eps=p_\eps/2$ be the error probability from \Cref{lem:3-out+uf}.  We apply the lemma $q'=O((\log n)/p'_\eps)=O(\log
n)$ times, initializing independent data structures $\widehat
D_1,\ldots,\widehat D_{q'}$.

Consider any edge $e^*$ belonging to some $(2-\eps)$-small cut $C^*$ of
$G$.  Each $\widehat D_i$ preserves $e^*$ with probability at least
$p'_\eps$, so the expected number of $\widehat D_i$ preserving $e^*$
is at least $\mu'=p'_\eps q'$.  By Chernoff, the probability that less than
$r'=\mu'/2=p'_\eps q'/2$ of the $\widehat S_i$ want to preserve $e^*$
is bounded by $\exp(-\mu'/8)=\exp(-p'_\eps q')$. For any given
$\gamma$, this is $O(n^{-\gamma})$ for $q'\geq 8(\ln n)\gamma/p'_\eps=16(\ln
n)\gamma/p_\eps$.

We can now take each edge $e$ in $G$, and query all the $\widehat
S_i$ counting how many answer preserve. If this number is less than
$r'$, then we contract $e$. Otherwise we say that $e$ was \emph{voted
  preserved}. However $e$ could still be lost as a self-loop due to
other contractions.

\medskip
\paragraph{The number of preserved edges}
Since each $\widehat S_i$ answers preserve
for $O(n)$ edges, the total number of preserve answers
from all the $\widehat S_i$ 
is $q'\cdot O(n)$. The number of edges that are preserved $r'$ times
is therefore $q\cdot O(n)/r' =O(\gamma n/p_\eps)=O(n)$. All
other edges are contracted, so the resulting graph $\widehat G$
ends up with $O(n)$ edges, as desired for \Cref{thm:main-dense}.

\medskip
\paragraph{We preserving all small cuts}
As described above, our edge $e^*$ from some $(2-\eps)$-small cut was
voted preserved with probability $1-O(n^{-\gamma})$. This implies that
with probability $1-O(n^{-\gamma'})$ with $\gamma'=\gamma-2$, every
edge $e$ belonging to any $(2-\eps)$-small cut is preserved. In
particular, given any $(2-\eps)$-small cut $C$, we get that all edges
in $C$ are voted preserved meaning that none of them are contracted
directly. Because $C$ is a cut this implies that all of $C$ survives
the contractions, that is, none of the edges in $C$ can be lost as
self-loops due to other contractions. Thus we conclude that, whp,
$\widehat G$ preserves all $(2-\eps)$-small cuts $C$ in $G$.

\medskip
\paragraph{An issue with the complexity, and fixing it}
Unfortunately, our total run time is still bad because for every
edge in $G$, we query all $q$ data structures $\widehat D_i$. However, as
our last trick, we maintain a union-find data structure telling
which vertices in $G$ that have already been identified due to
previous contractions. We now run through the edges of $G$ as
before, but if we get to an edge $(u,v)$ where $u$ and $v$
have already been identified, then we skip the edge since contracting
it would have no effect. All the above analysis on the properties 
of $\widehat G$ is still valid.

With the above change, checking if end-points have already been
identified, we claim that we can make at most $O(n)$ queries to the
data structures. We already saw that we could only vote preserve $O(n)$
times. However, if the data structures instead vote to contract
$(u,v)$, then this is a real contraction reducing the number of
vertices in $\widehat G$, so this can happen at most $n-1$ times.

Every time we query the data structures, we query all $q=O(\log n)$ of them.
In each we have $f=O(n)$ queries, so the time spent in each is 
$O(\alpha(n,n)n)$, adding up to a total time of 
$O(n(\log n)\alpha(n,n))$. In addition, the top-level union-find data
structure over the contracted vertices in $\widehat G$ uses
$O(n+\alpha(m,n)m)\ll O(m+n\log\log\log n)$ time, so our total
time bound is $O(m+n(\log n)\alpha(n,n))$. This completes the
proof of \Cref{thm:main-dense}.
\end{proof}

\subsubsection{Faster Minimum Cut in Dense Graphs}
We can now complete our $O(m+n\log^3 n)$ algorithm.
\begin{theorem}\label{thm:fasterCut} We can find the edge connectivity and some min-cut
of a simple graph $G$ with $m$ edges and $n$ nodes in $O(m+n\log^3 n)$ whp.
We can also find the cactus representation of all min-cuts of $G$ in
$O(m+n\log^{O(1)} n)$ time.
\end{theorem}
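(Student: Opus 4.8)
The plan is to follow the three-step recipe of \Cref{s:review}: contract, solve on the small multigraph, then reconcile with the min-degree. Concretely, I would first invoke \Cref{thm:main-dense} to build, in $O(m+n\log(n)\alpha(n,n))$ time, a multigraph $\widehat G$ with $O(n)$ edges and $O(n/\delta)$ nodes that whp preserves every non-trivial $(2-\eps)$-small cut of $G$, and in particular every non-trivial minimum cut. Since $\widehat G$ is a contraction of $G$, every cut of $\widehat G$ is also a cut of $G$, so $\lambda_{\widehat G}\ge \lambda_G$; conversely, whenever $\lambda_G<\delta$ every minimum cut of $G$ is non-trivial and hence survives in $\widehat G$ whp, so $\lambda_{\widehat G}=\lambda_G$ in that case. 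Therefore $\min\{\delta,\lambda_{\widehat G}\}=\lambda_G$ whp, and the witnessing cut — either a min-degree vertex of $G$, or the cut found in $\widehat G$ pulled back along the contraction — is a genuine minimum cut of $G$.

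Next I would run Karger's randomized minimum cut algorithm for multigraphs~\cite{Karger2000} on $\widehat G$. Its running time is $O(m_{\widehat G}\log^3 n_{\widehat G})$, and since $m_{\widehat G}=O(n)$ this is $O(n\log^3 n)$. Adding the $O(m+n\log(n)\alpha(n,n))\subseteq O(m+n\log^2 n)$ cost of the contraction, plus the $O(m)$ needed to read the input, compute $\delta$, and translate the answer back to $G$, the total is $O(m+n\log^3 n)$, which proves the first half of the theorem. For the failure probability: \Cref{thm:main-dense} already guarantees whp that all non-trivial min-cuts are preserved, and Karger's algorithm contributes its own independent polynomially small failure probability, so a union bound keeps the whole procedure correct whp. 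Any future improvement on multigraph min-cut would immediately feed through here.

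For the cactus, I would instead feed $\widehat G$ to an algorithm computing the cactus representation of all minimum cuts of a multigraph in $O(m_{\widehat G}\log^{O(1)}n)$ time (Karger's framework extended to cactus construction, or a later near-linear cactus algorithm), which on our $O(n)$-edge graph costs $O(n\log^{O(1)}n)$. Because $\widehat G$ preserves exactly the non-trivial min-cuts of $G$ whp and introduces no cut of size $<\lambda_G$, its cactus encodes precisely the non-trivial minimum cuts of $G$; the trivial min-cuts — present exactly when $\lambda_G=\delta$ — are added back, and the cactus of $\widehat G$ is lifted to a cactus of $G$, using the conversion described in \cite{KT19:edge-conn} (as was already done in the proof of \Cref{thm:Gabow}). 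Summing the contraction and cactus costs gives $O(m+n\log^{O(1)}n)$.

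I expect the main obstacle to be the correctness bookkeeping rather than the running time: one must argue carefully that no non-trivial min-cut is lost (this is exactly \Cref{thm:main-dense}), that no spurious small cut of $\widehat G$ can masquerade as a min-cut of $G$ (immediate since a contraction only merges vertices, so $\lambda_{\widehat G}\ge\lambda_G$), and that the cactus of $\widehat G$ together with the min-degree vertices assembles correctly into the cactus of $G$ in all three regimes $\lambda_{\widehat G}>\delta$, $\lambda_{\widehat G}=\delta$, and $\lambda_{\widehat G}<\delta$ — where the last step is handled by the conversion of \cite{KT19:edge-conn}.
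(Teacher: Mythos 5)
Your proposal is correct and follows essentially the same route as the paper: apply \Cref{thm:main-dense} to obtain the contracted multigraph with $O(n)$ edges, run Karger's multigraph min-cut algorithm~\cite{Karger2000} on it (and the Karger--Panigrahi cactus algorithm~\cite{DBLP:conf/soda/KargerP09} for the cactus, lifted back to $G$ via \cite{KT19:edge-conn}), and reconcile with the minimum degree exactly as in the proof of \Cref{thm:Gabow}. The extra bookkeeping you spell out (why $\lambda_{\widehat G}\ge\lambda_G$ and why the trivial cuts are handled by $\delta$) is the same implicit reasoning the paper relies on.
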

\begin{proof}
Much like we used Gabow's algorithm's~\cite{Gabow1991,Gab16} 
on the contracted graph in \Cref{thm:main} to prove \Cref{thm:Gabow}, we now apply Karger's~\cite{Karger2000} edge connectivity
algorithm to the contracted graph in \Cref{thm:main-dense}, 
and his algorithm with Panigrahi~\cite{DBLP:conf/soda/KargerP09} to
get the cactus representation.
\end{proof}
Note that because we only spend $O(m+n(\log n)\alpha(n,n))$ time on
constructing the contracted graphs, we would instantly get better
results, if somebody found an improvement to Karger's algorithm~\cite{Karger2000}.  However, due to parallel edge resulting from
contractions, it has to be an algorithm working for general graphs, so
we cannot, e.g., use the recent algorithm of Henzinger et
al.~\cite{DBLP:conf/soda/HenzingerRW17}.

\section{\MC in the \congest model}
\label{sec:mc_congest}

To obtain the $\tilde \bigO\paren{n^{0.8}D^{0.2} + n^{0.9}}$ bound claimed by \cref{t:mc_congest}, we use the general approach proposed by Daga et al.~\cite{daga2019}, except that we replace a key component of their work (a contraction based on expander decompositions that preserves nontrivial minimum cuts) with random $2$-out contractions. This leads to a substantial simplification and time complexity improvement. 

    In their work~\cite{daga2019},  Daga et al. propose an algorithm that given a partition of vertices into disjoint connected sets $P = \set{V_1 \cup V_2 \dots \cup V_k}$, such that the edges on non singleton minimum cuts are only between the sets of vertices --- i.e., contracting these sets preserves non-trivial minimum cut --- finds the minimum cut of the graph $G/P$ in time $\tilde \bigO(D(V) + k + \sum_{i=1}^k D(V_i))$. Here, $G/P$ is a graph in which we contract all sets $V_1, V_2 \dots V_k$ into vertices and $D(S)$ is a diameter of a graph induced by $S$. We will make use of this algorithm, in a black-box fashion. But let us briefly discuss how  Daga et al. used it: 
    They provide an algorithm based on \emph{expander decompositions} that in sublinear time of $\tilde{O}(n/\delta^{1/88})$, finds such a partition with sublinear $k=O(n/\delta^{44})$ and $\sum_{i=1}^k D(V_i)=O(n/\delta^{1/40})$. This leads to an $\tilde{O}(n/\delta^{1/88})$ rounds algorithm, which is sublinear for graph with sufficiently large minimum cut. They then obtain their round complexity by combining this algorithm for graphs with large minimum cuts with the algorithm by Nanongkai and Su~\cite{Nanongkai2014-distributed-cut} for graphs with small minimum cut.
		
		    To get a faster algorithm, we observe that random $2$-out contractions provide a vastly simpler and also more efficient way of contracting the graph into fewer nodes (where contractions have small diameter on average), while preserving non-trivial minimum cuts, as desired in the algorithm of Daga et al.~\cite{daga2019}. As a result, by plugging in $2$-out contractions in the approach of Daga et al., we obtain an $\tilde{O}(n/\delta)$ round algorithm. This is a substantial improvement on the $\tilde{O}(n/\delta^{1/88})$ round algorithm of Daga et al. (notice that these algorithms will be applied for graphs with large minimum cut and thus large $\delta$). Again, combining this with the algorithm of Nanongkai and Su~\cite{Nanongkai2014-distributed-cut} for graphs with small minimum cut gives the final round complexity.

In the next two subsections, we first analyze the diameter of the components of a $2$-out, and then plug this property into the framework of Daga et al.~\cite{daga2019} to obtain our faster distributed edge connectivity algorithm.		
		%\motodo{This part reads strange: we should say how we}

 \subsection{Diameter of the components of $2$-out subgraph}
We now upper bound the total diameter of the connected components of the $2$-out subgraph.

\begin{lemma}\label{lem:diameter}  With high probability, the sum of diameters of the components of a random 2-out is $O(n(\log \delta)/\delta)$.
\end{lemma}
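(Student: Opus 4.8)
The plan is to split the components of the $2$-out into \emph{small} and \emph{large} ones, and to show that each component has diameter only logarithmic in its own size. First I would invoke \Cref{l:node_compression_number_of_components}: with high probability the $2$-out subgraph has at most $k=O(n/\delta)$ connected components $C_1,\dots,C_k$. Every component $C_i$ with $|C_i|\le c\log\delta$ then contributes at most $c\log\delta$ to $\sum_i\operatorname{diam}(C_i)$ (a component's diameter never exceeds its number of vertices), so all such components together contribute $O\big(k\log\delta\big)=O\big(n(\log\delta)/\delta\big)$ for free. Hence it suffices to prove the claim that, with high probability, every component $C$ of the $2$-out has $\operatorname{diam}(C)=O(\log(|C|+2))$: granting this, the large components contribute $\sum_{i:\,|C_i|>c\log\delta}O(\log|C_i|)$, and since $t\mapsto\log t$ is concave this sum is maximized when all sizes are equal, giving $O\big(k\log(n/k)\big)=O\big((n/\delta)\log(\delta)\big)$ once more using $k=O(n/\delta)$ and $\sum_i|C_i|=n$. (If $\delta=O(1)$ the claimed bound is $\Omega(n)$ and the trivial estimate $\sum_i\operatorname{diam}(C_i)\le\sum_i(|C_i|-1)<n$ already suffices, so one may assume $\delta$ is larger than any fixed constant.)

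To prove the claim I would fix a vertex $v$, bound the probability that its eccentricity inside $C_v$ exceeds $c'\log(|C_v|+2)$, and take a union bound. Run a BFS in the $2$-out from $v$, revealing the two sampled edges of a vertex the first time it is reached, and let $b_\ell=|B(v,\ell)|$ be the ball sizes. The key is a geometric-growth estimate: as long as $b_\ell\le\delta/8$, every frontier vertex $u$ has at least $\delta/2$ of its $\ge\delta$ neighbours outside $B(v,\ell)$, so each of $u$'s two samples escapes the ball with probability $\ge 1/2$, and because samples land on (near-)uniform random neighbours, collisions among the escaping samples are rare (every vertex is hit with probability $\le 1/\delta$ while the ball is $o(\delta)$). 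Thus the number of newly discovered \emph{distinct} vertices is, with good probability, a constant factor of the frontier, so $b_{\ell+1}\ge\min\big(|C_v|,(1+c_0)b_\ell\big)$ with probability $1-\exp(-\Omega(|L_\ell|))$ up to a lower-order correction; the ball sizes therefore stochastically dominate a supercritical branching process and reach $\min(|C_v|,\Theta(\delta))$ within $O(\log\delta)=O(\log|C_v|)$ levels. Once $b_\ell$ has passed $\Theta(\delta)$ the component already has $\Theta(\delta)$ vertices, and restarting the growth argument from the current frontier each time the ball doubles shows that only $O(\log|C_v|)$ levels in total are needed to exhaust $C_v$, i.e.\ $\operatorname{ecc}(v)=O(\log|C_v|)$ and hence $\operatorname{diam}(C_v)\le 2\operatorname{ecc}(v)=O(\log|C_v|)$.

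The main obstacles, and where I expect the real work, are the following. (i) When the frontier is tiny (say a singleton) there is no concentration and the ball might grow by only one vertex per level; but while $b_\ell$ is small relative to $\delta$, a singleton frontier becomes a doubleton (and avoids dying) with constant probability, so the probability that the frontier stays of size $O(1)$ for $t$ consecutive levels—equivalently, that the component contains a long induced ``thin'' path—is $\exp(-\Omega(t))$, negligible for $t=\Omega(\log|C|)$, and one can absorb these rare events into the union bound after noting that only $O(n/\delta)$ components have size $\Omega(\delta)$. (ii) Since the $2$-out is undirected, a vertex may join the BFS at level $\ell{+}1$ either because a level-$\ell$ vertex sampled it or because it sampled a level-$\ell$ vertex; the latter can inject many vertices at once, but this only helps the ball grow (so all lower bounds on the growth rate survive), and if it ever injects $\gg b_\ell$ vertices then the component is large and we are past the critical regime. (iii) The most delicate point is making the branching-process domination together with the union bound go through for all vertices simultaneously without losing more than a constant factor beyond $\log|C|$ per component—in particular one cannot naively union over paths of $G$ (the number of such paths depends on the maximum degree, not on $\delta$), so the exploration must be carried out adaptively, exploiting that a vertex of high in-degree in the $2$-out already forces a large ball and hence a large (and therefore rare) component.
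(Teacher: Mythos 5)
Your plan has a genuine gap at its core: the claim that, with high probability, \emph{every} component $C$ of the $2$-out has diameter $O(\log(|C|+2))$ is false. The local ball-growth estimate you describe is sound (indeed, the paper proves a version of it: balls of radius $O(\log\delta)$ reach size $\Omega(\delta)$ whp), but it does not imply that a whole component has logarithmic diameter. Consider a ``chain of cliques'' $K_{\delta+1}-K_{\delta+1}-\cdots$ in which consecutive cliques share one vertex. In the $2$-out of this graph, the subgraph induced on each clique is essentially a random $2$-out of $K_{\delta+1}$ (average degree about $4$, connected, diameter $\Theta(\log\delta)$), and each shared vertex is, with probability bounded away from $0$, sampled by some vertex on each side; so the $2$-out contains chain segments of $\Theta(\log(n/\delta))$ consecutive cliques whp. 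Such a segment is a single component of size $|C|=\Theta(\delta\log(n/\delta))$ but diameter $\Theta(\log(n/\delta)\cdot\log\delta)$ — far larger than $\log|C|$ (take $\delta=\sqrt n$ to see a $\log^2 n$ vs.\ $\log n$ gap). Note that the lemma's \emph{sum} bound $O(n(\log\delta)/\delta)$ is perfectly consistent with this example; it is the per-component bound you are trying to prove that fails, and with it the concavity step that converts it into the total.

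This is exactly why the paper does not try to bound component diameters. Instead it proves the local lemma (for any $v$, $\Pr[|B_{O(\log\delta)}(v)|<\delta/200]\le(200/\delta)^2$), and then bounds the number of vertices whose radius-$O(\log\delta)$ balls are \emph{pairwise disjoint} by $O(n/\delta)$, via a union bound over $\binom{n}{k}$ candidate sets $U$ together with a Chernoff bound on the number of ``failing'' balls in $U$. A shortest ball-path inside any component gives $\Omega(\Delta)$ disjoint balls for diameter $\Delta$, so the diameter sum is at most $O(\log\delta)$ times the number of disjoint balls, i.e.\ $O(n(\log\delta)/\delta)$. That global disjoint-ball counting is what absorbs occasional long, snake-like components, which a per-component bound cannot. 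Your step (i) is a red herring (thin-path events are indeed negligible), and (iii) correctly senses that the naive union bound is problematic, but the deeper issue is that the statement being union-bounded is simply not true.
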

\begin{proof}
Since the total diameter is trivially bounded by $n$, we can assume
$\delta=\omega(1)$.

For any given vertex $v$, we let $B_i(v)$ denote the ball of radius $i$ around
$v$ in the $2$-out sampled subgraph. Here $B_0(v)=\{v\}$. We call
$L_i(v)=B_i(v)\setminus B_{i-1}(v)$ the $i$th level around $v$. We
now grow the ball $B(v)$ around $v$, one level at the time. Suppose we have
already grown $B_i(v)$. To add the next level, we take the vertices
$u\in L_i (v)$, one at the time, and generate its two out-edges
$(v,w)$, one at the time. The edge ``gets out'' if leads to a new
vertex $w$ which is not already in $B(v)$. In this case $w$ is added
to $B(v)$; otherwise the sampled edge ``gets stuck''. Since the
next sampled edge is sampled uniformly from the at least $\delta$ edges
leaving $u$, the probability that it gets stuck is at most 
$|B(v)|/\delta$.

Key to our proof, we show the
following lemma:
\begin{lemma}\label{lem:ball} For any given vertex $v$, we have\footnote{$\lg=\log_2$.}
\[\Pr[|B_{2\lg (\delta/200)}(v)|\leq\delta/200]\leq (200/\delta)^2.\]
\end{lemma}
\begin{proof}
Set $s=\delta/200=\omega(1)$. We want to show that $\Pr[B_{2\lg
    s}(v)\leq s]\leq 1/s^2$. We say the ball $B_{2i}(v)$ is successful if $|B_{2i}(v)|\geq s$
or $|L_{2j}(v)|\geq 2 |L_{2(j-1)}(v)|$
for $j=1,\ldots,i$. For $i\leq \lg s$, this implies $|L_{2i}(v)|\geq 2^i$.
It also implies that the total number of even level vertices in $B_{2i}(v)$
is less than $2|L_{2i}(v)|$. 

Assume that the ball $B_{2i}(v)$ is successful and $|B_{2i}(v)|<s$.
Set $b=|L_{2i}(v)|$. We want to bound the probability that 
$|L_{2(i+1)}(v)|<2b$. Consider the $2$ level
boundary growth from the vertices in $L_{2i}(v)$. The growth from 
$u\in L_{2i}(v)$ can lead to at most $6$ new vertices, and this holds whenever we grow from an even
level vertex, so the total number of vertices reached in $B_{2(i+1)}(v)$ is
at most $6$ times the number of even level vertices in $B_{2i}(v)$,
hence at most $12 b$. This means that as we grow $B(v)$ from $B_{2i}(v)$ to $B_{2(i+1)}(v)$,
the probability that any sampled edge gets stuck is at most $12b/\delta$.

If no sampled edge got stuck during the 2-level growth from $L_{2i}(v)$, 
then we would get $|L_{2i+2}(v)|= 4b$. Each sampled edge
getting stuck, can reduce this number by at most 2, so to get 
$|L_{2(i+1)}(v)|<2b$, we need at least $b+1$ sampled edges to get stuck,
and this is out of at most $6b$ sampled edges.
The probability of this error event is bounded by 
\[p_b={6b\choose b+1}(12b/\delta)^{b+1}<(72eb/\delta)^{b+1}.\]
For $2\leq b\leq \delta/200$, we have $p_b=O(1/\delta^3)$. We
do the two level growth at most $\lg s$ times, so
the probability that we fail to get to size $s$ is
$(72e/\delta)^{2}+O((\log\delta)/\delta^3)<(200/\delta)^2$.
\end{proof}
Around every vertex $v$, we consider the ball $B(v)=B_{r}(v)$ with
radius $r=2\lg (\delta/200)$ as in Lemma \ref{lem:ball}. We say
  that two vertices are ball neighbors if their balls intersect. Then
  a ball path of length $\ell$ is a sequence of vertices
  $v_0,\ldots,v_\ell$ where $B(v_i)$ and $B(v_{i+1})$ intersect. This
  implies that there is a regular path from $v_0$ to $v_\ell$, passing through
  $v_1,\ldots,v_{\ell-1}$, of length at most
  $2r\ell$. We define ball distance and ball diameter of
  a component in our 2-out subgraph accordingly. To prove that the diameter sum is
  $O(n(\log\delta)/\delta)$, it suffices to prove that the ball
  diameter sum is $O(n/\delta)$.

Consider some component $A$ of our 2-out subgraph. Suppose $A$ has
ball diameter $\Delta$. This means that there are two vertices $v$ and
$w$ such that the shortest ball path between them is a ball path
$v=v_0,\ldots,v_\Delta=w$ of length $\Delta$. Because this is a
shortest ball path, we know that the $\lceil \Delta/2\rceil$ balls of the
even vertices $B(v_0),B(v_2),\ldots ,B(v_{\lceil \Delta/2\rceil})$ are
all disjoint. It follows that if the ball diameter sum is $\Delta^*$,
then our 2-out subgraph has at least $\Delta^*/2$ vertices with disjoint balls. 
Thus
the theorem follows if we can prove that, whp, there can only be only
$O(n/\delta)$ vertices with disjoint balls.

Let $k=c n/\delta$ where $c$ is some large constant which is at least
$400$. We can pick
a set $U$ of $k$ vertices in ${n \choose k}< (en/k)^k=(e\delta/C)^k$
ways. For any such set $U$, we will show that the probability that the
vertices in $U$ have disjoint balls is very small.

We take the vertices $v\in U$, one at the time, and grow the ball 
$B(v)=B_r(v)$. As long as $B(v)$ has not intersected
any previous ball, the growth with new edges for $B(v)$ is completely 
independent of the samples done growing balls from previously considered 
vertices in $U$. We say $B(v)$ fails if we get
$|B(v)|\leq\delta/200$ while $B(v)$ does
not intersect any previous ball. By Lemma \ref{lem:ball}, the
failure probability is bounded by $p=(200/\delta)^2$, and this 
is no matter how previous balls were grown. 

We can have at most $200 n/\delta\leq k/2$ non-intersecting balls
of size $\delta/200$, so to stay disjoint, we must have at least
$k/2$ failing balls from the given set $U$. However, we only expect $pk$ 
failing balls, so by Chernoff, the probability of getting
$k/2$ failing balls, is bounded by
\[(e/(1/(p/2))^{k}=(2e(200/\delta)^2)^k.\]
This then bounds the probability that the balls from $U$ are all disjoint.
Union bounding over the less than $(e\delta/c)^k$ choices for the set $U$,
we conclude that the probability of getting any $k$ disjoint balls in our
$2$-out subgraph is at most
\[(e\delta/c)^k(2e(200/\delta)^2)^k=(80000e^2/(\delta c)^k.\]
With $c\geq 80000e^2$, this is bounded by
$1/\delta^k=1/\delta^{cn/\delta}$.  This bound is maximized for
$\delta=n$, so our probability of getting $k$ disjoint balls is
bounded by $n^{-c}$. Therefore, whp, we get at most
$O(n/\delta)$ disjoint balls of radius $r=2\lg(\delta/200)$. The ball
diameter sum was at most twice as big as the number of disjoint balls,
and the diameter sum was only $2r$ times bigger than the ball diameter
sum. Hence, whp, diameter sum of our 2-out subgraph is $O(n(\log\delta)/\delta)$.
\end{proof}

The above theorem is tight in the sense that if a graph consists of $n/\delta$ disjoint cliques of size
$\delta$, then whp, the diameter sum of a 2-out subgraph is $\Theta(n
  (\log \delta)/\delta)$.

\begin{remark}
\label{remark:3outDiam}
 We can choose a subset of the edges of the $2$-out so that the spanning subgraph defined by them has $\bigO((n/\delta)\log n)$ components, each with diameter $\bigO(\log n)$.
\end{remark}
\begin{proof}
Our goal is to pick $O((n/\delta)\log n)$ centers. Each
vertex use its nearest center, which should be
at distance $O(\log n)$. All we need to keep are shortest
path trees from the centers to the vertices that use them.
If $\delta=O(\log n)$, we can just pick all vertices as centers,
so we may assume that $\delta\geq C\log n$ for an arbitrarily large
constant $C$.

Basic idea is as follows. In the 2-out subgraph, we will show that
most vertices have $\Theta(\delta)$ vertices at distance $O(\log n)$,
and all such vertices are served, whp, if we pick $\Theta((n/\delta) \log n)$
random centers. The remaining vertices will be served
from $O((n/\delta) \log n)$ special centers.

First use a 1-out sample $S_1$. For any $x$, if we start
from a vertex $v$, and follow the 1-out edges, the probability
that we do not reach $c$ vertices is less than $x^2/\delta$. Hence,
as in the proof of \cref{l:node_compression_number_of_components}, by Chernoff we conclude that
only $O(n/\delta+\log n)$ vertices get connected to less than $5$
vertices. For each component of size less than $5$ in $S_1$, we
pick one as a special center, and we call this a special component.

We now consider the components of $S_1$ of size at least 5. 
If any has diameter more than 9, we
cut it into components of diameter at most 9 and size at least 5.
We now have a subgraph of $S_1$ where all components are of diameter at 
most 9. We have $O(n/\delta+\log n)$ special components of size at most 3.
The rest are called regular, and they all have size at least 4.
We contract all these components into super nodes that
are called regular or special if the component was called regular or special.
The size of a super node is the number of subsumed original vertices.
Since $S_1$ only exposed one out edge from each vertex, we know that
a super nodes of size $x$ has $x$ unexposed edges. For regular
nodes, $x\geq 5$. 

Now, from all regular nodes, we expose some of their unexposed out
edges, creating a new sample $S_2$ disjoint and independent from our
first 1-out sample $S_1$. More precisely, if the vertex has size $x$,
we expose $\lceil 3x/5 \rceil$ edges for $S_2$.

Similar to the proof of \cref{l:node_compression_number_of_components}, we
now take one regular node at the time, and follow the $S_2$
growth until we either hit a special node, or reach total size
at least $\ell=c\log n$, for some large enough constant $c$. Recall
here that we have $C\log n\leq\delta$ for an arbitrarily large constant
$C$, so we can pick $c=\sqrt C$.

We want to show an $O(n/\delta+\log n)$ bound on the number of
new $S_2$-components that do not reach a special node or reach size $\ell$. As
in the proof of \cref{l:node_compression_number_of_components}, we
note that we just have to consider the probability of failing to reach
the size without hitting any of the previous components or special
super nodes.

Consider the $S_2$-growth from a regular node $v$. Suppose the final
growth from $v$ exposes $x$ edges. Then the total size spanned 
is at most $5x/3$. We can assume that only regular nodes are reached
and they have size at least $5$, so the exposed edges have connected at 
most $x/3$ regular nodes. This means that more than $2x/3$ of the
exposed edges got caught in the sense that they did not
expand to new regular nodes. The probability
that a given exposed edge got caught is less than $(5x/3)/\delta$. The
edges that did get caught can be chosen in less than $2^x$ ways,
and the chance a given choice of at least $2x/3$ edges get caught is 
bounded by $(2x/\delta)^{2x/3}$. Thus, the probability that
we end up exposing exactly $x$ edges is bounded $P_x=2^x ((5x/3)/\delta)^{2x/3}\leq
(5x/\delta)^{2x/3}$. Since we expose at least $3$ edges from
the first regular node $v$, we have $x\geq 3$, and it
is easily checked that $\sum_{x=3}^\ell=O(1/\delta^2)$ for
$\ell=\delta/c$ for our sufficiently large constant $c$.
It follows that all but $O(n/\delta)$ components reach size $\ell=c\lg n$ or
a special node. 

If an $S_2$-component does not reach size $\ell$ and does not have
a special node, we pick a special center in it. This is
only $O(n/\delta)$ new special centers.

We now take all the components of size at least $\ell$, and
partition them into components of size at least $\ell$ and
diameter at most $2\ell$. If one of these
components contain a special center, we are done with it.
Each of the other components is called a kernel. 
A kernel has only regular nodes, and all nodes that are not
covered by some special center are in some kernel.

We now pick out a single kernel $A$. Exposing all remaining out edges in
a last independent sample $S_3$, whp, within distance $O(\log\delta)$,
we will get to special center or $\Theta(\delta)$ vertices.

When growing from the kernel $A$, we ignore the $S_2$ edges
connecting the other kernels. All we consider are the super nodes
contracted from the $S_1$ and the new exposed edges from $S_3$.
We now that a regular node of size $x\geq $ exposed $\lceil 3x/5\rceil$
edges for $S_2$, so it has $x-\lceil 3x/5\geq x/5$ edges
left for $S_3$. It follows that the number of edges that $S_3$ exposes from $A$
is at least $|A|/5$. Thus, we start by exposing at
least $\ell/5=(c/5)\lg n$ vertices from $A$. Starting from $A'=A$,
we grow $A'$. As long as $|A'|\leq \delta/4$,
then each exposed edge has $3/4$ chance of leaving $A'$, so whp, we 
reach $(c/10)\lg n$ new regular nodes outside $A$. We now
proceed in rounds, each time increasing the distance from the
kernel by 1. 

For the regular nodes outside the kernel $A$, we exploit that if it
has size $x\geq $ it exposed $\lceil 3x/5\rceil$ edges for $S_2$, so
it has $x-\lceil 3x/5\geq 2$ edges left for $S_3$. Hence, as we have
reached less than $\delta/4$ vertices and no special nodes, whp, we
double the number of new nodes. Thus, in less than $\lg \delta$
rounds, we reach $\delta/4$ vertices or a special node. Since the
kernel has diameter $O(\log n)$, we conclude, whp, that all vertices
in the kernel at distance $O(\log n)$ from $\delta/4$ vertices or some
special center. This high probability result must hold for all
kernels, hence for all vertices.
\end{proof}	

\subsection{Improved Distributed Algorithm}
   
    \begin{lemma}\label{l:mc_congest}
      Given any simple input graph $G$ with $n$ vertices, $m$ edges, minimum degree $\delta$, and minimum cut size $\lambda$, it is possible to identify its minimum cut in $\tilde \bigO\paren{\frac{n}{\delta}} =\tilde\bigO\paren{\frac{n}{\lambda}}$ rounds of the \congest model, w.h.p. 
    \end{lemma}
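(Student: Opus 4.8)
The plan is to combine a random $2$-out contraction with the black-box distributed minimum-cut subroutine of Daga et al.~\cite{daga2019}, repeating $\bigO(\log n)$ times to boost the merely-constant success probability of \Cref{t:node_compression}. First I would dispose of the trivial cases: if $G$ is disconnected then $\lambda=0$ and this is detected immediately, and the size $\delta$ of the best singleton cut (together with a min-degree vertex) is read off in $\bigO(D)$ rounds; since a connected graph of minimum degree $\delta$ has diameter $\bigO(n/\delta)$, this stays within budget. From here it suffices to find a non-trivial minimum cut whenever $\lambda<\delta$, and to finally output $\min\{\delta,\ \text{(best non-trivial candidate)}\}$ together with a cut attaining it.

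Next I would run $t=\Theta(\log n)$ independent random $2$-out samplings $S^{(1)},\dots,S^{(t)}$; each is produced in $O(1)$ rounds (every vertex picks two random incident edges and notifies the endpoints). Optionally I would thin each sample as in \Cref{remark:3outDiam} so that every component has diameter $\bigO(\log n)$ — a thinner sample yields a coarser partition, hence still preserves whatever cut the full $2$-out preserved — and then identify the components of each sample by flooding a component-ID inside every component in parallel; by \Cref{lem:diameter} (or \Cref{remark:3outDiam}) this costs $\tilde{\bigO}(n/\delta)$ rounds \whp. For the resulting partition $P^{(j)}$ I would invoke the Daga et al.\ subroutine to compute $\lambda_{G/P^{(j)}}$, and output $\min\{\delta,\ \min_j\lambda_{G/P^{(j)}}\}$ with the corresponding cut.

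For the round count: by \Cref{l:node_compression_number_of_components} each $G/P^{(j)}$ has $k=\bigO(n/\delta)$ super-nodes \whp, and $\sum_i D(V_i^{(j)})=\tilde{\bigO}(n/\delta)$ \whp; a union bound over the $\bigO(\log n)$ samples makes all these bounds hold simultaneously. Using $D=\bigO(n/\delta)$, each call to the subroutine costs $\tilde{\bigO}\paren{D+k+\sum_i D(V_i^{(j)})}=\tilde{\bigO}(n/\delta)$ rounds, and running the $\bigO(\log n)$ instances sequentially only multiplies this by $\bigO(\log n)$, still $\tilde{\bigO}(n/\delta)=\tilde{\bigO}(n/\lambda)$. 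For correctness: $\lambda_{G/P^{(j)}}\ge\lambda$ always (contraction cannot decrease connectivity) and $\delta\ge\lambda$, so the output is never too small — on a sample that fails to preserve a given cut the subroutine just returns some $\lambda_{G/P^{(j)}}>\lambda$, which is harmless since we take a minimum. Conversely, by \Cref{t:node_compression} a fixed non-trivial minimum cut $C$ survives each $2$-out with probability at least the constant $p_\eps$, so \whp it survives in some sample $j^{*}$, whence $\lambda_{G/P^{(j^{*})}}=|C|=\lambda$; thus when $\lambda<\delta$ we output $\lambda$, and when $\lambda=\delta$ the $\delta$ term already gives the right value. Pulling the cut of $G/P^{(j^{*})}$ back to $G$ is valid because every component is connected and no intra-component edge is ever cut, and each vertex learns which side it lies on via one more cheap flood inside its component.

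The step I expect to be the main obstacle is exactly the gap between the \emph{per-cut constant} success probability of \Cref{t:node_compression} and the \emph{whp identification of some minimum cut} demanded by the lemma: in \congest we cannot afford the ``repeat and vote over edges'' amplification used in the sequential setting (\Cref{thm:main}), both because running $\Theta(\log n)$ contractions truly in parallel would blow up bandwidth, and because the Daga et al.\ subroutine is already the round-complexity bottleneck. The resolution is the mild one above — run the $\Theta(\log n)$ independent contractions sequentially, keep the best answer, and union-bound the size and diameter guarantees over the repetitions — at the cost of only a logarithmic factor absorbed into $\tilde{\bigO}(\cdot)$. Everything else (local $2$-out sampling, component identification and side-bit flooding within components of small diameter, and the reduction to computing $\lambda_{G/P}$) is routine given the results already established.
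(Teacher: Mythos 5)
Your proposal is correct and takes essentially the same route as the paper: a random $2$-out contraction, whose $\bigO(n/\delta)$ component count (\Cref{l:node_compression_number_of_components}) and $\tilde\bigO(n/\delta)$ total component diameter (\Cref{lem:diameter}) are fed as the partition into the black-box subroutine of Daga et al., combined with the minimum degree to cover singleton cuts, for a total of $\tilde\bigO(n/\delta)=\tilde\bigO(n/\lambda)$ rounds. Your explicit $\Theta(\log n)$-fold sequential repetition with a minimum over the returned values (safe since contraction never decreases edge connectivity) just spells out the whp amplification that the paper's short proof leaves implicit, so it is the same argument, handled slightly more carefully.
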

    
    \begin{proof}
    
    Consider a $2$-out contraction of the input graph. Since we consider simple graphs, we have that the graph diameter has a small diameter $D(G) = \bigO\paren{\frac{n}{\delta}}$. By \cref{l:node_compression_number_of_components} we have that in the contracted graph we have at most $\bigO\paren{\frac{n}{\delta}}$ vertices. By \cref{lem:diameter} we have that in the $2$-out contraction the sum of diameters of contracted subgraphs is $\tilde \bigO\paren{\frac{n}{\delta}}$. Therefore, we can identify the minimum cut of the graph obtained by a $2$-out contraction in $\tilde \bigO\paren{\frac{n}{\delta} + \frac{n}{\delta} + \frac{n \poly{\log n}}{\delta}} = \tilde \bigO\paren{\frac{n}{\delta}} = \tilde \bigO\paren{\frac{n}{\lambda}}$ rounds, by applying the algorithm of Daga et al.~\cite{daga2019}.
    \end{proof}
    By combining \Cref{l:mc_congest} with the algorithm of \cite{Nanongkai2014-distributed-cut}, which is best suited for graphs with small edge connectivity $\lambda$, we get our round complexity of $\tilde\bigO\paren{n^{0.8} D^{0.2} + n^{0.9}}$, as claimed in \Cref{t:mc_congest}:
\begin{proof}[Proof of \Cref{t:mc_congest}]
		In \Cref{l:mc_congest}, we provided an algorithm with round complexity $\tilde\bigO\paren{\frac{n}{\lambda}}$. Nanongkai and Su~\cite{Nanongkai2014-distributed-cut} gave an algorithm that runs in $\bigO(\lambda^4\log^2 n(D+\sqrt{n}\log^* n)) = \tilde\bigO(\lambda^4(D+\sqrt{n}))$ rounds. We now explain that by running both algorithms and taking the faster of the two, we can obtain an algorithm with complexity $\tilde\bigO\paren{n^{0.8} D^{0.2} + n^{0.9}}$. For graphs with diameter $D \in \bigO(\sqrt{n})$, the second algorithm requires $\tilde\bigO(\sqrt{n} \lambda^4)$ rounds. Taking the minimum of this and our algorithm that runs in $\tilde\bigO\paren{\frac{n}{\lambda}}$ gives an algorithm that runs in $\tilde\bigO\paren{n^{0.9}}$ rounds.
  For graphs with diameter $D \in \Omega(\sqrt{n})$, the algorithm of Nanongkai and Su runs in $\tilde\bigO(D \lambda^4)$ rounds. Taking the minimum of this and our algorithm that runs in $\tilde\bigO\paren{\frac{n}{\lambda}}$ gives an algorithm that runs in $\tilde\bigO(n^{0.8}D^{0.2})$ rounds. Hence, running both algorithms and taking the faster of the two runs in $\tilde\bigO\paren{n^{0.8} D^{0.2} + n^{0.9}}$ time.
	\end{proof}

\section{\MC in the \MPC model} \label{s:mc_mpc}
  	The \MPC model is a model of parallel (or distributed) computing, in which the computation is executed in synchronous rounds, by a set of $M$ machines, each with \emph{local memory} of size $S$. Every rounds consists of the phase of local computation and the phase of communication. In the phase of local computation each machine can execute some, possibly unbounded computation (although we could consider only computation that takes time polynomial in $S$, or even only $\tilde\bigO(S)$ step computations). In the phase of communication, the machines simultaneously exchange $\bigO(\log n)$-bit messages, in a way that each machine is a sender and a receiver of up to $\bigO(S)$ messages.
  
  	The \emph{global memory} is the total amount of memory that is available, i.e. if there are $M$ machines, each with $S$ memory, their global memory is $MS$. Ideally, for an input of size $N$, the values of $M$ and $S$ are chosen in a way that the global memory is $\bigO(N)$. Assuming that the global memory limit is set to be $\bigO(N)$, we have two main quality measures of the algorithms in the \MPC model: the first one is the number of rounds that are required by the algorithm to finish computation, the second is the limit on the local memory of a single machine. 
  
  	For graph problems in the \MPC model, we distinguish two significantly different variants of the \MPC model, depending on relation between the limit on local memory $S$ and the number of vertices of the input graph, which is usually denoted by $n$. More precisely, those variants are $S \in \tilde\Theta(n)$ and $S \in \bigO(n^{1-\eps})$ for some constant $\eps > 0$. In this paper, we focus on the variant in which the limit on the local memory is $\bigO(n)$ words, each of length $\bigO(\log n)$ bits.
  
  	Sometimes, we also consider the algorithms that have global memory limit larger than $N$. It can be achieved in two ways: by setting higher limit on the local memory of a single machine, or by increasing the number of machines. The first variant of this relaxation is stronger -- a single machine with enlarged memory limit can simulate several machines with a smaller limit. Therefore, if we have two \MPC machines, both with the same global memory, the one with larger limit on the local memory can simulate the one with a smaller limit on local memory.
  
	In particular, in this section we give a \MC algorithm for simple graphs in the \MPC model, that proves \Cref{t:mc_mpc}, i.e. the algorithm requires $\bigO(1)$ rounds of computation, uses $\bigO(m + n \log^3 n)$ global memory, while respecting $\bigO(n)$ memory limit on a singe machine. 
  
	In the remaining part of this section we propose an implementation of the contraction process mentioned in \cref{thm:main}: the algorithm requires $\bigO(1)$ rounds, and works with $\bigO(n)$ limit on the memory of a single machine and $\bigO(m + n\log^3 n)$ global memory. Since the contracted graph has only $\bigO(n)$ edges, we can fit the whole contracted graph and the sizes of all singleton cuts in the memory of a single machine, which then can compute the minimum cut of a contracted graph and compare it with all singleton cuts, which proves \cref{t:mc_mpc}. 
  
\subsection{Contraction process in the \MPC model}
	On the top level, we want to follow the reasoning presented in \cref{subsec:succAmp}. The first part is to compute $\Theta(\log n)$ graphs that have $\bigO(n)$ edges and preserve a fixed $(2-\varepsilon)$-minimum cut at least with some probability $p_{\varepsilon}$. Then, we use a voting approach to identify the relevant edges, and finally we contract all the edges that are not relevant. 

	Conceptually, to execute a single contraction process, we execute a $2$-out contraction, after which we contract e set of edges $E_p$, to which we include the edges of the input graph with probability $p = \frac{1}{2\delta}$. By \cref{l:node_compression_number_of_components} we know that after $2$-out contraction we have a graph that has only $\bigO\paren{\frac{n}{\delta}}$ vertices. By \cref{lem:random_edge_reduction}, contracting $E_{1/(2\delta)}$ reduces the number of edges further down to $\bigO(n)$ while preserving a fixed minimum cut with some constant probability. Therefore, combining $2$-out contraction with contracting the edges of $E_{1/(2\delta)}$ gives a contraction process that meets the guarantees from \cref{t:main_theorem}.

In order to give an efficient implementation, we execute to steps of the reduction simultaneously, i.e. we contract all the edges from $2$-out subgraph and uniformly sampled edges, i.e. we contract connected components of a graph $(V, I_1 \cup I_2 \cup E_{1/(2\delta)})$, and then identify the edges between the connected components.

For a single contraction process the total number of edges is $\bigO(\frac{m}{\delta} + n)$, hence straightforward application of the Connected Component algorithm allows to identify the connected components in $\bigO(1)$ rounds ~\cite{JurdzinskiNowicki2018}, with $\bigO(n)$ memory limit on a single machine and $\bigO\paren{\frac{m}{\delta} + n \log^2 n}$ global memory \footnote{The paper ~\cite{JurdzinskiNowicki2018} gives an $\bigO(1)$ algorithm for the MST problem in the Congested Clique model, which can be simulated in the \MPC model with a $\bigO(n)$ memory limit of a single machine. Furthermore, small changes in the analysis provided in \cite{JurdzinskiNowicki2018} give $\bigO(m + n\log^2 n)$ bound on global memory of the Connected Components algorithm.}. To complete the contraction process, it is enough to check for each edge, whether it is a inter component edge or not, which can be done by comparing $(V, I_1 \cup I_2 \cup E_{1/(2\delta)})$ component id-s for of the endpoints of the edge, which can be done in $\bigO(1)$ rounds, with $\bigO(n)$ limit on local memory of a single machine with $\bigO(m)$ global memory, e.g. via sorting algorithm.

Therefore, executing $\Theta(\log n)$ contraction processes in parallel can be done in $\bigO(1)$ rounds, with $\bigO(n)$ limit on a memory of a single machine and $\bigO(m \log n + n \log^3 n)$ memory. Furthermore the global memory bound can be improved to $\bigO(m + n \log^3 n)$ -- to do so, we use the fact that each contraction process starts from the same set of edges and the total number of inter component edges in all contraction processes is $\bigO(n \log n)$.

\subsection{\MPC algorithm with $\bigO(m + n\log^3 n)$ global memory}
The main ingredient of the algorithm is the protocol that given $k$ divisions into connected components allows to identify all inter component edges in $\bigO(m + n\cdot k + r \cdot k)$ global memory, where $r$ is the total number of inter component edges. This protocol allows us to execute $\Theta(\log n)$ connected computations in $\bigO(m + n \log^3 n)$ global memory and allows us to identify the edges of contracted graphs in $\bigO(m + n \log^2 n)$ memory. 

\paragraph{Computing the inter component edges in \MPC}
The naive approach would be to check for each edge, whether the endpoints are in the same connected component or not. This unfortunately, requires $\Theta(m)$ memory for a single contraction process, and $\Theta(m \log n)$ for $\Theta(\log n)$ contraction processes. To bypass this issue, we use the approach based on \emph{fingerprints}~\cite{Rabin1981FingerprintingBR}.
  
  The idea is roughly based on the fact that we can treat the labels of connected components of each vertex in $k$ contraction processes as $\Theta(k\log n)$ bit strings. Then, we can compute a hash value from some polynomial range, for each $\Theta(k\log n)$ bit label. Since we have only $\bigO(n)$ labels, if we use sufficiently large range of hashing function, with high probability there would be no collision. Since, the range is polynomial, the value of hash function can be encoded on $\bigO(1)$ words ($\bigO(\log n)$ bits). Then, for each edge, we compare fingerprints of the endpoints, if they are the same, the edge is not an inter component edge in any partition. Hence, in order to compute all edges that are inter-component, it is enough to consider only the edges with the endpoints with different fingerprints. For each of those edges we can simply gather component identifiers of their endpoints, which means that with $r$ inter-component edges we need to use only $\bigO(kr)$ global memory to do so.

\paragraph{Parallel connected component execution}
While the application to identifying the edges after contractions is straightforward, the application to parallel connected component computation may be not that clear, hence we briefly describe it. On the top level, we can use a KKT sampling approach with probability of sampling $\frac{1}{\log n}$ -- this gives us $\Theta(\log n)$ instances of the connected components problem, each with $\bigO\paren{\frac{m}{\log n}}$ edges. Running the Connected Components algorithm in parallel on those instances requires only $\bigO(m + n \log^3 n)$ global memory. Furthermore, the total number of inter component edges for each instance is $\bigO(n \log n)$, with high probability. Therefore, we can use the fingerprint based approach to identify all of them, using $\bigO(m + n \cdot \log n + n\log^2 n \cdot \log n)$ global memory. The resulting instances have only $\bigO(n \log n)$ edges, hence we can solve all of them in parallel in $\bigO(\log n \cdot (n \log n + n \log^2 n)) = \bigO(n \log^3 n)$ global memory. Therefore, total memory requirement of this part is $\bigO(m + n\log^3 n)$.
	
\section{\MC in the PRAM model} \label{s:mc_pram}
  In this section we provide a \CPRAM algorithm for the \MC problem for simple graphs, which is a proof of \Cref{t:mc_pram}. 
  
  The \CPRAM model is a model of parallel computing. The \PRAM machine consists of a set of $p$ processors, and some unbounded shared memory. The computations are performed in synchronous steps, and in each step each processor may read from $\bigO(1)$ memory cells, evaluate some $\bigO(1)$ step computable function on read values, and write something to $\bigO(1)$ memory cells. More precisely, we consider \CPRAM model, which extends to \emph{Concurrent-Read-Exclusive-Write \PRAM}, which means that we allow multiple processors to read from the same memory cell, but we forbid multiple processors to write to a single memory cell in a single step of computation.
  
  To define the complexity of an algorithm in the \CPRAM model, one can use the \emph{Work-Depth} model~\cite{Blelloch_1996_work_depth}. In this model, we perceive a computation as a directed acyclic graph, in which each vertex corresponds to a single step of a processor, its in-edges correspond to the inputs of evaluated function, and out-edges correspond to the results of the evaluated function. In other words, we put an edge between two vertices, if the output of the function evaluated by one vertex is an input of the function evaluated in the other vertex. The work of the algorithm is the number of vertices in the graph, and the depth of the algorithm is the longest directed path in the graph of computation.
  
  The state of the art algorithm for the weighted \MC problem has $\bigO(m\log^4 n)$ work and $\bigO(\log^3 n)$ depth~\cite{Geissmann2018_PRAM_MC}. The application of \Cref{t:main_theorem} allows us to reduce the number of edges to $\bigO(n)$. In this section we show an \PRAM implementation of \Cref{t:main_theorem} that allows us to execute $\Theta(\log n)$ contraction processes in $\bigO(m \log n)$ work and $\bigO(\log n)$ depth. The next step is to apply the technique from \Cref{subsec:succAmp}, which can be done in $\bigO(m + n\log n)$ work with $\bigO(\log n)$ depth, and gives a $\bigO(n)$ edge multigraph preserving $(2-\eps)$-minimum cuts with high probability, as stated in \Cref{thm:main}. To complete \MC computation, we still have to identify \MC in the resulting multigraph, for which we can use the state of the art algorithm for general graphs~\cite{Geissmann2018_PRAM_MC}, which has $\bigO(\log^3 n)$ depth, but only $\tilde\bigO(n)$ work.
  
  \begin{lemma}\label{l:mc_pram}
    Given a simple input graph $G$, with $n$ vertices, $m$ edges, and minimum degree $\delta$ it is possible to compute a $\bigO(n)$ edge multigraph, that preserves all non singleton $(2-\eps)$-minimum cuts with high probability on a \CPRAM machine, with $\bigO(m \log n)$ work and depth $\bigO(\log n)$. With high probability one of computed contractions preserves a fixed non singleton minimum cut of $G$.
  \end{lemma}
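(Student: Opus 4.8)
The plan is to implement on the \CPRAM the \MPC\nobreakdash-flavoured instantiation of the contraction of \Cref{t:main_theorem}: a random $2$-out contraction composed with contracting a uniformly sampled edge set $E_{1/(2\delta)}$ (each edge of $G$ included independently with probability $1/(2\delta)$), run $q=\Theta(\log n)$ times with fresh randomness, followed by the ``repetition and voting'' of \Cref{subsec:succAmp}. Correctness is then inherited essentially verbatim: by \Cref{l:node_compression_number_of_components} a single copy whp contracts to $O(n/\delta)$ nodes, by (the variant of) \Cref{lem:random_edge_reduction} it further drops to $O(n)$ edges, and each copy preserves a fixed non-singleton $(2-\eps)$-min-cut with a constant probability $p_\eps>0$; so among $q=\Theta(\log n)$ independent copies at least one preserves it whp, and the voting step of \Cref{subsec:succAmp} turns this into an $O(n)$-edge multigraph preserving \emph{all} non-singleton $(2-\eps)$-min-cuts whp, exactly as in \Cref{thm:main}. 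Thus the only thing to argue is the $O(m\log n)$ work / $O(\log n)$ depth bound on producing this multigraph.

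\textbf{Sampling.} First I would perform all the random sampling. Given the adjacency arrays of $G$, choosing the two out-edges of every vertex and deciding membership of every edge in $E_{1/(2\delta)}$ are embarrassingly parallel: with one processor per vertex and per edge this costs $O(m)$ work and $O(1)$ depth, and doing it for all $q$ copies costs $O(m\log n)$ work and $O(1)$ depth (using $m=\Omega(n)$, which holds since $\delta\geq 1$).

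\textbf{Contraction via connected components, and voting.} For each copy $j$ I would form the graph $H_j=(V,\,I_1^{(j)}\cup I_2^{(j)}\cup E_{1/(2\delta)}^{(j)})$, which whp has $O(n+m/\delta)$ edges, and compute its connected components using the parallel minimum-spanning-forest / connectivity routine of Pettie and Ramachandran~\cite{Pettie2004_MSF_EREW_PRAM}; relabelling each surviving edge of $G$ by the component ids of its endpoints (a sort / segmented-scan) yields the contracted multigraph, with $O(n/\delta)$ nodes and $O(n)$ edges whp. Doing this for all $q$ copies, and then the voting of \Cref{subsec:succAmp} --- for every edge $e$ of $G$, count over the $q$ copies in how many its endpoints lie in distinct components, contract the edges whose count falls below the threshold $r=\Theta(\log n)$, and run one final connectivity computation on that contracted edge set --- is done with balanced reduction trees for the counting and one more MSF call, in $O(m\log n)$ work and $O(\log n)$ depth.

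\textbf{Main obstacle.} The difficulty is purely the work accounting, not correctness. I must ensure the $q=\Theta(\log n)$ connectivity computations \emph{together} cost only $O(m\log n)$ work at $O(\log n)$ depth; since $|E(H_j)|$ is dominated by $|E_{1/(2\delta)}^{(j)}|=\Theta(m/\delta)$ whp, summing a naive $O(|E|\log|V|)$-work connectivity bound over the copies would overshoot the target, so the argument must lean on a (near-)linear-work parallel MSF primitive so that the per-copy cost is $O(n+m/\delta)$ and the total telescopes to $O((n+m/\delta)\log n)=O(m\log n)$. This in turn needs a Chernoff bound to control the total number of sampled edges $\sum_j|E_{1/(2\delta)}^{(j)}|$ so that processor allocation and the sorts do not blow up, and it needs the degenerate regimes (e.g.\ $\delta=O(1)$, or $G$ already having $O(n)$ edges) disposed of separately --- in the latter the statement is immediate, and in the former one skips the edge-reduction stage and runs the $2$-out contraction alone.
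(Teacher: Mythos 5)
Your proposal follows essentially the same approach as the paper: run $\Theta(\log n)$ independent copies of (random $2$-out contraction composed with contraction of a $1/(2\delta)$-sampled edge set), implement each as one connected-components/MSF call via Pettie--Ramachandran, then do the repetition-and-voting of \Cref{subsec:succAmp}. Two small side remarks in your ``Main obstacle'' paragraph are superfluous or slightly off: the paper avoids any Chernoff bound on $\sum_j |E(H_j)|$ by simply bounding each copy's edge set by $m$ (a subset of $E$), so linear-work connectivity already gives $O(m)$ per copy and $O(m\log n)$ total; and for $\delta=O(1)$ you cannot just ``skip the edge-reduction stage and run the $2$-out contraction alone'' since that only controls the vertex count (to $O(n/\delta)=O(n)$) and not the edge count --- but no special case is needed, as the full two-stage process already gives $O(n\delta)=O(n)$ edges for constant $\delta$.
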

  In the remaining part this section, we prove \Cref{l:mc_pram}, and briefly discuss that composition of \Cref{l:mc_pram} with the state of the art algorithm for general graphs~\cite{Geissmann2018_PRAM_MC} proves \Cref{t:mc_pram}.
	
\paragraph{Contraction process in \CPRAM}
Similarly as for the \MPC model, the contraction process we implement consists of the $2$-out contraction [\cref{l:node_compression_number_of_components}] and contraction of uniformly sampled edges [\cref{lem:random_edge_reduction}]. Therefore, a single contraction process is basically a connected component computation on a graph with $\bigO(m)$ edges, which can be done with $\bigO(\log n)$ depth and $\bigO(m)$ work. In order to identify the inter component edges, we simply compare the identifiers of the connected components of the endpoints of the edge, hence this can be done in $\bigO(m)$ work and $\bigO(1)$ steps. Therefore, executing $\Theta(\log n)$ contraction processes can be done with $\bigO(m \log n)$ work and $\bigO(\log n)$ depth.

  \paragraph{Merging the results of contractions}
  At this point, we have $\Theta(\log n)$ multigraphs and each has only $\bigO(n)$ edges. By using the probability amplifying technique described in \Cref{subsec:succAmp}, we can transform them into a single graph with $\bigO(n)$ edges that preserves a all non singleton $(2-\epsilon)$ minimum cuts. To do so, it is enough to compute for each edge what is the number of contraction processes, which preserved that edge and keep only those that were preserved $r \in \Omega(\log n)$ times, and contract all other edges. More precisely, if $E_r$ is the set of edges we want to preserve, we contract all connected components of the graph $G'=(V, E \setminus E_r)$. 
  
  In order to compute the number of contraction processes that preserved an edge, we can simply scan over all the results of contraction processes, which requires $\bigO(n \log n)$ work, and has depth $\bigO(\log n)$, and split the set of edges into $E \setminus E_r$ and $E_r$, which can be done via parallel prefix computation in $\bigO(m)$ work, with depth $\bigO(\log n)$. The last part is a connected component computation, which can be done in $\bigO(m)$ work and $\bigO(\log n)$ depth, and relabeling the edges of $E_r$ so that for each edge we could know what are the identifiers of the endpoints after contractions, which also can be done in $\bigO(m)$ work and $\bigO(\log n)$. Therefore, merging the results of contractions from \cref{t:main_theorem} into a graph from \cref{thm:main} requires $\bigO(m + n\log n)$ work and $\bigO(\log n)$ depth. This concludes the proof of \Cref{l:mc_pram}.
  
  \paragraph{Computing \MC}
  At this point, we have a single multigraph with $\bigO(n)$ edges that preserves a minimum cut with high probability. The state of the art \CPRAM algorithm can compute its minimum cut in $\bigO(n \log^4 n)$ work, with depth $\bigO(\log^3 n)$.
  
  Therefore, the whole algorithm requires $\bigO(m \log n + n \log^4 n)$ work and has $\bigO(\log n + \log^3 n) =  \bigO(\log^3 n)$ depth, which concludes the proof of \Cref{t:mc_pram}.

	\newpage
	\subsection*{Acknowledgment} The first two authors are thankful to the Uber driver in Wroclaw whose late arrival provided ample time for a conversation about (massively parallel) algorithms for min-cut; it was during that conversation that the idea of random out contractions was sparked. We are also thankful to Danupon Nanongkai for discussions about the work in Daga et al.~\cite{daga2019} and for informing us that if we can upper bound the diameter of the components in $2$-out, we can further improve our distributed algorithms using the algorithm of Daga et al.~\cite{daga2019}; that led us to prove \Cref{lem:diameter}, which improved our distributed round complexity from $\tilde{O}(n^{1-1/9} D^{1/9} + n^{1-1/18})$ to $\tilde{O}(n^{0.8} D^{0.2} + n^{0.9})$.
  
\bibliographystyle{alpha} 
\bibliography{ref}

\newcommand{\etalchar}[1]{$^{#1}$}
\begin{thebibliography}{GGK{\etalchar{+}}18}

\bibitem[ABB{\etalchar{+}}19]{assadi2019coresets}
Sepehr Assadi, MohammadHossein Bateni, Aaron Bernstein, Vahab Mirrokni, and
  Cliff Stein.
\newblock Coresets meet edcs: algorithms for matching and vertex cover on
  massive graphs.
\newblock In {\em Pro.\ of ACM-SIAM Symp.\ on Disc.\ Alg.\ (SODA)}, 2019.

\bibitem[ASS{\etalchar{+}}18]{andoni2018parallel}
Alexandr Andoni, Clifford Stein, Zhao Song, Zhengyu Wang, and Peilin Zhong.
\newblock Parallel graph connectivity in log diameter rounds.
\newblock In {\em Proc.\ of the Symp.\ on Found.\ of Comp.\ Sci.\ (FOCS)},
  pages 674--685, 2018.

\bibitem[ASW19]{assadi2019massively}
Sepehr Assadi, Xiaorui Sun, and Omri Weinstein.
\newblock Massively parallel algorithms for finding well-connected components
  in sparse graphs.
\newblock In {\em the Proc.\ of the Int'l Symp.\ on Princ.\ of Dist.\ Comp.\
  (PODC)}, page to appear, 2019.

\bibitem[BBD{\etalchar{+}}19]{behnezhad2019TreeMIS}
Soheil Behnezhad, Sebastian Brandt, Masha Derakhshan, Manuela Fischer,
  MohammadTaghi Hajiaghayi, Richard~M. Karp, and Jara Uitto.
\newblock Massively parallel computation of matching and mis in sparse graphs.
\newblock In {\em the Proc.\ of the Int'l Symp.\ on Princ.\ of Dist.\ Comp.\
  (PODC)}, page to appear, 2019.

\bibitem[BEG{\etalchar{+}}18]{boroujeni2018approximating}
Mahdi Boroujeni, Soheil Ehsani, Mohammad Ghodsi, MohammadTaghi HajiAghayi, and
  Saeed Seddighin.
\newblock Approximating edit distance in truly subquadratic time: quantum and
  mapreduce.
\newblock In {\em Pro.\ of ACM-SIAM Symp.\ on Disc.\ Alg.\ (SODA)}, pages
  1170--1189, 2018.

\bibitem[BFU19]{brandt2019breaking}
Sebastian Brandt, Manuela Fischer, and Jara Uitto.
\newblock Breaking the linear-memory barrier in mpc: Fast mis on trees with
  strongly sublinear memory.
\newblock In {\em 26th International Colloquium on Structural Information and
  Communication Complexity}, page to appear, 2019.

\bibitem[BHH19]{behnezhad2019exponentially}
Soheil Behnezhad, MohammadTaghi Hajiaghayi, and David~G Harris.
\newblock Exponentially faster massively parallel maximal matching.
\newblock In {\em Proc.\ of the Symp.\ on Found.\ of Comp.\ Sci.\ (FOCS)}, page
  to appear, 2019.

\bibitem[Ble96]{Blelloch_1996_work_depth}
Guy~E. Blelloch.
\newblock Programming parallel algorithms.
\newblock {\em Commun. ACM}, 39(3):85--97, March 1996.

\bibitem[CFG{\etalchar{+}}19]{chang2019coloring}
Yi-Jun Chang, Manuela Fischer, Mohsen Ghaffari, Jara Uitto, and Yufan Zheng.
\newblock The complexity of (delta + 1)-coloring in congested clique, massively
  parallel computation, and centralized local computation.
\newblock In {\em the Proc.\ of the Int'l Symp.\ on Princ.\ of Dist.\ Comp.\
  (PODC)}, page to appear, 2019.

\bibitem[CLM{\etalchar{+}}18]{czumaj2017round}
Artur Czumaj, Jakub Lacki, Aleksander Madry, Slobodan Mitrovic, Krzysztof Onak,
  and Piotr Sankowski.
\newblock Round compression for parallel matching algorithms.
\newblock In {\em Proc.\ of the Symp.\ on Theory of Comp.\ (STOC)}, pages
  471--484, 2018.

\bibitem[DG04]{dg04}
Jeffrey Dean and Sanjay Ghemawat.
\newblock {MapReduce}: Simplified data processing on large clusters.
\newblock In {\em Proceedings of the 6th Conference on Symposium on Operating
  Systems Design \& Implementation (OSDI)}, pages 10--10, Berkeley, CA, USA,
  2004. USENIX Association.

\bibitem[DHNS19]{daga2019}
Mohit Daga, Monika Henzinger, Danupon Nanongkai, and Thatchaphol Saranurak.
\newblock Distributed edge connectivity in sublinear time.
\newblock In {\em Proceedings of the twenty-third annual ACM symposium on
  Theory of computing}, page to appear. ACM, 2019.

\bibitem[DKL76]{DKL76}
Efim~A. Dinitz, A.~V. Karzanov, and Micael~V. Lomonosov.
\newblock On the structure of a family of minimum weighted cuts in a graph.
\newblock In A.~A. Fridman, editor, {\em Studies in Discrete Optimization},
  pages 290--306. Nauka, Moskow, 1976.
\newblock (in Russian).

\bibitem[Doe18]{DBLP:journals/corr/abs-1801-06733}
Benjamin Doerr.
\newblock Probabilistic tools for the analysis of randomized optimization
  heuristics.
\newblock {\em CoRR}, abs/1801.06733, 2018.

\bibitem[FF56]{Ford-Fulkerson_algo}
L.~R. Ford and D.~R. Fulkerson.
\newblock Maximal flow through a network.
\newblock {\em Canadian Journal of Mathematics}, 8:399--404, 1956.

\bibitem[FF62]{ford1962flows}
Lestor~R Ford and DR~Fulkerson.
\newblock Flows in networks.
\newblock 1962.

\bibitem[FJ17]{frieze2017random}
Alan Frieze and Tony Johansson.
\newblock On random k-out subgraphs of large graphs.
\newblock {\em Random Structures \& Algorithms}, 50(2):143--157, 2017.

\bibitem[Gab91]{Gabow1991}
Harold~N. Gabow.
\newblock A matroid approach to finding edge connectivity and packing
  arborescences.
\newblock In {\em Proc.\ of the Symp.\ on Theory of Comp.\ (STOC)}, pages
  112--122. ACM, 1991.

\bibitem[Gab16]{Gab16}
Harold~N. Gabow.
\newblock The minset-poset approach to representations of graph connectivity.
\newblock {\em {ACM} Trans. Algorithms}, 12(2):24:1--24:73, 2016.
\newblock Announced at FOCS'91.

\bibitem[GG18]{Geissmann2018_PRAM_MC}
Barbara Geissmann and Lukas Gianinazzi.
\newblock Parallel minimum cuts in near-linear work and low depth.
\newblock In {\em Proceedings of the 30th on Symposium on Parallelism in
  Algorithms and Architectures}, SPAA '18, pages 1--11, New York, NY, USA,
  2018. ACM.

\bibitem[GGK{\etalchar{+}}18]{ghaffari2018improved}
Mohsen Ghaffari, Themis Gouleakis, Christian Konrad, Slobodan Mitrovi{\'c}, and
  Ronitt Rubinfeld.
\newblock Improved massively parallel computation algorithms for mis, matching,
  and vertex cover.
\newblock In {\em the Proc.\ of the Int'l Symp.\ on Princ.\ of Dist.\ Comp.\
  (PODC)}. arXiv:1802.08237, 2018.

\bibitem[GH61]{gomory1961multi}
Ralph~E Gomory and Tien~Chung Hu.
\newblock Multi-terminal network flows.
\newblock {\em Journal of the Society for Industrial and Applied Mathematics},
  9(4):551--570, 1961.

\bibitem[GHT18]{GHT18:inc-edgeconn}
Gramoz Goranci, Monika Henzinger, and Mikkel Thorup.
\newblock Incremental exact min-cut in polylogarithmic amortized update time.
\newblock {\em {ACM} Trans. Algorithms}, 14(2):17:1--17:21, 2018.

\bibitem[GKMS19]{gamlath2018weighted}
Buddhima Gamlath, Sagar Kale, Slobodan Mitrovi{\'c}, and Ola Svensson.
\newblock Weighted matchings via unweighted augmentations.
\newblock In {\em the Proc.\ of the Int'l Symp.\ on Princ.\ of Dist.\ Comp.\
  (PODC)}, page to appear, 2019.

\bibitem[GKU19]{ghaffari2019conditionalLB}
Mohsen Ghaffari, Fabian Kuhn, and Jara Uitto.
\newblock Conditional hardness results for massively parallel computation from
  distributed lower bounds.
\newblock In {\em Proc.\ of the Symp.\ on Found.\ of Comp.\ Sci.\ (FOCS)}, page
  to appear, 2019.

\bibitem[GU19]{ghaffari2019sparsifying}
Mohsen Ghaffari and Jara Uitto.
\newblock Sparsifying distributed algorithms with ramifications in massively
  parallel computation and centralized local computation.
\newblock In {\em Pro.\ of ACM-SIAM Symp.\ on Disc.\ Alg.\ (SODA)}, pages
  1636--1653, 2019.

\bibitem[HKT{\etalchar{+}}19]{HKTZZ19:k-out}
Jacob Holm, Valerie King, Mikkel Thorup, Or~Zamir, and Uri Zwick.
\newblock Random $k$-out subgraph leaves only ${O}(n/k)$ inter-component edges,
  2019.
\newblock To appear at FOCS'19.

\bibitem[HRW17]{DBLP:conf/soda/HenzingerRW17}
Monika Henzinger, Satish Rao, and Di~Wang.
\newblock Local flow partitioning for faster edge connectivity.
\newblock In {\em Proceedings of the Twenty-Eighth Annual {ACM-SIAM} Symposium
  on Discrete Algorithms, {SODA} 2017, Barcelona, Spain, Hotel Porta Fira,
  January 16-19}, pages 1919--1938, 2017.

\bibitem[IBY{\etalchar{+}}07]{Isard:2007}
Michael Isard, Mihai Budiu, Yuan Yu, Andrew Birrell, and Dennis Fetterly.
\newblock Dryad: Distributed data-parallel programs from sequential building
  blocks.
\newblock {\em SIGOPS Operating Systems Review}, 41(3):59--72, 2007.

\bibitem[JN18]{JurdzinskiNowicki2018}
Tomasz Jurdzi\'{n}ski and Krzysztof Nowicki.
\newblock {MST in {O}(1) Rounds of Congested Clique}.
\newblock In {\em Pro.\ of ACM-SIAM Symp.\ on Disc.\ Alg.\ (SODA)}, pages
  2620--2632, 2018.

\bibitem[Kar93]{Karger1993ContracitonAlgorithm}
David Karger.
\newblock Global min-cuts in ${R}{N}{C}$ and other ramifications of a simple
  mincut algorithm.
\newblock In {\em Pro.\ of ACM-SIAM Symp.\ on Disc.\ Alg.\ (SODA)}, pages
  21--30, 01 1993.

\bibitem[Kar96]{DBLP:conf/stoc/Karger96}
David~R. Karger.
\newblock Minimum cuts in near-linear time.
\newblock In {\em Proceedings of the Twenty-Eighth Annual {ACM} Symposium on
  the Theory of Computing, Philadelphia, Pennsylvania, USA, May 22-24, 1996},
  pages 56--63, 1996.

\bibitem[Kar00]{Karger2000}
David~R. Karger.
\newblock Minimum cuts in near-linear time.
\newblock {\em J. ACM}, 47(1):46--76, January 2000.

\bibitem[KKT95]{Karger:1995:RLA:201019.201022}
David~R. Karger, Philip~N. Klein, and Robert~E. Tarjan.
\newblock A randomized linear-time algorithm to find minimum spanning trees.
\newblock {\em J. ACM}, 42(2):321--328, March 1995.

\bibitem[KP09]{DBLP:conf/soda/KargerP09}
David~R. Karger and Debmalya Panigrahi.
\newblock A near-linear time algorithm for constructing a cactus representation
  of minimum cuts.
\newblock In {\em Proc. 20th SODA}, pages 246--255, 2009.

\bibitem[KS93]{DBLP:conf/stoc/KargerS93}
David~R. Karger and Clifford Stein.
\newblock An $\tilde{O}(n^2)$ algorithm for minimum cuts.
\newblock In {\em Proceedings of the Twenty-Fifth Annual {ACM} Symposium on
  Theory of Computing, May 16-18, 1993, San Diego, CA, {USA}}, pages 757--765,
  1993.

\bibitem[KSV10]{KarloffSV10}
Howard~J. Karloff, Siddharth Suri, and Sergei Vassilvitskii.
\newblock A model of computation for {MapReduce}.
\newblock In {\em Pro.\ of ACM-SIAM Symp.\ on Disc.\ Alg.\ (SODA)}, pages
  938--948, 2010.

\bibitem[KT15]{DBLP:conf/stoc/KawarabayashiT15}
Ken{-}ichi Kawarabayashi and Mikkel Thorup.
\newblock Deterministic global minimum cut of a simple graph in near-linear
  time.
\newblock In {\em Proceedings of the Forty-Seventh Annual {ACM} on Symposium on
  Theory of Computing, {STOC} 2015, Portland, OR, USA, June 14-17, 2015}, pages
  665--674, 2015.

\bibitem[KT19]{KT19:edge-conn}
Ken{-}ichi Kawarabayashi and Mikkel Thorup.
\newblock Deterministic edge connectivity in near-linear time.
\newblock {\em J. {ACM}}, 66(1):4:1--4:50, 2019.

\bibitem[LMSV11]{lattanzi2011filtering}
Silvio Lattanzi, Benjamin Moseley, Siddharth Suri, and Sergei Vassilvitskii.
\newblock Filtering: a method for solving graph problems in mapreduce.
\newblock In {\em the Proceedings of the Symposium on Parallel Algorithms and
  Architectures}, pages 85--94, 2011.

\bibitem[MK13]{distributed-cut}
{M. Ghaffari} and Fabian Kuhn.
\newblock Distributed minimum cut approximation.
\newblock In {\em Proc.\ of the Int'l Symp.\ on Dist.\ Comp.\ (DISC)}, pages
  1--15, 2013.

\bibitem[MR95]{MR95}
Rajeev Motwani and Prabhakar Raghavan.
\newblock {\em Randomized Algorithms}.
\newblock Cambridge University Press, 1995.

\bibitem[NI92]{IbarakiNagamochi1992}
Hiroshi Nagamochi and Toshihide Ibaraki.
\newblock Computing edge-connectivity in multigraphs and capacitated graphs.
\newblock {\em SIAM Journal on Discrete Mathematics}, 5(1):54--66, 1992.

\bibitem[NS14]{Nanongkai2014-distributed-cut}
Danupon Nanongkai and Hsin-Hao Su.
\newblock Almost-tight distributed minimum cut algorithms.
\newblock In {\em Proc.\ of the Int'l Symp.\ on Dist.\ Comp.\ (DISC)}, pages
  439--453, 2014.

\bibitem[NW61]{Nash-Williams}
C.~St. J.~A. Nash-Williams.
\newblock Edge-disjoint spanning trees of finite graphs.
\newblock {\em J.\ of the London Math.\ Society}, 36:445--450, 1961.

\bibitem[Pel00]{Peleg:2000}
David Peleg.
\newblock {\em Distributed Computing: A Locality-sensitive Approach}.
\newblock Society for Industrial and Applied Mathematics, Philadelphia, PA,
  USA, 2000.

\bibitem[PR99]{Pettie2004_MSF_EREW_PRAM}
Seth Pettie and Vijaya Ramachandran.
\newblock A randomized time-work optimal parallel algorithm for finding a
  minimum spanning forest.
\newblock In Dorit~S. Hochbaum, Klaus Jansen, Jos{\'e} D.~P. Rolim, and
  Alistair Sinclair, editors, {\em Randomization, Approximation, and
  Combinatorial Optimization. Algorithms and Techniques}, pages 233--244,
  Berlin, Heidelberg, 1999. Springer Berlin Heidelberg.

\bibitem[Rab81]{Rabin1981FingerprintingBR}
Michael~O. Rabin.
\newblock Fingerprinting by random polynomials.
\newblock 1981.

\bibitem[Tar75]{T75}
R.~E. Tarjan.
\newblock Efficiency of a good but not linear set union algorithms.
\newblock {\em J. ACM}, 22:215--225, 1975.

\bibitem[Tut61]{Tutte}
W.~T. Tutte.
\newblock On the problem of decomposing a graph into {$n$} connected factors.
\newblock {\em J.\ of the London Math.\ Society}, 36:221--230, 1961.

\bibitem[Whi12]{White:2012}
Tom White.
\newblock {\em Hadoop: The Definitive Guide}.
\newblock O'Reilly Media, Inc., 2012.

\bibitem[WM93]{Matula93}
David W.~Matula.
\newblock A linear time 2+epsilon approximation algorithm for edge
  connectivity.
\newblock pages 500--504, 01 1993.

\bibitem[ZCF{\etalchar{+}}10]{ZahariaCFSS10}
Matei Zaharia, Mosharaf Chowdhury, Michael~J. Franklin, Scott Shenker, and Ion
  Stoica.
\newblock Spark: Cluster computing with working sets.
\newblock In {\em 2nd {USENIX} Workshop on Hot Topics in Cloud Computing
  (HotCloud)}, 2010.

\end{thebibliography}

\end{document}